\begin{document}

\title{Dual and Hull code in the first two generic constructions and relationship with the Walsh transform of cryptographic functions}

\author{Virginio Fratianni}
\date{Université Paris 8, Laboratoire de Géométrie, Analyse et Applications, LAGA, Université Sorbonne Paris Nord, CNRS, UMR 7539, France.\\ \emph{Email address}: virginio.fratianni@etud.univ-paris8.fr}

\maketitle            



\section*{Abstract}
We contribute to the knowledge of linear codes from special polynomials and functions, which have been studied intensively in the past few years. Such codes have several applications in secret sharing, authentication codes, association schemes and strongly regular graphs.

This is the first work in which we study the dual codes in the framework of the two generic constructions; in particular, we propose a Gram-Schmidt (complexity of $\mathcal{O}(n^3)$) process to compute them explicitly. The originality of this contribution is in the study of the existence or not of defining sets $D'$, which can be used as ingredients to construct the dual code $\mathcal{C}'$ for a given code $\mathcal{C}$ in the context of the second generic construction. We also determine a necessary condition expressed by employing the Walsh transform for a codeword of  $\mathcal{C}$  to belong in the dual. This achievement was done in general and when the involved functions are weakly regularly bent. We shall give a novel description of the Hull code in the framework of the two generic constructions. Our primary interest is constructing linear codes of fixed Hull dimension and determining the (Hamming)  weight of the codewords in their duals. 
\\\\
\textbf{Keywords} Linear codes, dual code, Hull code, $p$-ary codes, weight distribution, $p$-ary functions, bent functions, weakly regular bent functions, vectorial functions, cyclotomic fields, secret sharing schemes, Walsh transform, character theory.

\section{Introduction}
Coding theory is the study of the properties of codes and their fitness for specific applications, like data compression for cloud storage, wireless data transmission and especially cryptography. It involves various scientific disciplines, for example, mathematics, information theory, computer science, artificial intelligence and linguistics; however, it is based on mathematical methods, especially algebraic ones in the linear code case. For instance, we have many connections with number theory in the context of Gauss sums, cyclotomic field theory, Galois fields and tools from the theory of Fourier transform and the related exponential sums.
\\\\
Nowadays, much importance is attached to coding theory because many infrastructures rely on codes, like wireless communications and cloud storage. To a great extent, the research
on codes depends on mathematics since nearly all the theoretical basis, constructions, and analysis of codes are based on mathematical methods.
\\\\
The focus of this paper is to deepen the relation between coding theory and symmetric cryptography in the context of the codes built by means of cryptographic functions. In particular, the center of attention is the dual code in these two constructions, for which previously there was no literature, and the relation with the fundamental concept of the Walsh transform of a cryptographic function. For instance, in \cite{WeightWalshTransform} Mesnager established a deep connection between the Walsh transform and the weight enumerator in the first generic construction. In this paper, we would like to follow this path in the context of the dual codes. 

\subsection{Some background related to coding theory}
An $[n, k, d]$ linear code $\mathcal{C}$ over $\mathbb{F}_{q}$ is a linear subspace of $\mathbb{F}_{q}^n$ with dimension $k$ and minimum (Hamming) distance $d$. The value $n-k$ is called the codimension of $\mathcal{C}$. Given a linear code $\mathcal{C}$ of length $n$ over $\mathbb{F}_{q}$ (resp. $\mathbb{F}_{q^{2}}$ ), its Euclidean dual code (resp. Hermitian dual code) is denoted by $\mathcal{C}^{\perp}$ (resp. $\left.\mathcal{C}^{\perp_{H}}\right)$. The codes $\mathcal{C}^{\perp}$ and $\mathcal{C}^{\perp_{H}}$ are defined by
$$
\mathcal{C}^{\perp}=\left\{\left(b_{0}, b_{1}, \ldots, b_{n-1}\right) \in \mathbb{F}_{q}^{n}\,:\, \sum_{i=0}^{n-1} b_{i} c_{i}=0 \,\,\text{for every}\,\left(c_{0}, c_{1}, \ldots, c_{n-1}\right) \in \mathcal{C}\right\};
$$
$$
\mathcal{C}^{\perp_H}=\left\{\left(b_{0}, b_{1}, \ldots, b_{n-1}\right) \in \mathbb{F}_{q^2}^{n}\,:\, \sum_{i=0}^{n-1} b_{i} \overline{c_{i}}=0 \,\,\text{for every}\,\left(c_{0}, c_{1}, \ldots, c_{n-1}\right) \in \mathcal{C}\right\}
$$
respectively.
The minimum distance of an $[n, k, d]$ linear code is bounded by the Singleton bound
$$
d \leq n+1-k .
$$
A code meeting the above bound is called Maximum Distance Separable (MDS).

\subsection{General theory of functions over finite fields}
     In this chapter, we shall identify the Galois field $\mathbb{F}_{p^n}$ of order $p^n$ with the vector space  $\mathbb{F}_{p}^n$   by using $\mathbb{F}_{p^n}$ as an $n$-dimensional vector space over $\mathbb{F}_p$ and writing $\mathbb{F}_{p}^n$ instead of $\mathbb{F}_{p^n}$ when the field structure is not really used.
 In the following, $``\cdot"$ denotes the standard inner (dot) product of two vectors, that is, $\lambda \cdot x:=\lambda_1x_1+ \ldots + \lambda_nx_n$, where $\lambda,x\in \mathbb{F}_p^n$.\\\\
 The trace function $\text{Tr}_{p^n/p}:\mathbb{F}_{p^n}\rightarrow\mathbb{F}_p$ is defined as 
 $$\text{Tr}_{p^n/p}(x):=\sum_{i=0}^{n-1}x^{p^i}=x+x^p+x^{p^2}+\dots+x^{p^{n-1}}.$$
 If we identify the vector space $\mathbb{F}_p^n$ with
the finite field $\mathbb{F}_{p^n}$, we use the trace bilinear form
$Tr_{p^n/p}(\lambda x)$  instead of the dot product,  that is, 
$$\lambda \cdot x=Tr_{p^n/p}(\lambda x),$$
where $\lambda,x\in \mathbb{F}_{p^n}$ and ``$Tr_{p^n/p}$" denotes the absolute trace over $ \mathbb{F}_{p^n}$.

\begin{defn}
Let $q=p^r$ where $p$ is a prime. A vectorial function
 $\mathbb{F}_q^n\rightarrow \mathbb{F}_q^m$ (or $\mathbb{F}_{q^n}\rightarrow \mathbb{F}_{q^m}$) is called an \emph{$(n,m)$-q-ary function}. When $q=2$, an $(n,m)$-2-ary function will be simply denoted an \emph{$(n,m)$-function}. A \emph{Boolean function} is an $(n,1)$-function, i.e. a function $\mathbb{F}_2^n\rightarrow \mathbb{F}_2$ (or $\mathbb{F}_{2^n}\rightarrow \mathbb{F}_2$).
\end{defn}

Let $f:\mathbb{F}_{p^m}\rightarrow\mathbb{F}_p$ be a $p$-ary function, then the Walsh transform of $f$ is given by:
$$\hat{\chi}_f(\lambda)=\sum_{x\in\mathbb{F}_{p^m}}\zeta_p^{f(x)-\text{Tr}_{p^m/p}(\lambda x)},$$
for $\lambda\in\mathbb{F}_{p^m}$; where $\zeta=e^{\frac{2\pi i}{p}}$ is a complex primitive $p$-th root of the unity.\\

As we can see in \cite{WeightWalshTransform}, a $p$-ary function $f: \mathbb{F}_{p^m} \longrightarrow \mathbb{F}_p$ is called bent if all its Walsh-Hadamard coefficients satisfy $\left|\hat{\chi}_f(b)\right|^2=p^m$. 
A bent function $f$ is called regular bent if, for every $b \in \mathbb{F}_{p^m}$, 
$$p^{-\frac{m}{2}} \hat{\chi}_f(b)=\zeta_p^{f^*(b)}$$ 
for some $p$-ary function $f^*: \mathbb{F}_{p^m} \rightarrow \mathbb{F}_p$. The bent function $f$ is called weakly regular bent if there exists a complex number $u$ with $|u|=1$ and a $p$-ary function $f^*$ such that 
$$u p^{-\frac{m}{2}} \widehat{\chi}_f(b)=\zeta_p^{f^*(b)}$$ 
for all $b \in \mathbb{F}_{p^m}$. Such function $f^*(x)$ is called the dual of $f(x)$. A weakly regular bent function $f(x)$ satisfies
$$
\hat{\chi}_f(b)=\epsilon \sqrt{p^m} \xi_p^{f^*(b)}
$$
where $\epsilon= \pm 1$ is called the sign of the Walsh transform of $f(x)$ and $p^*$ denotes $\left(\frac{-1}{p}\right) p$. Also, for a weakly regular bent function $f(x)$, we have 
$$\hat{\chi}_{f^*}(x)=\sum_{b \in \mathbb{F}_{p^m}} \xi_p^{f^*(b)-T r_{p^m / p}(b x)}=\epsilon p^m \zeta_p^{f(x)} / \sqrt{p^{*}}^m.$$
Moreover, Walsh-Hadamard transform coefficients of a p-ary bent function $f$ with odd $p$ satisfy
$$
p^{-\frac{m}{2}} \hat{\chi}_f(b)= \begin{cases} \pm \zeta_p^{f^*(b)}, & \text { if } m \text { is even or } m \text { is odd and } p \equiv 1 \quad(\bmod 4), \\ \pm i \zeta_p^{f^*(b)}, & \text { if } m \text { is odd and } p \equiv 3 \quad(\bmod 4).\end{cases}
$$
Hence, if $m$ is odd, then $p\equiv 1$ (mod $4$) and the constant $u$ can only be equal to $\pm 1$ or $\pm i$. 

\section{On the first generic construction}
In this section we will see a brief state of the art of the first generic construction of linear codes.

\subsection{The main construction of codes from functions $\mathcal{C}(f)$ and results on their related parameters}

    The first generic construction is obtained by considering  a code $\mathcal{C}(f)$ over $\mathbb{F}_p$ involving a polynomial $f$ from $\mathbb{F}_q$ to $\mathbb{F}_q$ (where $q=p^m$). Such a code is defined by 
$$\mathcal{C}(f)=\{{\bf c}=(Tr_{q/p}(af(x)+bx))_{x\in\mathbb{F}_q}\mid a\in\mathbb{F}_q, b\in\mathbb{F}_q\}.$$
The resulting code $\mathcal{C}(f)$ from $f$ is a linear code over $\mathbb{F}_p$ of length $q$ and its dimension is  upper bounded by $2m$ which is reached when the nonlinearity of the vectorial function $f$ is larger than $0$, which happens in many cases.

One can also define a code $\mathcal{C}^\star(f)$ over $\mathbb{F}_p$ involving a polynomial $f$ from $\mathbb{F}_q$ to $\mathbb{F}_q$ (where $q=p^m$) which vanishes at $0$ defined by 
$$\mathcal{C}^\star(f)=\{{\bf c}=(Tr_{q/p}(af(x)+bx))_{x\in\mathbb{F}^*_q}\mid a\in\mathbb{F}_q, b\in\mathbb{F}_q\}.$$
The resulting code $\mathcal{C}^\star(f)$ from $f$ is a linear code of length $q-1$ and its dimension is  also upper bounded by $2m$ which is reached in many cases.
\\\\
The first generic construction has a long history in relation with the application of $p$-ary functions in coding theory and cryptography, as it presented in \cite{DingFunctionCodes}. In particular, its importance is supported by Delsarte's theorem in the paper \cite{Delsarte} about modified Reed-Solomon codes and in the binary case it gives a coding theory characterization of APN monomials, almost bent functions and semibent functions (see for instance \cite{First1}, \cite{First2} and \cite{First3}). 
\\\\
The fundamental results on the first generic construction codes have been presented by Mesnager in \cite{WeightWalshTransform}, especially in relation with the weight enumerator of the codes thanks to a formula involving the Walsh transform which brought to many results in the literature about few weights linear codes. 
\\\\
For any $\alpha,\beta\in\mathbb{F}_{q^r}$ define 
$$
\begin{aligned}
f_{\alpha, \beta}: \mathbb{F}_{q^r} & \longrightarrow \mathbb{F}_q \\
x & \longmapsto f_{\alpha, \beta}(x):=T r_{q^r / q}(\alpha \Psi(x)-\beta x)
\end{aligned}
$$
where $\Psi$ is a mapping from $\mathbb{F}_{q^r}$ to $\mathbb{F}_{q^r}$ such that $\Psi(0)=0$. We now define a linear code $\mathcal{C}_{\Psi}$ over $\mathbb{F}_q$ as :
$$
\mathcal{C}_{\boldsymbol{\psi}}:=\left\{\bar{c}_{\alpha, \beta}=\left(f_{\alpha, \beta}\left(\zeta_1\right), f_{\alpha, \beta}\left(\zeta_2\right), \cdots, f_{\alpha, \beta}\left(\zeta_{q^r-1}\right)\right), \alpha, \beta \in \mathbb{F}_{q^r}\right\}
$$
where $\zeta_1, \cdots, \zeta_{q^r-1}$ denote the nonzero elements of $\mathbb{F}_{q^r}$.
\\\\
The following proposition shows that in many cases the codes have maximal dimension. 
\begin{prop}[\cite{WeightWalshTransform}, Proposition 1]
    The linear code $\mathcal{C}_{\Psi}$ has length $q^r-1$. If the mapping $\Psi$ has no linear component then $\mathcal{C}_{\Psi}$ has dimension $2 r$.
\end{prop}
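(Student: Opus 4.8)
The plan is to realize $\mathcal{C}_{\Psi}$ as the image of a single $\mathbb{F}_q$-linear map and to read off the dimension from the rank--nullity theorem. The length is immediate: the codewords are indexed by the evaluations at the $q^r-1$ nonzero elements $\zeta_1,\ldots,\zeta_{q^r-1}$ of $\mathbb{F}_{q^r}$, so the length is $q^r-1$.

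First I would introduce the evaluation map
$$\phi:\mathbb{F}_{q^r}\times\mathbb{F}_{q^r}\longrightarrow\mathbb{F}_q^{\,q^r-1},\qquad (\alpha,\beta)\longmapsto \bar c_{\alpha,\beta}.$$
Since any scalar $a\in\mathbb{F}_q$ can be pulled out of the $\mathbb{F}_q$-valued trace $\mathrm{Tr}_{q^r/q}$, and the trace is additive, a direct check gives $f_{a\alpha+\alpha',\,a\beta+\beta'}=a\,f_{\alpha,\beta}+f_{\alpha',\beta'}$ for all $a\in\mathbb{F}_q$; hence $\phi$ is $\mathbb{F}_q$-linear. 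Each copy of $\mathbb{F}_{q^r}$ is an $r$-dimensional $\mathbb{F}_q$-vector space, so the domain has dimension $2r$ over $\mathbb{F}_q$, and $\mathcal{C}_{\Psi}=\mathrm{Im}\,\phi$. By rank--nullity it then suffices to prove $\ker\phi=\{(0,0)\}$, which yields $\dim_{\mathbb{F}_q}\mathcal{C}_{\Psi}=2r$.

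The heart of the argument is the kernel computation. A pair $(\alpha,\beta)$ lies in $\ker\phi$ exactly when $f_{\alpha,\beta}(\zeta_i)=0$ for every nonzero $\zeta_i$. Because $\Psi(0)=0$ forces $f_{\alpha,\beta}(0)=\mathrm{Tr}_{q^r/q}(0)=0$ automatically, this is equivalent to $\mathrm{Tr}_{q^r/q}(\alpha\Psi(x))=\mathrm{Tr}_{q^r/q}(\beta x)$ for all $x\in\mathbb{F}_{q^r}$. If $\alpha\neq 0$, the left-hand side is a component function of $\Psi$ that coincides with the $\mathbb{F}_q$-linear form $x\mapsto\mathrm{Tr}_{q^r/q}(\beta x)$, which is exactly what it means for $\Psi$ to possess a linear component --- contradicting the hypothesis. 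Hence $\alpha=0$, and the identity collapses to $\mathrm{Tr}_{q^r/q}(\beta x)=0$ for all $x$; non-degeneracy of the trace bilinear form then forces $\beta=0$, so $\ker\phi$ is trivial.

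The only point requiring care --- and the main (if modest) obstacle --- is pinning down the notion of ``no linear component'' so that it matches precisely the condition that excludes a nonzero $\alpha$ from the kernel. I would make explicit that a linear component of $\Psi$ is a nonzero $\alpha$ for which $x\mapsto\mathrm{Tr}_{q^r/q}(\alpha\Psi(x))$ agrees with some linear form $x\mapsto\mathrm{Tr}_{q^r/q}(\beta x)$, and I would note that $\Psi(0)=0$ guarantees such a component vanishes at $0$, so ``linear'' rather than merely ``affine'' is the correct notion and no spurious constant term can arise.
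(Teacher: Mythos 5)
Your proof is correct. Note that the paper states this proposition only as a citation of Proposition 1 in \cite{WeightWalshTransform} and reproduces no proof of its own; your argument --- realizing $\mathcal{C}_{\Psi}$ as the image of the $\mathbb{F}_q$-linear evaluation map $(\alpha,\beta)\mapsto \bar{c}_{\alpha,\beta}$, then showing via rank--nullity that the kernel is trivial precisely because a nonzero $\alpha$ in the kernel would exhibit the component function $x\mapsto \mathrm{Tr}_{q^r/q}(\alpha\Psi(x))$ as the linear form $x\mapsto \mathrm{Tr}_{q^r/q}(\beta x)$, and $\beta\neq 0$ alone is excluded by nondegeneracy of the trace form --- is the standard one and matches the argument in the cited source, including the careful observation that $\Psi(0)=0$ rules out any spurious affine constant.
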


\subsection{Direct link between Hamming weight of codewords of $\mathcal{C}(f)$ and the Walsh transform of $f$}
The following proposition makes explicit the computation of the weight of a codeword in the code. 
\begin{prop}[\cite{WeightWalshTransform}, Proposition 2]
    We keep the notation above. Let $a \in \mathbb{F}_{p^m}$. Let us denote by $\psi$ a mapping from $\mathbb{F}_{p^m}$ to $\mathbb{F}_p$ defined as:
$$
\psi_a(x)=T r_{p^m / p}(a \Psi(x))
$$
For $\bar{c}_{\alpha, \beta} \in \mathcal{C}_{\Psi}$, we have:
$$
w t\left(\bar{c}_{\alpha, \beta}\right)=p^m-\frac{1}{q} \sum_{\omega \in \mathbb{F}_q} \widehat{\chi_{\psi_{\omega \alpha}}}(\omega \beta) .
$$
\end{prop}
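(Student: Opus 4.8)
The plan is to count nonzero coordinates directly and convert the zero-count into an additive character sum. Write $N$ for the number of $x\in\mathbb{F}_{q^r}$ with $f_{\alpha,\beta}(x)=0$, where the ambient field $\mathbb{F}_{q^r}$ is identified with $\mathbb{F}_{p^m}$ (so $q^r=p^m$ and $Tr_{q^r/p}=Tr_{p^m/p}$), consistently with the statement. Since $\Psi(0)=0$ gives $f_{\alpha,\beta}(0)=Tr_{q^r/q}(0)=0$, the deleted coordinate $x=0$ would in any case be a zero, so the Hamming weight of $\bar c_{\alpha,\beta}$ equals the number of $x\in\mathbb{F}_{q^r}$ with $f_{\alpha,\beta}(x)\neq 0$, namely $p^m-N$. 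It therefore suffices to prove $N=\frac1q\sum_{\omega\in\mathbb{F}_q}\widehat{\chi_{\psi_{\omega\alpha}}}(\omega\beta)$.

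First I would express the indicator of $\{f_{\alpha,\beta}(x)=0\}$ through the orthogonality of the additive characters of $(\mathbb{F}_q,+)$: for $y\in\mathbb{F}_q$ one has $\frac1q\sum_{\omega\in\mathbb{F}_q}\zeta_p^{Tr_{q/p}(\omega y)}=1$ if $y=0$ and $0$ otherwise. Substituting $y=f_{\alpha,\beta}(x)$, summing over $x\in\mathbb{F}_{q^r}$, and exchanging the two finite sums yields $N=\frac1q\sum_{\omega\in\mathbb{F}_q}\sum_{x\in\mathbb{F}_{q^r}}\zeta_p^{Tr_{q/p}(\omega f_{\alpha,\beta}(x))}$.

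The core step is to identify each inner sum with a Walsh coefficient. Using $\omega\in\mathbb{F}_q$ together with the transitivity of the trace, $Tr_{q/p}\big(\omega\,Tr_{q^r/q}(z)\big)=Tr_{q/p}\big(Tr_{q^r/q}(\omega z)\big)=Tr_{q^r/p}(\omega z)$, applied to $z=\alpha\Psi(x)-\beta x$, I would rewrite the exponent as $Tr_{q^r/p}(\omega\alpha\Psi(x))-Tr_{q^r/p}(\omega\beta x)=\psi_{\omega\alpha}(x)-Tr_{p^m/p}(\omega\beta x)$, recalling $\psi_a(x)=Tr_{p^m/p}(a\Psi(x))$ and $Tr_{q^r/p}=Tr_{p^m/p}$. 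The inner sum is then exactly $\sum_{x}\zeta_p^{\psi_{\omega\alpha}(x)-Tr_{p^m/p}((\omega\beta)x)}=\widehat{\chi_{\psi_{\omega\alpha}}}(\omega\beta)$ by the definition of the Walsh transform with $\lambda=\omega\beta$. Plugging this back gives $N=\frac1q\sum_{\omega\in\mathbb{F}_q}\widehat{\chi_{\psi_{\omega\alpha}}}(\omega\beta)$, hence $wt(\bar c_{\alpha,\beta})=p^m-\frac1q\sum_{\omega\in\mathbb{F}_q}\widehat{\chi_{\psi_{\omega\alpha}}}(\omega\beta)$.

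Everything is routine once set up; the main point requiring care is the trace bookkeeping, i.e. pulling $\omega\in\mathbb{F}_q$ inside $Tr_{q^r/q}$ and composing the two traces into $Tr_{q^r/p}=Tr_{p^m/p}$, and matching the sign conventions so that the $-\beta x$ term in $f_{\alpha,\beta}$ aligns with the $-Tr_{p^m/p}(\lambda x)$ in the Walsh transform, forcing $\lambda=\omega\beta$. As a sanity check, the $\omega=0$ term contributes $\widehat{\chi_{\psi_0}}(0)=p^m$, reflecting that $f_{\alpha,\beta}$ vanishes at least at $x=0$.
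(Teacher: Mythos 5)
Your proof is correct. Note that the paper itself gives no proof of this proposition --- it is quoted from \cite{WeightWalshTransform} as a known result --- so there is no internal proof to compare against; your argument (orthogonality of the additive characters of $\mathbb{F}_q$ to detect the zeros of $f_{\alpha,\beta}$, including the harmless zero at $x=0$ guaranteed by $\Psi(0)=0$, followed by trace transitivity to recognize each inner sum as $\widehat{\chi_{\psi_{\omega\alpha}}}(\omega\beta)$) is the standard one, and it is essentially the same character-sum computation that the paper does carry out explicitly for the weight formula $wt(\mathbf{c}_x)=\frac{(p-1)}{p}n-\frac{1}{p}\sum_{y\in\mathbb{F}_p^*}\chi_1(yxD)$ in the second generic construction, just with $\mathbb{F}_p$-orthogonality replaced by $\mathbb{F}_q$-orthogonality and the defining set replaced by the graph of $\Psi$.
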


Thanks to the latter formula, in the same paper Mesnager presented a theorem with a new family of linear codes with few weights from weakly regular bent function and also exhibited its complete weight enumerators.
\begin{thm}[\cite{WeightWalshTransform}, Theorem 1]
     Let $\mathcal{C}$ be the previously defined linear code whose codewords are denoted by $\tilde{c}_{\alpha, \beta}$. Assume that the function $\psi_1:=T r_{p^m / p}(\Psi)$ is bent or weakly regular bent if $p=2$ or $p$ odd, respectively. We denote by $\psi_1^{\star}$ its dual function. Then the weight distribution of $\mathcal{C}$ is given as follows. In any characteristic, $w t\left(\tilde{c}_{0,0}\right)=0$ and for $\beta \neq 0, w t\left(\tilde{c}_{0, \beta}\right)=p^m-p^{m-1}$. Moreover,
     \begin{enumerate}
         \item when $p=2$ then the Hamming weight of $\tilde{c}_{1, \beta}\left(\beta \in \mathbb{F}_{2^{\mathrm{m}}}^{\star}\right)$ is given by wt $\left(\tilde{c}_{1, \beta}\right)=$ $2^{m-1}-(-1)^{\psi i}(\beta) 2^{\frac{m}{2}-1}$.
         \item when $p$ is odd then
         \begin{itemize}
             \item if $m$ is odd, the Hamming weight of $\tilde{c}_{\alpha, \beta}$ is given by
$$
\left\{\begin{array}{l}
p^m-p^{m-1} \text { if } \alpha \in \mathbb{F}_p^{\star} \text { and } \psi_1^*(\bar{\alpha} \beta)=0 ; \\
p^m-p^{m-1}-\epsilon\left(\frac{-1}{p}\right)^{\frac{m+1}{2}} p^{\frac{m-1}{2}}\left(\frac{\psi_i(\bar{\alpha} \beta)}{p}\right) \text { if } \alpha \in \mathbb{F}_p^{\star} \text { and } \psi_1^*(\bar{\alpha} \beta) \in \mathbb{F}_{p^m}^* .
\end{array}\right.
$$
    \item if $m$ is even, the Hamming weight of $\tilde{c}_{\alpha, \beta}$ is given by
$$
\left\{\begin{array}{l}
p^m-p^{m-1}-p^{\frac{m}{2}-1} \epsilon(p-1) \text { if } \alpha \in \mathbb{F}_p^{\star} \text { and } \psi_1^*(\bar{\alpha} \beta)=0 \\
p^m-p^{m-1}+p^{\frac{m}{2}-1} \epsilon \text { if } \alpha \in \mathbb{F}_p^{\star} \text { and } \psi_1^*(\bar{\alpha} \beta) \in \mathbb{F}_{p^m}^{\star} .
\end{array}\right.
$$
         \end{itemize}
     \end{enumerate}
\end{thm}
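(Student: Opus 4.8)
The plan is to start from the weight formula of Proposition 2, namely $wt(\tilde{c}_{\alpha,\beta}) = p^m - \frac{1}{p}\sum_{\omega\in\mathbb{F}_p}\widehat{\chi_{\psi_{\omega\alpha}}}(\omega\beta)$ (taking $q=p$), and to evaluate the inner sum explicitly from the (weakly regular) bent hypothesis on $\psi_1$. The first reduction is that, since $\text{Tr}_{p^m/p}$ is $\mathbb{F}_p$-linear and $\alpha,\omega\in\mathbb{F}_p$, one has $\psi_{\omega\alpha}=\text{Tr}_{p^m/p}(\omega\alpha\Psi)=\omega\alpha\,\psi_1$, so everything reduces to the Walsh coefficients of the scalar multiples $c\psi_1$ with $c=\omega\alpha\in\mathbb{F}_p^\star$. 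I would treat $p=2$ and $p$ odd separately, since the former needs only the bent hypothesis while the latter requires the finer weakly regular normalization together with quadratic Gauss sums.

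For the degenerate cases I would dispatch $\alpha=0$ first: then $\psi_{\omega\alpha}\equiv 0$ and $\widehat{\chi_{\psi_0}}(\omega\beta)=\sum_{x\in\mathbb{F}_{p^m}}\zeta_p^{-\text{Tr}_{p^m/p}(\omega\beta x)}$ equals $p^m$ when $\omega\beta=0$ and $0$ otherwise by orthogonality of additive characters; this yields $wt(\tilde{c}_{0,0})=0$ and $wt(\tilde{c}_{0,\beta})=p^m-p^{m-1}$ for $\beta\neq 0$ immediately. For $\alpha\in\mathbb{F}_p^\star$ the key ingredient is a twisting identity: applying the Galois automorphism $\sigma_c\colon\zeta_p\mapsto\zeta_p^{c}$ of $\mathbb{Q}(\zeta_p)$ to the weakly regular bent relation $\widehat{\chi_{\psi_1}}(c^{-1}b)=\epsilon\sqrt{p^*}^{\,m}\zeta_p^{\psi_1^*(c^{-1}b)}$ gives $\sum_{x}\zeta_p^{c\psi_1(x)-\text{Tr}_{p^m/p}(bx)}=\epsilon\left(\frac{c}{p}\right)^{m}\sqrt{p^*}^{\,m}\zeta_p^{c\psi_1^*(c^{-1}b)}$, where I use that $\sigma_c$ fixes $\epsilon$ and sends the quadratic Gauss sum $\sqrt{p^*}$ to $\left(\frac{c}{p}\right)\sqrt{p^*}$.

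Substituting this with $c=\omega\alpha$ and $b=\omega\beta$, so that $c^{-1}b=\bar{\alpha}\beta$ is independent of $\omega$, collapses the sum into the $\omega=0$ term $p^m$ plus $\epsilon\left(\frac{\alpha}{p}\right)^{m}\sqrt{p^*}^{\,m}\sum_{\omega\in\mathbb{F}_p^\star}\left(\frac{\omega}{p}\right)^{m}\zeta_p^{\omega A}$, where $A:=\alpha\,\psi_1^*(\bar{\alpha}\beta)$. I would then split on the parity of $m$: for $m$ even the factor $\left(\frac{\omega}{p}\right)^{m}$ is $1$, so the inner sum is $p-1$ if $A=0$ and $-1$ if $A\neq 0$; for $m$ odd the factor is $\left(\frac{\omega}{p}\right)$, so the inner sum vanishes if $A=0$ and equals $\left(\frac{A}{p}\right)\sqrt{p^*}$ if $A\neq 0$, by the same Gauss-sum evaluation. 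Since $A=0\iff\psi_1^*(\bar{\alpha}\beta)=0$ and $\left(\frac{A}{p}\right)=\left(\frac{\alpha}{p}\right)\left(\frac{\psi_1^*(\bar{\alpha}\beta)}{p}\right)$, the squared Legendre symbol on $\alpha$ disappears, and after dividing by $p$ and rewriting $\sqrt{p^*}^{\,m}$ (resp. $\sqrt{p^*}^{\,m+1}$) as the stated powers of $p$ I recover the four displayed expressions. The $p=2$ case is obtained directly by inserting the bent value $\widehat{\chi_{\psi_1}}(\beta)=(-1)^{\psi_1^*(\beta)}2^{m/2}$ into the two-term sum.

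The step I expect to be most delicate is the sign bookkeeping from $\sqrt{p^*}^{\,m}=\left(\frac{-1}{p}\right)^{m/2}p^{m/2}$ for even $m$ and from the surplus factor $\sqrt{p^*}$ for odd $m$: one must keep track of how the powers of $\left(\frac{-1}{p}\right)$ combine with $\epsilon$ so that the constants match the statement exactly, in particular verifying that when $p\equiv 3\pmod 4$ and $m$ is odd the imaginary unit in $\sqrt{p^*}$ is absorbed by the second Gauss sum to leave a real weight. Making the Galois twisting identity rigorous and fixing the convention for $\epsilon$ and for the branch of $\sqrt{p^*}$ are where the care lies; the character-sum evaluations themselves are routine.
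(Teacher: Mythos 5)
The paper itself contains no proof of this theorem --- it is quoted (with several transcription typos) from Mesnager's cited work --- and your argument is a correct reconstruction of essentially the proof given in that reference: Proposition~2's weight formula, the reduction $\psi_{\omega\alpha}=\omega\alpha\,\psi_1$, orthogonality for $\alpha=0$, and the Galois twist $\sigma_c$ acting on $\epsilon$ and on the Gauss sum $\sqrt{p^*}$, followed by the parity split on $m$. One remark: your ($m$ even) formulas carry the factor $\left(\frac{-1}{p}\right)^{m/2}$, which the statement as transcribed here omits --- this is a defect of the transcription (consistent with its other typos, e.g.\ $\psi_i$ for $\psi_1^*$ and $\mathbb{F}_{p^m}^*$ for $\mathbb{F}_p^*$), not of your sign bookkeeping, so the delicate step you flagged in your last paragraph is in fact handled correctly.
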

\subsection{Characterization of minimal distance of $\mathcal{C}(f)^\perp$ when $f$ is APN/PN/AB} 
In the excellent survey \cite{Survey} it is possible to find all the literature about the first generic construction codes. For example, in 1998 Carlet, Charpin and Zinoviev established the link between the dual code and APN as well as AB functions.
\begin{thm}[\cite{DualFirst}, Theorem 5]
Let $\mathcal{C}_F$ be defined as in the first generic construction, $d^{\perp}$ be the minimal distance of the dual code and $\Omega=\left\{j: A_j \neq 0,1 \leq j \leq p^m-1\right\}$ be the characteristic set of $C_F^{\perp}$, where $\left(1, A_1, \cdots, A_{p^m-1}\right)$ is the weight distribution of $\mathcal{C}_F$. If $p=2$, then
\begin{enumerate}
    \item $\mathcal{C}_F^{\perp}$ is such that $3 \leq d^{\perp} \leq 5$;
    \item $F$ is $A P N$ if and only if $d^{\perp}=5$;
    \item $F$ is $A B$ if and only if the characteristic set of $\mathcal{C}_F^{\perp}$ looks as $\left\{2^{m-1}, 2^{m-1} \pm 2^{(m-1) / 2}\right\}$.
\end{enumerate}
\end{thm}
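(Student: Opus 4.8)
The plan is to work with the combinatorial description of the dual code rather than with its generators. In the binary case $\mathcal{C}_F$ is spanned by the coordinate words $x\mapsto Tr_{2^m/2}(bx)$ and $x\mapsto Tr_{2^m/2}(aF(x))$, so a vector $v=(v_x)_x$ lies in $\mathcal{C}_F^{\perp}$ exactly when $\sum_x v_x\,Tr_{2^m/2}(aF(x)+bx)=0$ for all $a,b\in\mathbb{F}_{2^m}$; by nondegeneracy of the trace form this is equivalent to the two field identities $\sum_x v_x\,x=0$ and $\sum_x v_x\,F(x)=0$. Identifying $v$ with its support $S=\{x:v_x=1\}$, the quantity $d^{\perp}$ is therefore the least size of a nonempty $S$ satisfying $\sum_{x\in S}x=0$ and $\sum_{x\in S}F(x)=0$; equivalently, letting $H$ be the matrix whose columns are the vectors $(x,F(x))\in\mathbb{F}_{2^m}\times\mathbb{F}_{2^m}$, $d^{\perp}$ is the smallest number of columns of $H$ that are $\mathbb{F}_2$-linearly dependent. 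The whole proof then reduces to controlling the small dependencies among these columns.

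For the lower bound $d^{\perp}\ge 3$ I would eliminate weights $1$ and $2$: a single coordinate would force $x=0$, while a pair $\{x_1,x_2\}$ would force $x_1+x_2=0$, i.e.\ $x_1=x_2$, and neither is possible for distinct nonzero coordinates. For the upper bound $d^{\perp}\le 5$ I would use a counting argument: the columns of $H$ live in the $2m$-dimensional space $\mathbb{F}_{2^m}\times\mathbb{F}_{2^m}$, and comparing the number of subsets of coordinates of bounded size with the $2^{2m}$ possible values of the subset sum $S\mapsto\bigl(\sum_{x\in S}x,\sum_{x\in S}F(x)\bigr)$ yields, by pigeonhole, two subsets with equal sum whose symmetric difference is a short dependency; a careful count keeps its size at most five, the few small values of $m$ being checked by hand.

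Part (2) is the core, and I would prove it by matching low-weight words with failures of the APN property. If $\{x_1,x_2,x_3,x_4\}$ has size $4$ and $x_1+x_2+x_3+x_4=0$, then writing $a:=x_1+x_2=x_3+x_4\neq 0$ the second identity $\sum_i F(x_i)=0$ becomes $D_aF(x_1)=D_aF(x_3)$, where $D_aF(x):=F(x)+F(x+a)$; thus a weight-$4$ word is precisely a coincidence of the derivative $D_aF$ on two distinct pairs, that is, a value of $D_aF$ reached at least four times (the number of preimages being always even), which is exactly a violation of the APN condition. A weight-$3$ word $\{x_1,x_2,x_3\}$ with $x_1+x_2+x_3=0$ becomes, after adjoining the coordinate $0$ and using $F(0)=0$, the same four-element coincidence involving $0$. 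Hence $F$ is APN if and only if $\mathcal{C}_F^{\perp}$ contains no word of weight $3$ or $4$, i.e.\ $d^{\perp}\ge 5$, and together with $d^{\perp}\le 5$ this gives $F\text{ APN}\iff d^{\perp}=5$. The step I expect to be the main obstacle is the distinctness bookkeeping: one must ensure the two pairs are genuinely different and that the adjoined $0$ does not collide with some $x_i$, since this is exactly the line separating the forbidden ``at least four solutions'' from the ``exactly two'' tolerated by the APN definition.

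Finally, for part (3) I would appeal to the weight--Walsh dictionary of Proposition 2. For $a\neq 0$ the weight of the codeword indexed by $(a,b)$ equals $2^{m-1}$ minus one half of the corresponding Walsh value of $F$, an affine function of that value, while the words with $a=0$, $b\neq 0$ all have weight $2^{m-1}$. Recall that $F$ is almost bent (AB) precisely when its Walsh spectrum is three-valued and equal to $\{0,\pm 2^{(m+1)/2}\}$, which already forces $m$ odd. Translating through the affine relation, the Walsh values $0$ and $\pm 2^{(m+1)/2}$ correspond respectively to the weights $2^{m-1}$ and $2^{m-1}\mp 2^{(m-1)/2}$, so $F$ is AB if and only if the set $\Omega$ of nonzero weights of $\mathcal{C}_F$ equals $\{2^{m-1},\,2^{m-1}\pm 2^{(m-1)/2}\}$. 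Both implications are immediate from the dictionary once one verifies that each of the three Walsh values is actually attained, so that $\Omega$ has exactly these three elements and no others.
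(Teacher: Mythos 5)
First, a point of reference: the paper does not prove this theorem at all — it is quoted as background from \cite{DualFirst} (Carlet--Charpin--Zinoviev), so your attempt can only be judged against the standard proof in that reference. Much of your proposal matches its ingredients: the reduction of $\mathcal{C}_F^\perp$ to $\mathbb{F}_2$-dependencies among the columns $(x,F(x))$, $x\in\mathbb{F}_{2^m}^*$, the lower bound $d^\perp\ge 3$, the equivalence ``$F$ is APN $\iff$ there are no dual words of weight $3$ or $4$'' (including the adjoin-$0$ bookkeeping you rightly flag), and the weight--Walsh dictionary in part (3) are all correct.

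The genuine gap is the upper bound $d^\perp\le 5$, and with it the implication ``APN $\Rightarrow d^\perp=5$''. Your pigeonhole sketch cannot be repaired: subsets of size at most $2$ number only about $2^{2m-1}$, so no collision is forced among them, while collisions among triples (of which there are about $2^{3m}/6>2^{2m}$) involve two sets of equal size, whose symmetric difference has \emph{even} cardinality — this yields a dependency of size $4$ or $6$, i.e.\ only $d^\perp\le 6$. To reach $5$ you would need to force a pair--triple collision, and counting alone cannot do this; the constraints imposed by assuming $d^\perp\ge 6$ remain numerically consistent. Worse, the inequality is simply false without a restriction on $m$: for $m=3$ and the APN function $F(x)=x^3$ on $\mathbb{F}_8$, the zero-sum subsets of $\mathbb{F}_8^*$ are supports of $[7,4]$ Hamming codewords (sizes $3,4,7$ only), and $x^3+y^3+(x+y)^3=xy(x+y)\neq 0$ kills size $3$, hence also size $4$ since $\sum_{z\neq 0}z^3=0$; so $\mathcal{C}_F^\perp$ is the $[7,1,7]$ repetition code (the double-error-correcting BCH code of length $7$) and $d^\perp=7$. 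Your ``small $m$ checked by hand'' step would therefore refute, not confirm, the claim; the theorem tacitly carries the restriction on $m$ from \cite{DualFirst}. The true route to part (1) is the dichotomy: if $F$ is not APN, your part (2) argument gives $d^\perp\in\{3,4\}$; if $F$ is APN, one must \emph{prove} that a weight-$5$ word exists (for $m\ge 4$), and this existence statement — the real content of the upper bound, which in \cite{DualFirst} rests on nontrivial results about double-error-correcting/uniformly packed codes of these parameters — is exactly what your proposal leaves untouched inside the phrase ``together with $d^\perp\le 5$''. A smaller untreated point of the same nature sits in part (3): if some component $Tr_{2^m/2}(aF(x))$, $a\neq 0$, is affine, its codewords contribute only weights $0$ and $2^{m-1}$, so the weight \emph{set} can equal $\{2^{m-1},2^{m-1}\pm 2^{(m-1)/2}\}$ for a non-AB function; the converse needs the non-degeneracy assumption that distinct $(a,b)$ give distinct codewords.
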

In 2005 Carlet, Ding and Yuan in \cite{FirstDing} gave the fundamental parameters of these linear codes and their extension $\overline{\mathcal{C}_f}$, where
$$\overline{\mathcal{C}_f}:=\{(\text{Tr}(af(x)+bx+c)_{x\in\mathbb{F}_{p^m}^*}\,:\,a,b,c\in\mathbb{F}_{p^m}\}.$$
In particular we have the following bounds. 
\begin{thm}[\cite{FirstDing}, Theorems 4 and 5]
    If $F(x)$ is $P N$ with $F(0)=0$, then $\mathcal{C}_F\left(\right.$ resp. $\overline{\mathcal{C}}_F$ ) has parameters $\left[p^m-1,2 m / h, d ; p^h\right]$ (resp. $\left.\left[p^m-1,1+2 m / h, d ; p^h\right]\right)$ with
$$
d \geq \frac{p^h-1}{p^h}\left(p^m-p^{m / 2}\right) .
$$
Furthermore, for every nonzero weight $\omega$ in both $\mathcal{C}_F$ and $\overline{\mathcal{C}}_F$, we have
$$
\frac{p^h-1}{p^h}\left(p^m-p^{m / 2}\right) \leq w \leq \frac{p^h-1}{p^h}\left(p^m+p^{m / 2}\right)
$$
\end{thm}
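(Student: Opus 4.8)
The plan is to read off every codeword weight from an exponential sum and then to use the one structural fact that distinguishes PN maps: their nonzero trace components are bent. Fix $\bar c_{a,b}\in\mathcal{C}_F$, set $g(x)=\operatorname{Tr}_{p^m/p^h}(aF(x)+bx)$, and write its weight as $(p^m-1)-|\{x\in\mathbb{F}_{p^m}^{*}:g(x)=0\}|$. Since $F(0)=0$ the coordinate $x=0$ is always a zero, so this equals $p^m-N_0$ with $N_0=|\{x\in\mathbb{F}_{p^m}:g(x)=0\}|$. Detecting $g(x)=0$ with the additive characters of $\mathbb{F}_{p^h}$ and collapsing the double trace by transitivity, $\operatorname{Tr}_{p^h/p}\!\big(y\operatorname{Tr}_{p^m/p^h}(z)\big)=\operatorname{Tr}_{p^m/p}(yz)$, I would arrive at
\begin{equation*}
\operatorname{wt}(\bar c_{a,b})=\frac{p^h-1}{p^h}\,p^m-\frac{1}{p^h}\sum_{y\in\mathbb{F}_{p^h}^{*}}\widehat{\chi_{\psi_{ya}}}(-yb),\qquad \psi_{c}:=\operatorname{Tr}_{p^m/p}(cF),
\end{equation*}
so the whole problem is reduced to controlling the Walsh coefficients of the functions $\psi_{ya}$.

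The crucial step is to show $|\widehat{\chi_{\psi_c}}(u)|=p^{m/2}$ for every $c\neq0$ and every $u$. I would prove that $\psi_c$ is bent by the usual derivative identity: expanding
\begin{equation*}
\big|\widehat{\chi_{\psi_c}}(u)\big|^{2}=\sum_{t\in\mathbb{F}_{p^m}}\zeta_p^{-\operatorname{Tr}_{p^m/p}(ut)}\sum_{x\in\mathbb{F}_{p^m}}\zeta_p^{\operatorname{Tr}_{p^m/p}\!\left(c\,(F(x+t)-F(x))\right)},
\end{equation*}
and observing that the PN hypothesis makes $x\mapsto F(x+t)-F(x)$ a bijection for each $t\neq0$, so the inner sum vanishes for $t\neq0$ and equals $p^m$ for $t=0$. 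Hence $\big|\widehat{\chi_{\psi_c}}(u)\big|^{2}=p^m$, i.e. $\psi_c$ is a $p$-ary bent function and $\big|\widehat{\chi_{\psi_c}}(u)\big|=p^{m/2}$; for $a\neq0$ and $y\neq0$ this applies to $\psi_{ya}$ because $ya\neq0$.

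With the modulus pinned down the weight bounds are immediate. For $a\neq0$ the correction term in the first display is a sum of $p^h-1$ complex numbers of modulus $p^{m/2}$, so by the triangle inequality
\begin{equation*}
\Big|\operatorname{wt}(\bar c_{a,b})-\tfrac{p^h-1}{p^h}p^m\Big|\le\tfrac{p^h-1}{p^h}\,p^{m/2},
\end{equation*}
which is exactly $\frac{p^h-1}{p^h}(p^m-p^{m/2})\le w\le\frac{p^h-1}{p^h}(p^m+p^{m/2})$. The codewords with $a=0$, $b\neq0$ I would treat directly: $x\mapsto\operatorname{Tr}_{p^m/p^h}(bx)$ is balanced onto $\mathbb{F}_{p^h}$, giving weight $\frac{p^h-1}{p^h}p^m$, again inside the interval; since every nonzero codeword thus has weight at least the lower endpoint, the minimum distance satisfies $d\ge\frac{p^h-1}{p^h}(p^m-p^{m/2})$. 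For the dimension I would use that a bent function is neither affine nor constant: no nonzero $\psi_c$ has a linear part, so $(a,b)\mapsto\bar c_{a,b}$ is injective over $\mathbb{F}_{p^h}$ and $\dim_{\mathbb{F}_{p^h}}\mathcal{C}_F=2m/h$.

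For $\overline{\mathcal{C}}_F$ the same engine runs: introducing the constant $c$ only multiplies the inner sum by the unit $\zeta_p^{\operatorname{Tr}_{p^m/p}(yc)}$, leaving its modulus untouched, so the bent estimate and hence the two-sided weight bound carry over verbatim for $a\neq0$, while the adjoined constant vectors $\operatorname{Tr}_{p^m/p^h}(c)\,\mathbf{1}$ are not already in $\mathcal{C}_F$ (a bent function is non-constant) and add precisely one dimension, giving $\dim_{\mathbb{F}_{p^h}}\overline{\mathcal{C}}_F=1+2m/h$. I expect the only real friction to be the reduction to $\mathbb{F}_{p^h}$ through trace transitivity together with the finite bookkeeping for the purely affine codewords — confirming that they respect the stated interval is casework rather than a conceptual hurdle. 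The conceptual core of the theorem is the single implication \emph{PN $\Rightarrow$ bent components}; once $|\widehat{\chi_{\psi_c}}|=p^{m/2}$ is in hand, everything else is character-sum accounting.
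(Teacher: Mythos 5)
Your proposal cannot be compared with a proof in the paper, because the paper gives none: this theorem is quoted as background from \cite{FirstDing}, so the only benchmark is the classical Carlet--Ding--Yuan argument. For the code $\mathcal{C}_F$ your route is exactly that argument, and it is sound: the character-sum expression for the weight, the expansion of $\bigl|\widehat{\chi_{\psi_c}}(u)\bigr|^2$ into derivative sums $\sum_{x}\zeta_p^{\operatorname{Tr}_{p^m/p}(c(F(x+t)-F(x)))}$, which vanish for $t\neq 0$ because the PN property makes each derivative a bijection, hence $\bigl|\widehat{\chi_{\psi_c}}(u)\bigr|=p^{m/2}$ for $c\neq 0$; then the triangle inequality over the $p^h-1$ nontrivial characters, the balanced case $a=0$, $b\neq 0$, and injectivity of $(a,b)\mapsto \bar c_{a,b}$ (a bent function cannot be affine) give the parameters $\left[p^m-1,\,2m/h\right]$ over $\mathbb{F}_{p^h}$ and the two-sided weight bound for $\mathcal{C}_F$.

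The gap is in the extended code, precisely in the step you set aside as ``casework rather than a conceptual hurdle.'' Two things go wrong there. First, when $\operatorname{Tr}_{p^m/p^h}(c)\neq 0$ the coordinate $x=0$ is no longer an automatic zero of $g$, so the weight is $(p^m-1)-N_0$ rather than $p^m-N_0$; your estimate then only yields $w\ge \frac{p^h-1}{p^h}\left(p^m-p^{m/2}\right)-1$, and this loss of $1$ is genuine: for $F(x)=x^2$, $p=3$, $m=4$, $h=1$, a Gauss-sum evaluation of the Walsh coefficients produces codewords of $\overline{\mathcal{C}}_F$ with $a\neq 0$ of weight exactly $47$, one below the bound $\frac{2}{3}(81-9)=48$. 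Second, and worse, the codewords with $a=b=0$ and $\operatorname{Tr}_{p^m/p^h}(c)\neq 0$ are constant vectors of weight $p^m-1$, and $p^m-1>\frac{p^h-1}{p^h}\left(p^m+p^{m/2}\right)$ whenever $p^{m-h}+p^{m/2-h}>p^{m/2}+1$, i.e.\ whenever $h<m/2$ (same example: weight $80$ against the claimed upper bound $60$). So the bound cannot ``carry over verbatim'' to $\overline{\mathcal{C}}_F$: as transcribed in this paper the two-sided statement is actually false for the extended code, and the original theorems of \cite{FirstDing} carry the corresponding corrections for it. In short, your proof is complete and correct for $\mathcal{C}_F$, but the portion of the claim concerning $\overline{\mathcal{C}}_F$ cannot be completed as written; an honest treatment must track the $\pm 1$ terms coming from the coordinate $x=0$ and account separately for the constant codewords.
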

To improve the latter bounds, they also studied the linear codes arising from specific function families. 

    \section{On the second generic construction}
    In this section we will see a brief state of the art of the second generic construction of linear codes.
    \subsection{The main construction of codes from defining sets $D$ and results on their related parameters}
     The second generic construction of linear codes from functions is obtained by fixing a set $D=\{d_1, d_2, \cdots, d_n\}$ in $\mathbb{F}_q$ (where $q=p^k$) and by defining a linear code involving $D$ as follows:
 $$\mathcal {C}_D=\{(Tr_{q/p}(x d_1), Tr_{q/p}(x d_2), \cdots, Tr_{q/p}(xd_n))\mid x\in\mathbb{F}_q\}.$$ The set $D$ is usually called the {\em defining-set} of the code $\mathcal {C}_D$. The resulting code $\mathcal {C}_D$ is linear over $\mathbb{F}_p$ of length $n$ with  dimension at most $k$. Indeed, the function $d\in \mathbb{F}_q \to Tr_{q/p}(xd)$ matches once with each linear form over $\mathbb{F}_q$ when $x$ ranges over $\mathbb{F}_q$.
\\\\
The idea of a construction of a linear code from a defining set goes back to Wolfmann \cite{Wolfmann} in 1975, who got inspired by the trace construction of irreducible cyclic codes given by L.D. Baumert and R.J. McEliece in 1972 in \cite{Baumert}. However, the first apparition in the literature of the proper second generic construction goes back to 2007 by Ding and Niederreiter in \cite{DingSecondConstruction}. There it is possible to find a first description of the main characteristic of the obtained codes, in particular if, for $x\in\mathbb{F}_q$, we define
$$\textbf{c}_x:=(\text{Tr}(xd_1),\dots,\text{Tr}(xd_n)),$$
then the Hamming weight $w t\left(\mathbf{c}_x\right)$ of $\mathbf{c}_x$ is $n-N_x(0)$, where
$$
N_x(0)=\#\left\{1 \leq i \leq n \mid \operatorname{Tr}_{q / p}\left(x d_i\right)=0\right\}, x \in \mathbb{F}_q
$$
Note that
$$
\begin{aligned}
p N_x(0)= & \sum_{i=1}^n \sum_{y \in \mathbb{F}_p} \xi_p^{y T r_{q / p}\left(x d_i\right)} \\
& =\sum_{i=1}^n \sum_{y \in \mathbb{F}_p} \chi_1\left(y x d_i\right)=n+\sum_{y \in \mathbb{F}_p^*} \chi_1(y x D),
\end{aligned}
$$
where $\chi_1$ is the canonical additive character of $\mathbb{F}_q, a D$ denotes the set $\{a d \mid d \in D\}$ and $\chi_1(S):=\sum_{x \in S} \chi_1(x)$ for any subset $S$ of $\mathbb{F}_q$. Therefore,
$$
w t\left(\mathbf{c}_x\right)=\frac{(p-1)}{p} n-\frac{1}{p} \sum_{y \in \mathbb{F}_p^*} \chi_1(y x D).
$$
In the same paper we have an explicit description of the dimension of the codes, we report it here with its proof.
\begin{thm}[\cite{DingSecondConstruction}, Theorem 6]\label{thm: dimension second construction}
    Let $S$ be the $\mathbb{F}_p$-linear subspace of $\mathbb{F}_q$ spanned by $D$. Then the dimension of the linear code $\mathcal{C}_D$ is equal to the dimension of $S$.
\end{thm}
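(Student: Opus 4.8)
The plan is to realize $\mathcal{C}_D$ as the image of a single $\mathbb{F}_p$-linear map and then invoke the rank–nullity theorem, reducing the computation of $\dim \mathcal{C}_D$ to that of a kernel which I will identify with the orthogonal complement of $S$ under the trace form. Concretely, I would introduce
$$\varphi : \mathbb{F}_q \longrightarrow \mathbb{F}_p^{\,n}, \qquad \varphi(x) = \bigl(Tr_{q/p}(xd_1), \ldots, Tr_{q/p}(xd_n)\bigr),$$
so that by definition $\mathcal{C}_D = \varphi(\mathbb{F}_q) = \mathrm{Im}(\varphi)$. The first step is to check that $\varphi$ is $\mathbb{F}_p$-linear: this follows from additivity of the trace together with $Tr_{q/p}(\lambda x d_i) = \lambda\, Tr_{q/p}(x d_i)$ for $\lambda \in \mathbb{F}_p$, since elements of $\mathbb{F}_p$ are fixed by the Frobenius and the trace is $\mathbb{F}_p$-linear. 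Hence $\mathcal{C}_D$ is an $\mathbb{F}_p$-subspace of $\mathbb{F}_p^{\,n}$ and, as $\dim_{\mathbb{F}_p}\mathbb{F}_q = k$,
$$\dim_{\mathbb{F}_p}\mathcal{C}_D = k - \dim_{\mathbb{F}_p}\ker \varphi.$$

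Next I would identify the kernel. An element $x$ lies in $\ker\varphi$ exactly when $Tr_{q/p}(x d_i)=0$ for every $i$; by $\mathbb{F}_p$-linearity of the map $d \mapsto Tr_{q/p}(xd)$ this is equivalent to $Tr_{q/p}(xd)=0$ for every $d$ in the span $S = \langle d_1, \ldots, d_n\rangle_{\mathbb{F}_p}$. Thus $\ker\varphi = S^{\perp}$, the orthogonal complement of $S$ with respect to the symmetric bilinear form $\langle x,y\rangle := Tr_{q/p}(xy)$ on $\mathbb{F}_q$ regarded as a $k$-dimensional $\mathbb{F}_p$-vector space.

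The crux of the argument, and the step I expect to be the main obstacle, is the nondegeneracy of this trace form — precisely the remark made above that $x \mapsto \bigl(d \mapsto Tr_{q/p}(xd)\bigr)$ realizes each $\mathbb{F}_p$-linear form on $\mathbb{F}_q$ exactly once. I would justify it directly: if $Tr_{q/p}(xy)=0$ for all $y\in\mathbb{F}_q$ with $x \neq 0$, then since $y \mapsto xy$ is a bijection of $\mathbb{F}_q$ this would force $Tr_{q/p}$ to vanish identically, contradicting the surjectivity of the trace onto $\mathbb{F}_p$; hence the form is nondegenerate. Granting this, the standard dimension formula for orthogonal complements in a nondegenerate bilinear space yields $\dim S^{\perp} = k - \dim S$. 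Substituting into the rank–nullity identity gives
$$\dim \mathcal{C}_D = k - \dim \ker\varphi = k - (k - \dim S) = \dim S,$$
which is the desired conclusion.
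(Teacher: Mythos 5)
Your proof is correct, and its skeleton coincides with the paper's: both realize $\mathcal{C}_D$ as the image of the $\mathbb{F}_p$-linear map $x \mapsto \bigl(Tr_{q/p}(xd_1),\ldots,Tr_{q/p}(xd_n)\bigr)$, apply rank--nullity, and observe that the kernel consists exactly of those $x$ with $Tr_{q/p}(xs)=0$ for all $s\in S$. Where you diverge is in how the dimension of that kernel is computed. The paper does it constructively: it extends a basis $\{b_1,\ldots,b_k\}$ of $S$ to a basis $\{b_1,\ldots,b_m\}$ of $\mathbb{F}_q$ (the paper writes $m$ for $[\mathbb{F}_q:\mathbb{F}_p]$ and $k$ for $\dim S$, whereas you write $k$ for the field degree), passes to the trace-dual basis $\{e_1,\ldots,e_m\}$, and checks by writing out coordinates that the kernel is precisely the span of $e_{k+1},\ldots,e_m$, hence of dimension $m-k$. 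You instead recognize the kernel as the orthogonal complement $S^\perp$ with respect to the bilinear form $\langle x,y\rangle = Tr_{q/p}(xy)$, prove that this form is nondegenerate (surjectivity of the trace plus bijectivity of multiplication by a nonzero element --- essentially the fact the paper alludes to when it says the maps $d\mapsto Tr_{q/p}(xd)$ exhaust all linear forms), and then invoke the general identity $\dim S^\perp = \dim_{\mathbb{F}_p}\mathbb{F}_q - \dim S$ valid for any subspace of a nondegenerate bilinear space. The two finishes are two faces of the same fact, since the existence of the trace-dual basis used by the paper is itself equivalent to nondegeneracy of the trace pairing; the trade-off is modest: the paper's version is fully self-contained and produces an explicit basis of the kernel, while yours is shorter and makes the structural reason (nondegeneracy) visible, at the cost of quoting the orthogonal-complement dimension formula as an external standard fact.
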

\begin{proof}
    Consider the $\mathbb{F}_p$-linear map $\phi: x \in \mathbb{F}_q \mapsto \textbf{c}_x \in$ $\mathbb{F}_p^n$. Then
$$
\operatorname{dim}\left(\mathcal{C}_D\right)=m-\operatorname{dim}(\operatorname{ker}(\phi)).
$$
To determine the dimension of the kernel $\operatorname{ker}(\phi)$, we note that $x \in$ $\operatorname{ker}(\phi)$ if and only if $\operatorname{Tr}(x s)=0$ for all $s \in S$. Put $k=\operatorname{dim}(S)$ and let $\left\{b_1, \ldots, b_k\right\}$ be an ordered basis of $S$ over $\operatorname{GF}(q)$. We extend this basis to an ordered basis $\left\{b_1, \ldots, b_m\right\}$ of $\mathbb{F}_q$ over $\mathbb{F}_p$. Let $\left\{e_1, \ldots, e_m\right\}$ be the dual basis of $\left\{b_1, \ldots, b_m\right\}$. We claim that $\operatorname{ker}(\phi)$ is equal to the $\mathbb{F}_p$-linear subspace $V$ of $\mathbb{F}_q$ spanned by $e_{k+1}, \ldots, e_m$. It is obvious from the definition of a dual basis that $V \subseteq \operatorname{ker}(\phi)$. Conversely, if $x \in \operatorname{ker}(\phi)$, then we write
$$
x=x_1 e_1+\cdots+x_m e_m
$$
with $x_j \in \mathbb{F}_p$ for $1 \leq j \leq m$. Since $x \in \operatorname{ker}(\phi)$, it follows that
$$
x_j=\operatorname{Tr}\left(x b_j\right)=0 \text { for } 1 \leq j \leq k
$$
and so $x \in V$. Thus, we have indeed $\operatorname{ker}(\phi)=V$, hence $\operatorname{dim}(\operatorname{ker}(\phi))=\operatorname{dim}(V)=m-k$, which concludes the proof.
\end{proof}

The previous proof gives the parameters of $\mathcal{C}_D$ using a linear algebra approach. We will propose an alternative proof of the theorem (see \ref{lemma: dimension of codes in the second generic construction}). The originality lies in the link between $\mathcal{C}_D$ and $\mathcal{C}_D^\perp$, so the proof follows naturally from the struture of $\mathcal{C}_D^\perp$.
\\\\
The latter formula on the weight of each codeword leaded to a deep study of the weight enumerators of these codes, especially looking for some few weights codes, which have application in authentication codes and secret schemes and are connected to association schemes and graphs (see for example \cite{FewWeights}).
\\\\
In fact, the first result was obtained by Ding and Niederreiter in \cite{DingSecondConstruction} was in the construction of cyclotomic linear codes of order $3$ with few weights, in particular they found two classes of such codes. The first one is the following. 
\begin{thm}[\cite{DingSecondConstruction}, Theorem 7, Theorem 8]
    Let $q \equiv 2(\bmod 3), m$ even, and $r=q^m>4$. Then the first class of linear cyclotomic codes of order 3 $C_D$ over $\mathbb{F}_q$ has two nonzero weights and the parameters
$$
\left[\frac{r-1}{3(q-1)}, m, d\right]
$$
where the minimum distance is given by
$$
d=\left\{\begin{array}{lll}
\frac{r-\sqrt{r}}{3 q}, & \text { if } m \equiv 0 & (\bmod 4) \\
\frac{r-2 \sqrt{r}}{3 q}, & \text { if } m \equiv 2 & (\bmod 4) .
\end{array}\right.
$$
Furthermore, the nonzero weights and their frequencies are
$$\begin{tabular}{|c|c|}
\hline weight & frequency \\
\hline$\frac{r+(-1)^{(m-2) / 2} \sqrt{r}}{3 q}$ & $\frac{2(r-1)}{3}$ \\
\hline$\frac{r-(-1)^{(m-2) / 2} 2 \sqrt{r}}{3 q}$ & $\frac{r-1}{3}$ \\
\hline
\end{tabular}$$
In addition, the dual code has parameters 
$$\left[\frac{r-1}{3(q-1)},\frac{r-1}{3(q-1)}-m, d^\perp\right],$$
with $d^\perp\ge 3$.
\end{thm}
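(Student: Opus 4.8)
The plan is to combine the weight formula for the second generic construction with an explicit evaluation of cubic Gaussian periods in the semiprimitive case. First I would record the $\mathbb{F}_q$-analogue of the weight formula recalled above: writing $\Psi$ for the canonical additive character of $\mathbb{F}_r$ (where $r=q^m$), $\Psi(yxD):=\sum_{d\in D}\Psi(yxd)$, and using the transitivity $\mathrm{Tr}_{q/p}\!\bigl(y\,\mathrm{Tr}_{r/q}(z)\bigr)=\mathrm{Tr}_{r/p}(yz)$ valid for $y\in\mathbb{F}_q$, one obtains for every $x\in\mathbb{F}_r$
\begin{equation}
wt(\mathbf{c}_x)=\frac{q-1}{q}\,n-\frac{1}{q}\sum_{y\in\mathbb{F}_q^{*}}\Psi(yxD),\qquad n=\frac{r-1}{3(q-1)} .
\end{equation}
Everything is thereby reduced to the inner character sum.

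Second, I would make the defining set explicit. Since $q\equiv 2\pmod 3$ and $m$ is even, one has $3\mid r-1$ and $3\mid\frac{r-1}{q-1}$, so with $g$ a generator of $\mathbb{F}_r^{*}$ the cubic residues $H:=\langle g^{3}\rangle$ form an index-$3$ subgroup containing $\mathbb{F}_q^{*}$, and $D$ is a transversal of $\mathbb{F}_q^{*}$ in $H$, whence $|D|=n$. The key point is that $\mathbb{F}_q^{*}\cdot D=H$ is a disjoint union of cosets, so for $x\neq 0$
\begin{equation}
\sum_{y\in\mathbb{F}_q^{*}}\Psi(yxD)=\sum_{z\in xH}\Psi(z)=\eta_{j(x)},
\end{equation}
where $\eta_i=\sum_{z\in C_i^{(3)}}\Psi(z)$ are the cubic Gaussian periods and $j(x)$ is the index of the cubic class of $x$. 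Hence $wt(\mathbf{c}_x)=\frac{r-1}{3q}-\frac1q\eta_{j(x)}$, and the nonzero weights are governed by the three periods.

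Third — and this is where the real work lies — I would invoke the semiprimitive evaluation. Because $q\equiv 2\pmod 3$ forces $p\equiv 2\pmod 3$ with $s=[\mathbb{F}_q:\mathbb{F}_p]$ odd, we have $p\equiv-1\pmod 3$ (the semiprimitive case for $N=3$), so both cubic Gauss sums take a single value $\tau(\chi)=\tau(\chi^{2})=\varepsilon\sqrt r$. Expanding
\begin{equation}
\eta_{i}=\tfrac{1}{3}\bigl(-1+\bar\chi(g^{i})\tau(\chi)+\bar\chi^{2}(g^{i})\tau(\chi^{2})\bigr)
\end{equation}
and using $\bar\chi(g^{i})+\bar\chi^{2}(g^{i})=2$ for $i\equiv 0$ and $=-1$ otherwise collapses the three periods to two values: one exceptional period $\frac{2\varepsilon\sqrt r-1}{3}$ and two equal periods $\frac{-\varepsilon\sqrt r-1}{3}$. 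Substituting into the weight formula gives exactly the weights $\frac{r-2\varepsilon\sqrt r}{3q}$ (once) and $\frac{r+\varepsilon\sqrt r}{3q}$ (twice), with frequencies $\frac{r-1}{3}$ and $\frac{2(r-1)}{3}$ read off from the cardinalities of the cubic classes. The hard part is pinning down the sign: with $r=p^{2t}$ one has $t=sm/2$, the semiprimitive sign is $\varepsilon=(-1)^{t-1}$ (the parity factor $(p^{j}+1)/N$ is even for odd $p$ and absent for $p=2$, so it contributes nothing), and since $s$ is odd this yields $\varepsilon=(-1)^{m/2-1}=(-1)^{(m-2)/2}$. Matching the two residues of $m$ modulo $4$ then identifies the smaller weight as the stated $d$.

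Finally, the dimension and the dual follow cheaply. Since $r>4$ both weights are strictly positive, so every nonzero $x$ yields a nonzero codeword; hence $x\mapsto\mathbf{c}_x$ is injective and $\dim_{\mathbb{F}_q}\mathcal{C}_D=m$ — this is the promised weight-based shortcut to Theorem \ref{thm: dimension second construction} — giving parameters $[\,n,m,d\,]$ and dual dimension $n-m=\frac{r-1}{3(q-1)}-m$. For $d^{\perp}\geq 3$ I would read the generator matrix $G=(\mathrm{Tr}(b_jd_i))$ of $\mathcal{C}_D$ as the parity-check matrix of $\mathcal{C}_D^{\perp}$: its $i$-th column is the functional $b\mapsto\mathrm{Tr}(b\,d_i)$, so a column vanishes only if $d_i=0$ (impossible, as $D\subseteq H\subseteq\mathbb{F}_r^{*}$) and two columns are $\mathbb{F}_q$-proportional only if $d_i=\lambda d_j$ with $\lambda\in\mathbb{F}_q^{*}$, which cannot occur since the $d_i$ represent distinct cosets of $\mathbb{F}_q^{*}$. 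No single or pair of columns is dependent, so $d^{\perp}\geq 3$.
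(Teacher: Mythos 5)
Your proposal is correct, but note that the paper itself offers no proof of this statement: it is quoted as background, with the proof deferred to the cited source \cite{DingSecondConstruction}. Your argument — the $\mathbb{F}_q$-analogue of the character-sum weight formula, the reduction of $\sum_{y\in\mathbb{F}_q^*}\Psi(yxD)$ to cubic Gaussian periods via $\mathbb{F}_q^*\cdot D=C_0^{(3)}$, the semiprimitive evaluation of the Gauss sums with sign $(-1)^{t-1}=(-1)^{(m-2)/2}$, the injectivity-from-positive-weights argument for the dimension, and the no-proportional-columns argument for $d^\perp\ge 3$ — is essentially the standard Ding--Niederreiter proof, and all the delicate points (parity of $m(s-1)/2$, evenness of $(p+1)/3$, the role of the hypothesis $r>4$) are handled correctly.
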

The second class is the following.
\begin{thm}[\cite{DingSecondConstruction}, Theorem 9, Theorem 10]
    Let $q \equiv 2 \quad(\bmod 3), m$ even, and $r=q^m$. Then the second class of linear cyclotomic codes of order $3$ $\mathcal{C}_D$ over $\mathbb{F}_q$ has two nonzero weights and the parameters
$$
\left[\frac{2(r-1)}{3(q-1)}, m, d\right]
$$
where the minimum distance is given by
$$
d=\left\{\begin{array}{lll}
\frac{2 r-2 \sqrt{r}}{3 q}, & \text { if } m \equiv 0 & (\bmod 4) \\
\frac{2 r-\sqrt{r}}{3 q}, & \text { if } m \equiv 2 & (\bmod 4)
\end{array}\right.
$$
Furthermore, the nonzero weights and their frequencies are
$$\begin{tabular}{|c|c|}
\hline weight & frequency \\
\hline$\frac{2 r+(-1)^{(m-2) / 2} 2 \sqrt{r}}{3 q}$ & $\frac{r-1}{3}$ \\
\hline$\frac{2 r-(-1)^{(m-2) / 2} \sqrt{r}}{3 q}$, & $\frac{2(r-1)}{3}$ \\
\hline
\end{tabular}$$
In addition, the dual code has parameters
$$\left[\frac{2(r-1)}{3(q-1)},\frac{2(r-1)}{3(q-1)}-m,d^\perp\right],$$
with $d^\perp\ge 3$.
\end{thm}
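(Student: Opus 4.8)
The plan is to realise this second family as a cyclotomic code and reduce the weight computation to the evaluation of Gaussian periods of order $3$ in the semi-primitive case. Write $r=q^m$, fix a generator $\gamma$ of $\mathbb{F}_r^*$, and let $D_0=\langle\gamma^3\rangle$ and $D_i=\gamma^i D_0$ ($i=0,1,2$) be the three cyclotomic classes of order $3$, each of size $\frac{r-1}{3}$ (here $3\mid r-1$ since $q\equiv 2\pmod 3$ with $m$ even forces $r\equiv 1\pmod 3$). Because $q\equiv 2\pmod 3$ gives $\gcd(3,q-1)=1$, every element of $\mathbb{F}_q^*$ is a cube, so $\mathbb{F}_q^*\subseteq D_0$ and each $D_i$ is a disjoint union of $\mathbb{F}_q^*$-cosets. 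I would take $D$ to be a transversal of $\mathbb{F}_q^*$ inside $D_0\cup D_1$, so that $|D|=\frac{2(r-1)}{3(q-1)}=n$ and the map $(y,d)\mapsto yd$ is a bijection from $\mathbb{F}_q^*\times D$ onto $D_0\cup D_1$. For the dimension, the analogue of Theorem~\ref{thm: dimension second construction} over $\mathbb{F}_q$ gives $\dim_{\mathbb{F}_q}\mathcal{C}_D=\dim_{\mathbb{F}_q}\langle D\rangle_{\mathbb{F}_q}=\dim_{\mathbb{F}_q}\langle D_0\cup D_1\rangle_{\mathbb{F}_q}$; since $\frac{2(r-1)}{3}$ exceeds the size $\frac{r}{q}-1$ of a punctured $\mathbb{F}_q$-hyperplane, $D_0\cup D_1$ lies in no hyperplane and hence spans $\mathbb{F}_r$, so the dimension is $m$. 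This also makes $x\mapsto\mathbf{c}_x$ a bijection $\mathbb{F}_r\to\mathcal{C}_D$, which is what lets me read off frequencies by counting the values of $x$.

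Next I would feed $D$ into the weight formula. Adapting the identity recalled above to base field $\mathbb{F}_q$ and canonical additive character $\chi_1$ of $\mathbb{F}_r$ gives
$$
wt(\mathbf{c}_x)=\frac{q-1}{q}\,n-\frac{1}{q}\sum_{y\in\mathbb{F}_q^*}\chi_1(yxD).
$$
Since $\{yd:y\in\mathbb{F}_q^*,\,d\in D\}=D_0\cup D_1$ with each element covered exactly once, the inner sum collapses to
$$
\sum_{y\in\mathbb{F}_q^*}\chi_1(yxD)=\sum_{z\in D_0\cup D_1}\chi_1(xz).
$$
For $x\in D_j$ one has $xD_i=D_{i+j}$, so this equals $\eta_j+\eta_{j+1}$, where $\eta_i:=\sum_{z\in D_i}\chi_1(z)$ are the Gaussian periods (indices mod $3$). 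Thus $wt(\mathbf{c}_x)$ depends only on the class $D_j$ containing $x$, with $x=0$ giving the zero codeword; it remains to evaluate the three sums $\eta_j+\eta_{j+1}=-1-\eta_{j+2}$, using $\eta_0+\eta_1+\eta_2=-1$.

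The crux is the evaluation of the $\eta_i$. Because $q\equiv -1\pmod 3$ and $m$ is even we are exactly in the semi-primitive case, in which two of the three periods coincide; the known semi-primitive values (after identifying which class carries the distinct value) read $\eta_0=\frac{2(-1)^{(m-2)/2}\sqrt{r}-1}{3}$ and $\eta_1=\eta_2=\frac{-(-1)^{(m-2)/2}\sqrt{r}-1}{3}$, and one checks $\eta_0+\eta_1+\eta_2=-1$. Substituting into the weight formula with $n=\frac{2(r-1)}{3(q-1)}$ yields the weight $\frac{2r+(-1)^{(m-2)/2}2\sqrt{r}}{3q}$ on exactly one class (frequency $\frac{r-1}{3}$) and the common weight $\frac{2r-(-1)^{(m-2)/2}\sqrt{r}}{3q}$ on the remaining two classes (frequency $\frac{2(r-1)}{3}$), matching the table. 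The minimum distance then follows by comparing the two weights: $(-1)^{(m-2)/2}=-1$ when $m\equiv 0\pmod 4$ and $(-1)^{(m-2)/2}=+1$ when $m\equiv 2\pmod 4$, selecting $d=\frac{2r-2\sqrt{r}}{3q}$ and $d=\frac{2r-\sqrt{r}}{3q}$ respectively. I expect this semi-primitive Gaussian-period evaluation to be the main obstacle, since it is the only step that produces the $\sqrt{r}$ terms and the parity-dependent sign; everything else is bookkeeping.

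Finally, the dual has length $n$ and, since $\dim\mathcal{C}_D=m$, dimension $n-m=\frac{2(r-1)}{3(q-1)}-m$. For $d^\perp\ge 3$ I would argue that no coordinate functional is identically zero (all $d\in D$ are nonzero) and that no two columns of a generator matrix are $\mathbb{F}_q$-proportional: if $\mathrm{Tr}_{r/q}(xd)=c\,\mathrm{Tr}_{r/q}(xd')$ for all $x$ and some $c\in\mathbb{F}_q^*$, then nondegeneracy of the trace form forces $d=cd'$, which is impossible for distinct representatives $d,d'$ of different $\mathbb{F}_q^*$-cosets. Hence every two columns are linearly independent, giving $d^\perp\ge 3$.
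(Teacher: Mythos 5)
The paper never proves this statement: it is quoted verbatim as background (Theorems 9 and 10 of the cited Ding--Niederreiter paper on cyclotomic linear codes of order $3$), so there is no in-paper proof to compare against. Your reconstruction is correct and follows essentially the same route as the cited source: take $D$ to be a transversal of the $\mathbb{F}_q^*$-cosets inside $D_0\cup D_1$ (legitimate since $\gcd(3,q-1)=1$ puts $\mathbb{F}_q^*$ inside $D_0$), reduce each weight to $\frac{2(r-1)}{3q}-\frac{1}{q}(\eta_j+\eta_{j+1})$, and evaluate the Gaussian periods in the semi-primitive case $3\mid q+1$, $m$ even --- the one step you rightly flag as the crux and which the original authors likewise import from the classical literature (Baumert--McEliece); your dimension count via the hyperplane-size comparison, the frequency count via bijectivity of $x\mapsto\mathbf{c}_x$, and the trace-nondegeneracy argument for $d^\perp\ge 3$ are all sound and yield exactly the stated parameters.
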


\subsection{Some (more) results on linear codes from specific defining sets}
Considering the weight enumerators, in \cite{WeightEnumeratorSecond} it is possible to find the complete weight enumerators of this kind of linear codes in three specific cases:
\begin{enumerate}
    \item When $D$ is a skew Hadamard difference set or Paley type partial difference set in $\mathbb{F}_q$;
    \item When $D=\{f(x)\,:\,x\in\mathbb{F}_q\}\setminus \{0\}$ where $f(x)$ is a quadratic form over $\mathbb{F}_q$;
    \item When $D=\{x\in\mathbb{F}_q\,:\,\text{Tr}_s(x^{p^s+1})=0$\} where $m=2s$ is an even integer.
\end{enumerate}
In the first case the weight enumerator is the following.
\begin{thm}[\cite{WeightEnumeratorSecond}, Theorem 3.2]
     Let $\mathcal{C}_D$ be the second generic construction code.
     \begin{itemize}
         \item Suppose that $D$ is a skew Hadamard difference set in $\mathbb{F}_q$. Then $\mathcal{C}_D$ is $a\left[\frac{p^m-1}{2}, m\right]$ linear code and its complete weight enumerator is
$$
\begin{aligned}
& z_0^{\frac{p^m-1}{2}}+\frac{p^m-1}{2} z_0^{\frac{p^{m-1}-1}{2}} \prod_{\substack{c \in \mathbb{F}_p^* \\
\left(\frac{c}{p}\right)=1}} z_c^{\frac{p^{m-1}+p^{\frac{m-1}{2}}}{2}} \prod_{\substack{c \in \mathbb{F}_p^* \\
\left(\frac{c}{p}\right)=-1}} z_c^{\frac{p^{m-1}-p^{\frac{m-1}{2}}}{2}} \\
& +\frac{p^m-1}{2} z_0^{\frac{p^{m-1}-1}{2}} \prod_{\substack{c \in \mathbb{F}_p^* \\
\left(\frac{c}{p}\right)=1}} z_c^{\frac{p^{m-1}-p^{\frac{m-1}{2}}}{2}} \prod_{\substack{c \in \mathbb{F}_p^* \\
\left(\frac{c}{p}\right)=-1}} z_c^{\frac{p^{m-1}+p^{\frac{m-1}{2}}}{2}} . \\
&
\end{aligned}
$$
         \item Suppose that $D$ is a Paley type partial difference set in $\mathbb{F}_q$. Then $\mathcal{C}_D$ is also $a\left[\frac{p^m-1}{2}, m\right]$ linear code. If $m$ is odd, then the complete weight enumerator of $\mathcal{C}_D$ is given by the latter. If $m$ is even, then the complete weight enumerator of $\mathcal{C}_D$ is
$$
\begin{gathered}
z_0^{\frac{p^m-1}{2}}+\frac{p^m-1}{2} z_0^{\frac{\left(p^{\frac{m}{2}}-1\right)\left(p^{\frac{m}{2}-1}+1\right)}{2}}\left(z_1 z_2 \ldots z_{p-1}\right)^{\frac{p^{m-1}-p^{\frac{m}{2}-1}}{2}} \\
+\frac{p^m-1}{2} z_0^{\frac{\left(p^{\frac{m}{2}}+1\right)\left(p^{\frac{m}{2}-1}-1\right)}{2}}\left(z_1 z_2 \ldots z_{p-1}\right)^{\frac{p^{m-1}+p^{\frac{m}{2}-1}}{2}} .
\end{gathered}
$$
     \end{itemize}
\end{thm}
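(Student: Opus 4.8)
The plan is to evaluate the complete weight enumerator straight from its definition, so that everything reduces to finding, for each $x\in\mathbb{F}_q$, the coordinate-value frequencies $N_c(x)=\#\{d\in D:\mathrm{Tr}_{q/p}(xd)=c\}$. The term $x=0$ is trivial and contributes $z_0^{(p^m-1)/2}$, since then all traces vanish and $N_0(0)=|D|=\frac{p^m-1}{2}$. For $x\neq 0$ I would detect the event $\mathrm{Tr}_{q/p}(xd)=c$ by additive-character orthogonality over $\mathbb{F}_p$, exactly as in the weight computation of $\mathbf{c}_x$ recalled above, to obtain
$$N_c(x)=\frac1p\Big(|D|+\sum_{y\in\mathbb{F}_p^*}\zeta_p^{-yc}\,\chi_1(yxD)\Big).$$
This concentrates all the combinatorics of $D$ in the character sums $\chi_1(yxD)$, which are the only data that need to be understood.

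The next step is to substitute the character values that characterise the two families. A skew Hadamard difference set satisfies $\chi_1(aD)=\tfrac12(-1+\epsilon_a\sqrt{-q})$ for every $a\neq0$, and a Paley type partial difference set (being fixed by $-1$) satisfies the real relation $\chi_1(aD)=\tfrac12(-1+\epsilon_a\sqrt{q})$, where in both cases $\epsilon_a=\pm1$ is governed by the quadratic character $\eta$ of $\mathbb{F}_q$ at $a$. Two facts then drive the calculation. First, equidistribution of the signs: since $0\notin D$, Fourier inversion gives $\sum_{a\neq0}\chi_1(aD)=-|D|$, which forces $\sum_{a\neq0}\epsilon_a=0$, so each sign occurs for exactly $\frac{p^m-1}{2}$ indices; this supplies the frequencies $\frac{p^m-1}{2}$. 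Second, the restriction of $\eta$ to $\mathbb{F}_p^*$: from $\eta(y)=y^{(p^m-1)/2}=\big(\tfrac yp\big)^{1+p+\cdots+p^{m-1}}$ together with $1+p+\cdots+p^{m-1}\equiv m\pmod2$, the character $\eta|_{\mathbb{F}_p^*}$ is the Legendre symbol when $m$ is odd and is identically $1$ when $m$ is even. This dichotomy is exactly what separates the two displayed statements.

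When $m$ is odd (the skew Hadamard hypothesis forces $q\equiv3\pmod4$ and hence $m$ odd, and here one also treats the Paley case with $m$ odd), the sign factorises as $\epsilon_{yx}=\big(\tfrac yp\big)\epsilon_x$, so the $c$-dependence of $N_c(x)$ is carried entirely by $\sum_{y\in\mathbb{F}_p^*}\big(\tfrac yp\big)\zeta_p^{-yc}=\big(\tfrac{-c}{p}\big)\sqrt{p^*}$, a quadratic Gauss sum over $\mathbb{F}_p$. Multiplying this $\sqrt{p^*}$ by the factor $\sqrt{\pm q}=p^{(m-1)/2}\sqrt{\pm p}$ coming from $\chi_1(yxD)$ cancels all $\sqrt p$ and $i$ contributions and leaves a rational integer divisible by $p^{(m-1)/2}$, yielding $N_0(x)=\frac{p^{m-1}-1}{2}$ and, for $c\neq0$, $N_c(x)=\frac{p^{m-1}}{2}+\frac{p^{(m-1)/2}}{2}\,\epsilon_x\big(\tfrac cp\big)$. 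Summing over the two equinumerous classes $\epsilon_x=\pm1$ then reproduces the two trinomials of the common formula, the split by $\big(\tfrac cp\big)$ being precisely the $z_c$-exponents recorded there.

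When $m$ is even and $D$ is of Paley type, $\eta|_{\mathbb{F}_p^*}\equiv1$, so $\chi_1(yxD)$ no longer depends on $y$ and the sums over $y$ become elementary geometric sums. Consequently every nonzero coordinate receives the same exponent, which is what produces the symmetric factor $(z_1z_2\cdots z_{p-1})^{\bullet}$ and the two $z_0$-exponents of the second display, again with frequency $\frac{p^m-1}{2}$ each. I expect the genuine difficulty to be the Gauss-sum bookkeeping rather than anything conceptual: one must fix the normalisation $\sqrt{p^*}$ together with the sign and argument of $\sqrt{-q}$ so that the irrational and imaginary parts cancel to the stated integers, and one must justify the multiplicative compatibility $\epsilon_{yx}=\eta(y)\epsilon_x$ that lets the inner sum over $y$ collapse to a single Gauss sum, since it is this factorisation (valid because $\mathbb{F}_p^*$ acts on $D$ through the quadratic character) that underlies the clean $\big(\tfrac cp\big)$-split in the odd case.
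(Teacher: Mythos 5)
A preliminary remark: the paper states this theorem only as a survey item cited from \cite{WeightEnumeratorSecond} and contains no proof of it, so your proposal can only be measured against the standard argument of that reference --- which, in outline, it reproduces faithfully: reduction to the counts $N_c(x)$ by orthogonality, insertion of the character values $\frac{1}{2}\left(-1+\epsilon_a\sqrt{-q}\right)$ resp.\ $\frac{1}{2}\left(-1+\epsilon_a\sqrt{q}\right)$, sign equidistribution from $\sum_{a\neq 0}\chi_1(aD)=-|D|$ to obtain the frequencies $\frac{p^m-1}{2}$, and evaluation of the inner sums over $y$ via quadratic Gauss sums. I checked the values of $N_0(x)$ and $N_c(x)$ your outline produces in all three regimes (skew Hadamard; Paley type with $m$ odd; Paley type with $m$ even); they match the stated exponents exactly.

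The one genuine flaw is the justification you give for the step you yourself single out as the crux: the transformation law $\epsilon_{yx}=\big(\tfrac{y}{p}\big)\epsilon_x$ for $m$ odd, and the $y$-independence of $\chi_1(yxD)$ for $m$ even. You assert this is ``valid because $\mathbb{F}_p^*$ acts on $D$ through the quadratic character.'' The hypotheses give you no such action: a skew Hadamard difference set and a Paley type partial difference set are defined by purely additive (difference) conditions, and need not be unions of multiplicative cosets, so for a general such $D$ the set $yD$ with $y\in\mathbb{F}_p^*$ bears no relation to $D$ or $-D$; an argument routed through an action on the set $D$ itself therefore does not prove the theorem as stated for the abstract classes. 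The correct and unconditional mechanism is Galois-theoretic and never looks at $D$: since $\mathrm{Tr}_{q/p}(yxd)=y\,\mathrm{Tr}_{q/p}(xd)$, one has $\chi_1(yxD)=\sigma_y\big(\chi_1(xD)\big)$, where $\sigma_y\in\mathrm{Gal}(\mathbb{Q}(\zeta_p)/\mathbb{Q})$ is defined by $\zeta_p\mapsto\zeta_p^{y}$. Applying $\sigma_y$ to $\chi_1(xD)=\frac{1}{2}\big(-1+\epsilon_x\sqrt{\pm q}\big)$ and writing, for $m$ odd, $\sqrt{\pm q}=p^{(m-1)/2}\sqrt{p^*}$ with $\sqrt{p^*}=\sum_{t\in\mathbb{F}_p^*}\big(\tfrac{t}{p}\big)\zeta_p^{t}$ the quadratic Gauss sum, the identity $\sigma_y\big(\sqrt{p^*}\big)=\big(\tfrac{y}{p}\big)\sqrt{p^*}$ yields precisely $\epsilon_{yx}=\big(\tfrac{y}{p}\big)\epsilon_x$; for $m$ even, $\sqrt{q}=p^{m/2}$ is rational, hence fixed by $\sigma_y$, which is precisely the claimed $y$-independence. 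With that replacement your outline becomes a complete proof; without it, the factorization on which the entire $c$-dependence of the weight enumerator rests is unsupported.
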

In the second case the weight enumerator is the following.
\begin{thm}[\cite{WeightEnumeratorSecond}, Theorem 4.2]
    Let $\mathcal{C}_D$ be the second generic construction linear code and $D=\left\{f(x): x \in \mathbb{F}_q\right\} \backslash\{0\}$, where $f(x)$ is a quadratic form over $\mathbb{F}_q$ of rank $r$ such that $f$ is $e$-to-$1$, $f(0)=0$ and $f(x)\not =0$ for all $x\in\mathbb{F}_q^*$. Then 
    \begin{itemize}
        \item If $r$ is odd, then the complete weight enumerator of $\mathcal{C}_D$ is
$$
\begin{aligned}
& z_0^{\frac{p^m-1}{e}}+\frac{p^m-1}{2} z_0^{\frac{p^{m-1}-1}{e}} \prod_{\substack{c \in \mathbb{F}_p^* \\
\left(\frac{c}{p}\right)=1}} z_c^{\frac{p^{m-1}+p^{m-\frac{r+1}{2}}}{e}} \prod_{\substack{c \in \mathbb{F}_p^* \\
\left(\frac{c}{p}\right)=-1}} z_c^{p^{m-1}-p^{m-\frac{r+1}{2}}} \\
& +\frac{p^m-1}{2} z_0^{p^{m-1}-1} \prod_{\substack{c \in \mathbb{F}_p^* \\
\left(\frac{c}{p}\right)=1}} z_c^{p^{m-1}-p^{m-\frac{r+1}{2}}} \prod_{\substack{c \in \mathbb{F}_p^* \\
\left(\frac{c}{p}\right)=-1}} z_c^{p^{m-1}+p^{m-\frac{r+1}{2}}} . \\
&
\end{aligned}
$$
    \item If $r$ is even, then the complete weight enumerator of $\mathcal{C}_D$ is
$$
\begin{aligned}
& z_0^{\frac{p^m-1}{e}}+\frac{p^m-1}{2} z_0^{p^{m-1}+p^{m-\frac{r}{2}}-p^{m-1-\frac{r}{2}-1}} e^e\left(z_1 z_2 \ldots z_{p-1}\right)^{\frac{p^{m-1}-p^{m-1-\frac{r}{2}}}{e}} \\
& \quad+\frac{p^m-1}{2} z_0^{\frac{p^{m-1}-p^{m-\frac{r}{2}}+p^{m-1-\frac{r}{2}-1}}{e}}\left(z_1 z_2 \ldots z_{p-1}\right)^{\frac{p^{m-1}+p^{m-1-\frac{r}{2}}}{e}}.
\end{aligned}
$$
    \end{itemize}
\end{thm}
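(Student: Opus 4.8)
The plan is to read off, for every message $x\in\mathbb{F}_q$ (with $q=p^m$), the exact symbol-distribution of the codeword $\mathbf{c}_x$, i.e.\ the integers $n_c(x):=\#\{d\in D:\mathrm{Tr}_{q/p}(xd)=c\}$ for $c\in\mathbb{F}_p$, and then to sum $\prod_{c\in\mathbb{F}_p}z_c^{\,n_c(x)}$ over all $x$. The message $x=0$ is trivial: every coordinate equals $0$, so it contributes $z_0^{(p^m-1)/e}$, the leading term of both formulae; the whole content of the theorem is the behaviour for $x\neq 0$.

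First I would trade $D$ for the fibres of $f$. As $f$ is $e$-to-$1$ on $\mathbb{F}_q^{*}$, vanishes only at $0$, and $|D|=(p^m-1)/e$, each value of $D$ is attained exactly $e$ times by $f$. Writing $N_c(x):=\#\{y\in\mathbb{F}_q:\mathrm{Tr}_{q/p}(xf(y))=c\}$, this yields $e\,n_c(x)=N_c(x)$ for $c\neq 0$ and $e\,n_0(x)=N_0(x)-1$, the $-1$ removing the point $y=0$ (which lands in $c=0$ but not in $D$). Expanding the condition $\mathrm{Tr}_{q/p}(xf(y))=c$ by additive-character orthogonality over $\mathbb{F}_p$ gives
\begin{equation*}
N_c(x)=\frac1p\Big(p^m+\sum_{t\in\mathbb{F}_p^{*}}\zeta_p^{-tc}\,S(tx)\Big),\qquad S(a):=\sum_{y\in\mathbb{F}_q}\zeta_p^{\mathrm{Tr}_{q/p}(af(y))},
\end{equation*}
so everything reduces to the Weil sums $S(a)$ attached to the quadratic form $f$.

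The technical heart is the evaluation of $S(a)$ by the classical theory of Gauss sums of quadratic forms over finite fields. Because $t\in\mathbb{F}_p$ and the trace is $\mathbb{F}_p$-linear, one has $\mathrm{Tr}_{q/p}(txf(y))=t\,\mathrm{Tr}_{q/p}(xf(y))$, so the $\mathbb{F}_p$-quadratic form $Q_{tx}:=\mathrm{Tr}_{q/p}(txf)$ equals $t\,Q_x$; diagonalising $Q_x\sim\sum_{i=1}^{r}b_iy_i^2$ (rank $r$, with $m-r$ variables absent) gives $S(tx)=p^{\,m-r}\,\eta_p(t^r\Delta_x)\,G_p^{\,r}$, where $\eta_p$ is the quadratic character of $\mathbb{F}_p^{*}$, $\Delta_x=\prod_i b_i$, and $G_p=\sum_{s\in\mathbb{F}_p}\zeta_p^{s^2}$ satisfies $G_p^2=p^{*}$. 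The even/odd dichotomy is now forced by $\eta_p(t^r)=\eta_p(t)^r$: when $r$ is even the scalar $t$ disappears and $G_p^{\,r}$ is real, so $\sum_{t}\zeta_p^{-tc}S(tx)$ is the same for every $c\neq 0$ — whence the grouping $(z_1z_2\cdots z_{p-1})$; when $r$ is odd the surviving factor $\eta_p(t)$ turns $\sum_{t}\eta_p(t)\zeta_p^{-tc}$ into a quadratic Gauss sum over $\mathbb{F}_p$, which injects precisely the Legendre symbol $\left(\frac{c}{p}\right)$ and thereby splits the nonzero symbols into residues and non-residues with the exponents $p^{m-1}\pm p^{\,m-(r+1)/2}$ (over $e$).

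Finally I would handle the frequencies. Collecting the above, $N_c(x)$ depends on $x$ only through the sign $\eta_p(\Delta_x)\in\{\pm1\}$; since $\Delta_{tx}=t^r\Delta_x$, this sign takes each value on exactly $\tfrac{p^m-1}{2}$ of the nonzero $x$. The two classes produce the two middle terms, each with multiplicity $\tfrac{p^m-1}{2}$ — interchanging the residue/non-residue exponents when $r$ is odd, and flipping the common deviation sign across all nonzero symbols (with the matching change at $c=0$) when $r$ is even — and together with the $z_0^{(p^m-1)/e}$ from $x=0$ they reassemble the stated enumerators. The main obstacle is pinning down $S(a)$ exactly: establishing that $\mathrm{Tr}_{q/p}(af)$ has rank $r$ for \emph{every} $a\neq 0$ and controlling its discriminant $\Delta_a$ as $a$ varies, so that both the power of $p$ and the character $\eta_p(\Delta_a)$ come out correct; once $S(a)$ is known, the remaining residue/non-residue bookkeeping and the counting of the two discriminant classes are routine.
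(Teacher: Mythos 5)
First, a structural point: the paper you are being compared against does \emph{not} prove this statement at all. The theorem is quoted (with a number of typographical corruptions) from the cited reference \cite{WeightEnumeratorSecond} as part of a survey of known results on the second generic construction, so there is no internal proof to measure your attempt against. On its own terms, your outline follows the standard route used in that literature: express the complete weight enumerator through the symbol counts $N_c(x)$, reduce the defining set $D$ to the fibres of $f$ via the $e$-to-$1$ hypothesis (with the $-1$ correction at $c=0$ for the point $y=0$), expand by additive-character orthogonality, and evaluate the resulting Weil sums $S(a)$ by quadratic-form Gauss-sum theory. Your odd/even dichotomy via $\eta_p(t)^r$, producing the Legendre-symbol split for odd $r$ and the uniform factor $(z_1z_2\cdots z_{p-1})$ for even $r$, reproduces exactly the structure of the stated enumerators, so the approach is sound.

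There is, however, a concrete gap in your frequency count. You assert that $\eta_p(\Delta_x)$ takes each sign on exactly $\tfrac{p^m-1}{2}$ of the nonzero $x$ ``since $\Delta_{tx}=t^r\Delta_x$.'' That justification works only for odd $r$: there $\eta_p(\Delta_{tx})=\eta_p(t)\,\eta_p(\Delta_x)$, so the sign alternates evenly along each punctured line $\mathbb{F}_p^*x$. For even $r$ the same identity makes the sign \emph{constant} on every line, so the argument yields nothing, yet the even-$r$ case of the theorem still requires two classes of size $\tfrac{p^m-1}{2}$. The standard repair uses precisely the hypothesis $f^{-1}(0)=\{0\}$: summing $S(x)=\sum_{y}\zeta_p^{\mathrm{Tr}_{q/p}(xf(y))}$ over all $x\in\mathbb{F}_q$ gives $q$ (only $y=0$ survives orthogonality), and since $S(0)=q$ one gets $\sum_{x\neq 0}S(x)=0$; substituting $S(x)=p^{m-r}\eta_p(\Delta_x)G_p^{\,r}$ then forces $\sum_{x\neq 0}\eta_p(\Delta_x)=0$, i.e.\ the even split, for either parity of $r$. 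Relatedly, the issue you defer as the ``main obstacle'' --- that $\mathrm{Tr}_{q/p}(af)$ has rank exactly $r$ for \emph{every} $a\neq 0$ --- cannot in fact be deferred: your evaluation of $S(tx)$, the uniformity of the exponents in the final formulas, and the identity $\sum_{x\neq 0}\eta_p(\Delta_x)=0$ all presuppose it. Since the statement as quoted here is ambiguous about what ``quadratic form over $\mathbb{F}_q$ of rank $r$'' guarantees, a complete write-up must either make constant rank of all the forms $\mathrm{Tr}_{q/p}(af)$, $a\neq0$, an explicit hypothesis or derive it from the intended definition of $f$; without one of these, the proof is incomplete.
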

In the third case the weight enumerator is the following.
\begin{thm}[\cite{WeightEnumeratorSecond}, Theorem 5.1]
    Let $\mathcal{C}_D$ be the second generic construction code and $D=\left\{z \in \mathbb{F}_{p^m}^*: \operatorname{Tr}_s\left(z^{p^s+1}\right)=\right.$ $0\}$, where $m$ is even and $m=2 s$ for an integer $s>1$. Then $\mathcal{C}_D$ is $a\left[\left(p^s+1\right)\left(p^{s-1}-1\right), m\right]$ linear code and its complete weight enumerator is
$$
\begin{aligned}
& z_0^{\left(p^s+1\right)\left(p^{s-1}-1\right)}+\left(p^s+1\right)\left(p^{s-1}-1\right) z_0^{p^{m-2}-p^s+p^{s-1}-1}\left(z_1 z_2 \ldots z_{p-1}\right)^{p^{m-2}} \\
& \quad+\left(p^s+1\right)\left(p^s-p^{s-1}\right) z_0^{p^{m-2}-1}\left(z_1 z_2 \ldots z_{p-1}\right)^{p^{m-2}-p^{s-1}} .
\end{aligned}
$$
\end{thm}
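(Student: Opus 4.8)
The plan is to identify $D$ with the nonzero level set of a quadratic form and then to reduce the complete weight enumerator to Weil sums of that form. Set $Q(z):=\operatorname{Tr}_s(z^{p^s+1})$ for $z\in\mathbb{F}_{p^m}$, where $\operatorname{Tr}_s=\operatorname{Tr}_{p^s/p}$; since $z^{p^s+1}=z\cdot z^{p^s}=N_{p^m/p^s}(z)$ lies in $\mathbb{F}_{p^s}$, the map $Q:\mathbb{F}_{p^m}\to\mathbb{F}_p$ is well defined and $D=\{z\in\mathbb{F}_{p^m}^*:Q(z)=0\}$. First I would check that $Q$ is a nondegenerate quadratic form on the $\mathbb{F}_p$-space $\mathbb{F}_{p^m}$ of even rank $m=2s$: expanding $Q(z+w)$ produces the associated symmetric bilinear form $B(z,w)=\operatorname{Tr}_{p^m/p}(zw^{p^s})$, which is nondegenerate because $w\mapsto w^{p^s}$ is a bijection and the trace form is nondegenerate. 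By the standard count of zeros of a nondegenerate quadratic form of even rank, $|D|+1$ equals either $p^{m-1}+(p-1)p^{m/2-1}$ or $p^{m-1}-(p-1)p^{m/2-1}$; matching the announced length $(p^s+1)(p^{s-1}-1)=p^{2s-1}-(p-1)p^{s-1}-1$ forces $Q$ to be of elliptic (minus) type, which I would confirm by evaluating $\sum_{z}\zeta_p^{Q(z)}=-p^s$ (equivalently, by computing its discriminant). This fixes the length $n=|D|=(p^s+1)(p^{s-1}-1)$.

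For the dimension I would invoke Theorem~\ref{thm: dimension second construction}, by which $\dim\mathcal{C}_D$ equals the $\mathbb{F}_p$-dimension of the span $\langle D\rangle$. Since the zero set of a nondegenerate quadratic form in $m=2s\ge 4$ variables is not contained in any hyperplane, $D$ spans $\mathbb{F}_{p^m}$ and $\dim\mathcal{C}_D=m$. A useful byproduct is that the $\mathbb{F}_p$-linear map $x\mapsto\mathbf{c}_x$ is then injective, so the frequency of each codeword equals the number of $x\in\mathbb{F}_{p^m}$ producing it; the enumerator frequencies will thus be exactly the cardinalities of level sets of $Q$.

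The core of the argument is the computation of the coordinate multiplicities $N_c(x):=\#\{d\in D:\operatorname{Tr}_{p^m/p}(xd)=c\}$. Expressing both the membership $Q(d)=0$ and the condition $\operatorname{Tr}_{p^m/p}(xd)=c$ through additive characters reduces everything to the Weil sums
$$ S(w,y)=\sum_{z\in\mathbb{F}_{p^m}}\zeta_p^{\,wQ(z)+y\operatorname{Tr}_{p^m/p}(xz)},\qquad w,y\in\mathbb{F}_p. $$
For $w=0$ this equals $p^m$ if $y=0$ and $0$ otherwise. For $w\neq 0$ I would complete the square with respect to $B$: writing the linear term as $wB(z,a)$ with $a=w^{-1}(yx)^{p^s}$ and shifting the variable gives $S(w,y)=\varepsilon\,p^s\,\zeta_p^{-w^{-1}y^2Q(x)}$ with $\varepsilon=-1$, using that $wQ$ has the same type as $Q$ in even rank and that $Q((yx)^{p^s})=y^2Q(x)$ by the Frobenius-invariance and scalar-homogeneity of $Q$. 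Summing over $w$ then shows that, for $x\neq 0$ and $y\neq 0$, the sum $T(x,y):=\sum_{d\in D}\zeta_p^{y\operatorname{Tr}_{p^m/p}(xd)}$ takes only two values, namely $-(p-1)p^{s-1}-1$ or $p^{s-1}-1$, according to whether $Q(x)=0$ or $Q(x)\neq 0$, and in particular is independent of the chosen $y\neq 0$. Finally, orthogonality in $y$ yields $N_0(x)=\frac{1}{p}\big(n+(p-1)T\big)$ and $N_c(x)=\frac{1}{p}\big(n-T\big)$ for every $c\neq 0$; the latter is the same for all nonzero $c$, which is exactly what produces the symmetric factor $(z_1z_2\cdots z_{p-1})$ in the enumerator. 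Substituting the two values of $T$ gives the two nonzero-$x$ exponent patterns, and counting the $x$ in each class through the elliptic zero-count gives the frequencies $(p^s+1)(p^{s-1}-1)$ and $(p^s+1)(p^s-p^{s-1})$; together with the all-zero codeword from $x=0$, this assembles the claimed weight enumerator.

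I expect the main obstacle to be the clean evaluation of $S(w,y)$ for $w\neq 0$: correctly completing the square over $\mathbb{F}_{p^m}$, tracking the elliptic sign $\varepsilon=-1$ (and checking that scaling $Q$ by $w\in\mathbb{F}_p^*$ does not alter the type, since the rank is even), and verifying the homogeneity identity $Q((yx)^{p^s})=y^2Q(x)$. Once $S(w,y)$ is secured, the remaining steps are routine orthogonality and bookkeeping, which I would cross-check by confirming that the three frequencies sum to $p^m$.
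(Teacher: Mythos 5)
Your proposal targets a statement that this paper never actually proves: Theorem 5.1 of \cite{WeightEnumeratorSecond} is quoted in the survey portion on the second generic construction purely as background, so there is no in-paper argument to compare against, and your blind attempt must be judged on its own and against the cited source. On its own terms the argument is correct. The key facts all check out: $Q(z)=\operatorname{Tr}_{p^s/p}(z^{p^s+1})$ is a quadratic form over $\mathbb{F}_p$ with associated bilinear form $B(z,w)=\operatorname{Tr}_{p^m/p}(zw^{p^s})$, nondegenerate since Frobenius and the trace form are nondegenerate; the Gauss sum $\sum_z\zeta_p^{Q(z)}=1+(p^s+1)\sum_{u\in\mathbb{F}_{p^s}^*}\zeta_p^{\operatorname{Tr}_{p^s/p}(u)}=-p^s$ pins down the elliptic type, giving $n=p^{m-1}-(p-1)p^{s-1}-1=(p^s+1)(p^{s-1}-1)$; completing the square with $a=w^{-1}(yx)^{p^s}$ is legitimate because $w^{p^s}=w$ and $u\mapsto u^{p^s}$ is an involution fixing $Q$, so indeed $S(w,y)=-p^s\,\zeta_p^{-w^{-1}y^2Q(x)}$; and your two values of $T$ yield $(N_0,N_c)=(p^{m-2}-p^s+p^{s-1}-1,\;p^{m-2})$ when $Q(x)=0$, $x\neq 0$, and $(p^{m-2}-1,\;p^{m-2}-p^{s-1})$ when $Q(x)\neq 0$, with frequencies $(p^s+1)(p^{s-1}-1)$ and $(p-1)p^{s-1}(p^s+1)=(p^s+1)(p^s-p^{s-1})$, which is exactly the claimed enumerator (and the three frequencies sum to $p^m$). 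Two steps you state without proof should be fleshed out in a final write-up: the spanning claim giving $\dim\mathcal{C}_D=m$, which follows by comparing $|D|$ with the maximum number of zeros a quadratic form can have on a hyperplane (this is where $s>1$ is genuinely needed, as $D$ is empty for $s=1$), and the resulting injectivity of $x\mapsto\mathbf{c}_x$, which you correctly use to identify frequencies with sizes of level sets of $Q$. Methodologically, the cited source obtains the same multiplicities by exploiting that $z\mapsto z^{p^s+1}$ is the norm map, $(p^s+1)$-to-$1$ onto $\mathbb{F}_{p^s}^*$, and reducing the character sums to the subfield $\mathbb{F}_{p^s}$; you instead stay on $\mathbb{F}_{p^m}$ and use quadratic-form machinery (type invariance under scalar multiples in even rank, completing the square). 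Both are standard and of comparable length; your route has the advantage of making the independence of $T(x,y)$ from $y\neq 0$ and the symmetry among the nonzero values $c$ (hence the factor $(z_1z_2\cdots z_{p-1})$) structurally transparent.
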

The applications of these codes proposed in the same paper refer mainly to constant composition codes and systhematic authentication codes.
\\\\
Again in the idea of the few weights linear codes, in 2008 a class of two weights codes was presented in \cite{TwoWeightsSecond} as punctured from irreducible cyclic codes and containing a family of optimal codes (including MDS). We find again the use of cyclotomic classes and character theory in finite fields that leaded to the following codes. 
\begin{thm}[\cite{TwoWeightsSecond}, Theorem 2.1]
    Let $p$ be a prime, $q=p^s$, $m=2lk$, $r=q^m$, where $s,l,k$ are positive integers. Let $h$ be a positive divisor of $q^k+1$ and assume $k<\sqrt{r}+1$. Then the obtained code $\mathcal{C}$ is an $[n,m]$ linear code over $\mathbb{F}_q$ and has the following weight distribution
    $$\begin{tabular}{|c|c|}
\hline weight & frequency \\
\hline$\frac{r+(-1)^l(h-1) \sqrt{r}}{q h}$ & $\frac{r-1}{h}$ \\
\hline$\frac{r+(-1)^{l-1} \sqrt{r}}{q h}$ & $\frac{(h-1)(r-1)}{h}$ \\
\hline 0 & 1 \\
\hline
\end{tabular}$$
Futhermore, the dual code $\mathcal{C}^\perp$ of $\mathcal{C}$ is an $[n,n-m,d^\perp]$ linear code with minimum dinstance $d^\perp\ge 3$. 
\end{thm}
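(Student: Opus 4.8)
The plan is to realise $\mathcal{C}$ as a defining-set code in the sense of the second generic construction and then convert its weight distribution into a statement about Gaussian periods of order $h$. Concretely, fix a primitive element $\gamma$ of $\mathbb{F}_r^\ast$, let $C_0=\langle\gamma^h\rangle$ be the subgroup of $h$-th powers, and set $C_i=\gamma^i C_0$ for $0\le i\le h-1$. Since $h\mid q^k+1$ gives $q^k\equiv-1\pmod h$ and $m=2lk$, grouping $\tfrac{r-1}{q-1}=\sum_{j=0}^{m-1}q^j$ into $2l$ blocks of length $k$ shows $\tfrac{r-1}{q-1}\equiv 0\pmod h$, so $\mathbb{F}_q^\ast\subseteq C_0$ and every $C_i$ is a union of $\mathbb{F}_q^\ast$-cosets. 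I would take $D$ to consist of one representative from each of the $n=\tfrac{r-1}{h(q-1)}$ cosets $\mathbb{F}_q^\ast$ lying in $C_0$; this is precisely the code obtained by puncturing the irreducible cyclic code with defining set $C_0$ to a projective system, of length $n$ and with $r$ codewords $\mathbf c_x=(\mathrm{Tr}_{r/q}(xd))_{d\in D}$, $x\in\mathbb{F}_r$.

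First I would compute $\mathrm{wt}(\mathbf c_x)$ for $x\neq 0$. Writing $\chi$ for the canonical additive character of $\mathbb{F}_r$ and using the orthogonality argument already recalled for the second construction (adapted to the subfield $\mathbb{F}_q$), the number of vanishing coordinates equals $\tfrac{1}{q-1}\#\{d\in C_0:\mathrm{Tr}_{r/q}(xd)=0\}$, since $\mathrm{Tr}_{r/q}(x\cdot)$ is $\mathbb{F}_q$-linear and hence vanishes on a whole coset or nowhere on it. Expanding the indicator of $\mathrm{Tr}_{r/q}(xd)=0$ through additive characters turns this count into $\tfrac1q\big(|C_0|+\sum_{y\in\mathbb{F}_q^\ast}\sum_{d\in C_0}\chi(yxd)\big)$. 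The crucial simplification is that, because $\mathbb{F}_q^\ast\subseteq C_0$, the full coset $yx$ ($y\in\mathbb{F}_q^\ast$) lies in the single class $C_{t(x)}$ of $x$, so $\sum_{d\in C_0}\chi(yxd)=\eta_{t(x)}$ is constant equal to the Gaussian period $\eta_i=\sum_{d\in C_i}\chi(d)$. Collecting terms yields the clean identity
$$
\mathrm{wt}(\mathbf c_x)=\frac{r-1-h\,\eta_{t(x)}}{hq},
$$
so the whole theorem reduces to evaluating the $\eta_i$ and counting the $x$ in each class.

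The heart of the argument, and the step I expect to be the main obstacle, is the explicit evaluation of the periods. This is exactly the \emph{semiprimitive} case of Gauss sums: $h\mid q^k+1$ means $q^k\equiv-1\pmod h$, and $m=2lk$ is an even multiple of $k$, so for any character $\psi$ of order dividing $h$ over $\mathbb{F}_r$ the Gauss sum $G(\psi)$ is a real multiple of $\sqrt r$ with sign $(-1)^{l-1}$ (I must check the parity correction coming from $\tfrac{q^k+1}{h}$ does not disturb the stated signs, and that the sign is uniform over all nontrivial $\psi^{\,j}$). Passing to periods via $\eta_a=\tfrac1h\sum_{j=0}^{h-1}\psi^{-j}(\gamma^a)G(\psi^{\,j})$ with $G(\psi^0)=-1$, and using $\sum_{j=1}^{h-1}\psi^{-j}(\gamma^a)=h-1$ if $a\equiv0\pmod h$ and $-1$ otherwise, I would obtain
$$
\eta_0=\frac{(-1)^{l+1}(h-1)\sqrt r-1}{h},\qquad \eta_i=\frac{(-1)^{l}\sqrt r-1}{h}\ \ (i\neq0),
$$
which I would cross-check against $\sum_{i=0}^{h-1}\eta_i=-1$. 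Substituting these into the weight identity gives $\tfrac{r+(-1)^l(h-1)\sqrt r}{qh}$ for $x\in C_0$ and $\tfrac{r+(-1)^{l-1}\sqrt r}{qh}$ for $x\notin C_0$, the two listed weights.

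Finally I would read off the remaining data. As $\mathrm{wt}(\mathbf c_x)$ depends only on $t(x)$, the first weight occurs $|C_0|=\tfrac{r-1}{h}$ times and the second $(h-1)\tfrac{r-1}{h}=\tfrac{(h-1)(r-1)}{h}$ times, while $x=0$ gives the zero codeword; the hypothesis $k<\sqrt r+1$ ensures both nonzero weights are positive and distinct. In particular every nonzero $x$ gives a nonzero codeword, so $x\mapsto\mathbf c_x$ is injective and, by the dimension theorem for the second construction, $\dim_{\mathbb{F}_q}\mathcal C=m$, whence the dual has length $n$ and dimension $n-m$. For the dual distance, $d^\perp\ge 2$ is immediate since $0\notin D$, and $d^\perp\ge 3$ follows because $D$ contains a single representative per $\mathbb{F}_q^\ast$-coset, so no two coordinates of a generator matrix are $\mathbb{F}_q$-proportional; hence $\mathcal C^\perp$ is an $[n,n-m,d^\perp]$ code with $d^\perp\ge 3$.
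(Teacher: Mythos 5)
Note first that the paper contains no proof of this theorem: it is quoted from \cite{TwoWeightsSecond} as part of the survey of the second generic construction, so your proposal can only be measured against the original source, whose strategy (cyclotomic classes, Gaussian periods, semiprimitive Gauss sums) you have essentially reconstructed. The reduction part of your argument is sound: $h\mid q^k+1$ and $m=2lk$ do give $h\mid\frac{r-1}{q-1}$, hence $\mathbb{F}_q^\ast\subseteq C_0$; the coset argument and character-sum expansion correctly yield $\mathrm{wt}(\mathbf{c}_x)=\frac{r-1-h\,\eta_{t(x)}}{hq}$; and the dimension and dual-distance arguments at the end are fine (note only that the hypothesis as transcribed, $k<\sqrt r+1$, must be read as $h<\sqrt r+1$ for your positivity argument to work).

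The genuine gap is exactly the step you flagged and then did not resolve: your period formulas presuppose that the Gauss sums $G(\psi^{\,j})$ carry one uniform sign $(-1)^{l-1}$ for all $j=1,\ldots,h-1$, and this fails. For odd $p$ the semiprimitive evaluation of $G(\psi^{\,j})$ includes a correction factor $(-1)^{\gamma_d(p^{j_d}+1)/d}$ which depends on the exact order $d$ of $\psi^{\,j}$ (here $j_d$ is minimal with $p^{j_d}\equiv-1\pmod d$ and $r=p^{2j_d\gamma_d}$). A concrete counterexample to your formulas: $p=q=3$, $k=1$, $h=4$, $l=3$, $r=3^6=729$. The two quartic characters have Gauss sum $-27$ while the quadratic character has Gauss sum $+27$; Fourier inversion then gives $\eta_0=\eta_1=\eta_3=-7$ and $\eta_2=20$. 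So the exceptional period sits on $C_{h/2}$, not on $C_0$, whereas your formulas predict $\eta_0=20$ and $\eta_i=-7$ for $i\neq0$; your cross-check $\sum_i\eta_i=-1$ cannot detect this, being invariant under permutation of the $\eta_i$. The theorem itself survives because only the multiset of period values and the size of the exceptional class matter: by the Baumert--McEliece uniform-cyclotomy theorem the periods take exactly the two values $\frac{-1+(-1)^{l-1}(h-1)\sqrt r}{h}$ and $\frac{-1-(-1)^{l-1}\sqrt r}{h}$, the first occurring on exactly one class $C_{i_0}$ with $i_0\in\{0,h/2\}$ (which one depends on the parity of $l(q^k+1)/h$), and every class has size $\frac{r-1}{h}$. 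Replacing your period evaluation by this two-valued statement, and observing that the frequencies are insensitive to the location of $C_{i_0}$, repairs the proof: in the example above the code indeed has weights $54$ and $63$ with frequencies $182$ and $546$ as the theorem asserts, but the weight-$54$ codewords come from $C_2$ rather than from $C_0$.
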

In \cite{TwoDesigns} Ding studied some classes of two generic construction codes from specific defining sets. First of all we have the skew sets $D$ (a subset $D\subseteq \mathbb{F}_q$ is called skew if $D$, $-D$ and $\{0\}$ form a partition of $\mathbb{F}_q$). 
\begin{thm}[\cite{TwoDesigns}, Theorem 1]
    Let $D$ be any skew sets of $\mathbb{F}_q$. Then $\mathcal{C}_D$ is a one-weight code over $\mathbb{F}_p$ with parameters $[(q-1)/2,m,(p-1)q/2p]$.
\end{thm}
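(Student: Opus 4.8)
The plan is to derive all three parameters from the general machinery already set up for the second generic construction. First I would establish the length: since $D$ is skew, the sets $D$, $-D$ and $\{0\}$ partition $\mathbb{F}_q$, and because $x \mapsto -x$ is a bijection between $D$ and $-D$ we get $|D| = (q-1)/2$, which is exactly the length $n$ of $\mathcal{C}_D$. For the dimension, I would invoke Theorem \ref{thm: dimension second construction}, according to which $\dim_{\mathbb{F}_p}\mathcal{C}_D$ equals the dimension of the $\mathbb{F}_p$-span $S$ of $D$. Since $S$ is closed under multiplication by $-1 \in \mathbb{F}_p$, it contains both $D$ and $-D$, hence all of $\mathbb{F}_q^* = D \cup (-D)$, so $S = \mathbb{F}_q$ and the dimension is $m$ (writing $q = p^m$). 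In particular the map $x \mapsto \mathbf{c}_x$ is injective, so $\mathcal{C}_D$ has exactly $q$ codewords, of which $q-1$ are nonzero.

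Next, for the weight I would start from the exponential-sum formula already recorded in the excerpt,
$$ wt(\mathbf{c}_x) = \frac{p-1}{p}\, n - \frac{1}{p}\sum_{y\in\mathbb{F}_p^*}\chi_1(yxD), $$
and show that the inner character sum is constant on $x \neq 0$. The key identity I would establish is that for every nonzero $z \in \mathbb{F}_q$ one has $\operatorname{Re}\bigl(\chi_1(zD)\bigr) = -\tfrac12$. This follows from two observations: (i) since $\{0\}, D, -D$ partition $\mathbb{F}_q$, we have $\chi_1(zD) + \chi_1(z(-D)) = \sum_{w\in\mathbb{F}_q^*}\chi_1(zw) = -1$, using the orthogonality of the nontrivial character $w \mapsto \chi_1(zw)$; and (ii) $\chi_1(z(-D)) = \sum_{d\in D}\chi_1(-zd) = \overline{\chi_1(zD)}$, because $\chi_1(-w) = \overline{\chi_1(w)}$. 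Combining these gives $\chi_1(zD) + \overline{\chi_1(zD)} = -1$, that is, $\operatorname{Re}(\chi_1(zD)) = -\tfrac12$.

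Finally, I would exploit that $p$ is necessarily odd (a skew set forces $-1 \neq 1$, otherwise $D$ and $-D$ could not be disjoint), so $y \mapsto -y$ is a fixed-point-free involution on $\mathbb{F}_p^*$ splitting it into $(p-1)/2$ pairs. For fixed $x \neq 0$, each pair $\{y, -y\}$ contributes $\chi_1(yxD) + \chi_1(-yxD) = 2\operatorname{Re}(\chi_1(yxD)) = -1$ by the identity above, applied with $z = yx \neq 0$. Summing over the $(p-1)/2$ pairs gives $\sum_{y\in\mathbb{F}_p^*}\chi_1(yxD) = -\tfrac{p-1}{2}$ independently of $x$. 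Substituting this and $n = (q-1)/2$ into the weight formula yields $wt(\mathbf{c}_x) = \frac{(p-1)(q-1)}{2p} + \frac{p-1}{2p} = \frac{(p-1)q}{2p}$ for every nonzero $x$, so $\mathcal{C}_D$ is a one-weight code with the claimed parameters. The main obstacle is the real-part identity $\operatorname{Re}(\chi_1(zD)) = -\tfrac12$: everything hinges on translating the skew partition into the relation $\chi_1(zD) + \overline{\chi_1(zD)} = -1$, after which the $y \leftrightarrow -y$ pairing that annihilates the imaginary parts is the decisive and only genuinely delicate point.
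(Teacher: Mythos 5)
Your proof is correct and takes essentially the approach the paper indicates: it combines the recorded weight formula $wt(\mathbf{c}_x)=\frac{p-1}{p}n-\frac{1}{p}\sum_{y\in\mathbb{F}_p^*}\chi_1(yxD)$ with the fact that the skew partition survives multiplication by any nonzero scalar, your identity $\chi_1(zD)+\overline{\chi_1(zD)}=-1$ being exactly the character-sum form of the paper's remark that $xD$ is again skew for every $x\in\mathbb{F}_q^*$. The length, dimension (via Theorem \ref{thm: dimension second construction}) and constant weight $(p-1)q/2p$ are all derived correctly, so nothing is missing.
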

The idea of the proof lies on the fact that if $D$ is a skew set, then $xD$ is also skew for every $x\in\mathbb{F}_q^*$.
\\\\
Another type of codes presented by Ding in the same paper \cite{TwoDesigns} involves the defining sets as the preimage $f^{-1}(b)$ for some functions $f$ from $\mathbb{F}_{p^m}$ to $\mathbb{F}_p$.\\
In the boolean case, for a function $f$ from $\mathbb{F}_{2^m}$ to $\mathbb{F}_2$, he was able to determine the complete weight enumerator by the use of the Walsh transform. In particular, let $D_f:=f^{-1}(1)$ be the support of $f$ and let $n_f=|D_f|$; then we have the following theorem, with specific corollaries when the function $f$ is bent or semibent. 

\begin{thm}[\cite{TwoDesigns}, Theorem 9]
    If $f$ is not an affine function, then $\mathcal{C}_{D_f}$ is a binary linear code of length $n_f$ and dimension $m$, and its weigth distribution is given by the following multiset 
    $$\big\{\big\{\frac{2n_f+\hat{f}(w)}{4}\,:\,w\in\mathbb{F}_{2^m}^*\big\}\big\}\cup \{\{0\}\}.$$
\end{thm}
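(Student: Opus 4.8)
The plan is to reduce everything to the weight formula already recorded above for the second generic construction, specialised to $p=2$, and then to rewrite the character sum occurring there in terms of the Walsh transform of $f$. Let $\chi_1$ denote the canonical additive character of $\mathbb{F}_{2^m}$. For $p=2$ and $D=D_f$ the formula $wt(\mathbf{c}_x)=\frac{p-1}{p}n-\frac1p\sum_{y\in\mathbb{F}_p^{*}}\chi_1(yxD)$ collapses to
$$wt(\mathbf{c}_w)=\frac{n_f}{2}-\frac12\,\chi_1(wD_f),\qquad\text{where }\ \chi_1(wD_f)=\sum_{d\in D_f}(-1)^{\mathrm{Tr}(wd)}.$$
Thus the task splits into two parts: converting $\chi_1(wD_f)$ into $\hat f(w)$, and controlling exactly when the codeword $\mathbf{c}_w$ vanishes.

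First I would prove the bridge identity between the character sum and the Walsh transform. Writing $\hat f(w)=\sum_{x\in\mathbb{F}_{2^m}}(-1)^{f(x)+\mathrm{Tr}(wx)}$ and separating the contributions of $f(x)=0$ and $f(x)=1$, one obtains $\hat f(w)=\sum_{x}(-1)^{\mathrm{Tr}(wx)}-2\sum_{d\in D_f}(-1)^{\mathrm{Tr}(wd)}$. For $w\neq 0$ the orthogonality of the additive characters annihilates the first sum, so $\hat f(w)=-2\,\chi_1(wD_f)$, i.e. $\chi_1(wD_f)=-\hat f(w)/2$. Substituting this into the weight formula gives, for every $w\in\mathbb{F}_{2^m}^{*}$,
$$wt(\mathbf{c}_w)=\frac{n_f}{2}+\frac{\hat f(w)}{4}=\frac{2n_f+\hat f(w)}{4},$$
whereas $w=0$ yields the all-zero codeword of weight $0$. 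These are precisely the two constituents of the announced multiset.

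The remaining and, to my mind, most delicate point is the dimension claim: one must check that the $2^m$ vectors $\mathbf{c}_w$ are pairwise distinct, for otherwise the displayed multiset would overcount the codewords and would not be the true weight distribution of $\mathcal{C}_{D_f}$. By Theorem \ref{thm: dimension second construction} the dimension equals $\dim_{\mathbb{F}_2}\langle D_f\rangle$, so it suffices to show that $D_f$ spans $\mathbb{F}_{2^m}$; by linearity of $w\mapsto\mathbf{c}_w$ this is equivalent to $\mathbf{c}_w\neq 0$ for all $w\neq 0$. For such $w$, the identities above show that $\mathbf{c}_w=0$ is equivalent to $\mathrm{Tr}(wd)=0$ for every $d\in D_f$, i.e. to $\chi_1(wD_f)=n_f$, i.e. to $\hat f(w)=-2n_f$; geometrically it means that $D_f$ lies inside the hyperplane $\{x:\mathrm{Tr}(wx)=0\}$. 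I expect this to be the main obstacle: excluding such a degenerate $w\neq 0$ is exactly where the hypothesis that $f$ is not affine must enter, and since $D_f\subseteq\{x:\mathrm{Tr}(wx)=0\}$ does not on its own force $f$ to be affine, one has to invoke the precise non-degeneracy assumption of \cite{TwoDesigns}. In the applications to bent or semibent $f$ the obstruction evaporates, because then $|\hat f(w)|$ is of size $2^{m/2}$, far smaller than $2n_f\approx 2^m$, so the equality $\hat f(w)=-2n_f$ is impossible for $w\neq 0$.

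Once injectivity is secured, the $2^m$ codewords are distinct and indexed by $w\in\mathbb{F}_{2^m}$: the value $w=0$ contributes the weight $0$, and each $w\in\mathbb{F}_{2^m}^{*}$ contributes $\tfrac{2n_f+\hat f(w)}{4}$, which is exactly the claimed multiset. The length $n_f=|D_f|$ and the fact that the code is binary are immediate from the definition of $\mathcal{C}_{D_f}$, completing the plan.
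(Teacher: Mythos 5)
First, a point of reference: the paper only \emph{quotes} this theorem from \cite{TwoDesigns} as a survey item and contains no proof of it, so your proposal can only be measured against the classical argument. Your treatment of the weights \emph{is} that classical argument, carried out correctly: specialising the recorded formula $wt(\mathbf{c}_x)=\frac{p-1}{p}n-\frac{1}{p}\sum_{y\in\mathbb{F}_p^{*}}\chi_1(yxD)$ to $p=2$, and proving the bridge identity $\hat{f}(w)=-2\chi_1(wD_f)$ for $w\neq 0$ by splitting the Walsh sum according to $f(x)=0$ or $f(x)=1$ and invoking character orthogonality, yields $wt(\mathbf{c}_w)=\frac{2n_f+\hat{f}(w)}{4}$ for $w\in\mathbb{F}_{2^m}^{*}$, while $\mathbf{c}_0=\mathbf{0}$ accounts for the extra $\{\{0\}\}$. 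You are also right that the multiset is the weight distribution only if $w\mapsto\mathbf{c}_w$ is injective, which is the real content of the dimension claim.

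On that dimension claim, the gap you flag is genuine, but it lies in the statement as transcribed, not in your reasoning: the hypothesis ``$f$ is not affine'' is insufficient, so no proof of the quoted statement can exist. Concretely, take $m=3$ and let $f$ be the indicator of $\{x_1,x_2,x_1+x_2\}$ with $x_1,x_2$ linearly independent: then $n_f=3$, so $f$ is not affine (supports of affine functions on $\mathbb{F}_{2^3}$ have size $0$, $4$ or $8$), yet $\langle D_f\rangle$ has dimension $2$, so by Theorem \ref{thm: dimension second construction} the code $\mathcal{C}_{D_f}$ has dimension $2<m$, and the $2^m$-entry multiset (which contains two zeros here) cannot be the weight distribution of a code with only $4$ codewords. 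The exact hypothesis your argument needs --- and under which it closes verbatim --- is the one you isolated: $D_f$ is contained in no hyperplane $\{x:\mathrm{Tr}(wx)=0\}$, equivalently $2n_f+\hat{f}(w)\neq 0$ for all $w\in\mathbb{F}_{2^m}^{*}$. Note that this condition is neither implied by nor implies non-affineness: the affine function $f(x)=\mathrm{Tr}(ax)$, $a\neq 0$, satisfies it (and the theorem's conclusion), while the example above violates it. So your proof should be read as a complete and correct proof of the corrected statement, and your closing observation that the obstruction disappears for bent or semibent $f$ (where $|\hat{f}(w)|\leq 2^{(m+1)/2}\ll 2n_f$) correctly explains why the applications in \cite{TwoDesigns} are unaffected.
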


\section{The Dual and the Hull code in the first generic construction}

\subsection{The dual code}
Let $f$ be a polynomial from $\mathbb{F}_q$ to $\mathbb{F}_q$, where $q=p^m$. Then, recalling the definition of the dual, we have that 
 $$\mathcal{C}(f)^\perp=\{(c_1,\dots,c_q)\in\mathbb{F}_p^q\,|\,\sum_{i=1}^q c_i Tr_{q/p}(af(x_i)+bx_i)=0\,\text{for every}\, a,b\in\mathbb{F}_q\}=$$
 $$=\{(c_1,\dots,c_q)\in\mathbb{F}_p^q\,|\, Tr_{q/p}\sum_{i=1}^q(c_i(af(x_i)+bx_i))=0\,\text{for every}\, a,b \in\mathbb{F}_q\}$$
thanks to the linearity of the trace map and where we have fixed an order $x_i$ in $\mathbb{F}_q$. Observe that with a different order the two codes would be clearly permutationally equivalent. 
\\\\
In other words, for every $a,b\in\mathbb{F}_q$ we must have 
$$\sum_{i=1}^qc_i(af(x_i)+bx_i)\in\text{ker}(Tr_{q/p})$$
for every $a,b\in\mathbb{F}_q$. Actually, we can have a much more explicit descrpition of the dual code, which is the following.
\begin{prop}\label{prop: dual code in the first generic construction}
    Let $f$ be a polynomial from $\mathbb{F}_q$ to $\mathbb{F}_q$, $\mathcal{C}(f)$ be the code built with the first generic construction, $l_1=(x_1,\dots,x_q)$, $l_2=(f(x_1),\dots,f(x_q))$, $\mathcal{L}_1$ be the code generated by $l_1$ and $\mathcal{L}_2$ the code generated by $l_2$ over $\mathbb{F}_q$. Then 
$$\mathcal{C}(f)^\perp= \mathcal{L}_1^\perp\cap \mathcal{L}_2^\perp\cap \mathbb{F}_p^q.$$
\end{prop}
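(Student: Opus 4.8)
The plan is to unwind the orthogonality condition defining $\mathcal{C}(f)^\perp$ and to reduce it, via the nondegeneracy of the trace bilinear form, to two scalar equations over $\mathbb{F}_q$ that are precisely the orthogonality conditions against $l_1$ and $l_2$. Concretely, I would start from an arbitrary vector $(c_1,\dots,c_q)\in\mathbb{F}_p^q$ and rewrite its defining condition. Since the $c_i$ lie in $\mathbb{F}_p$ and $Tr_{q/p}$ is $\mathbb{F}_p$-linear, the requirement $\sum_i c_i Tr_{q/p}(af(x_i)+bx_i)=0$ holding for every $a,b\in\mathbb{F}_q$ becomes
$$Tr_{q/p}\!\left(a\sum_{i=1}^q c_i f(x_i)+b\sum_{i=1}^q c_i x_i\right)=0.$$
Introducing the two scalars $S_1:=\sum_i c_i x_i$ and $S_2:=\sum_i c_i f(x_i)$, both elements of $\mathbb{F}_q$, the condition reads $Tr_{q/p}(aS_2+bS_1)=0$ for all $a,b\in\mathbb{F}_q$.

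Next I would decouple the two scalars. Specializing $b=0$ yields $Tr_{q/p}(aS_2)=0$ for every $a\in\mathbb{F}_q$, and specializing $a=0$ yields $Tr_{q/p}(bS_1)=0$ for every $b\in\mathbb{F}_q$; conversely, $S_1=S_2=0$ trivially restores the full condition for all pairs $(a,b)$. The crucial ingredient is the nondegeneracy of the trace form: the map $y\mapsto Tr_{q/p}(ay)$ vanishes for all $a\in\mathbb{F}_q$ only when $y=0$. Hence the membership of $c$ in $\mathcal{C}(f)^\perp$ is equivalent to the single pair of equations $S_1=0$ and $S_2=0$.

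Finally I would translate these two equations into the claimed intersection. Because $\mathcal{L}_1$ (resp.\ $\mathcal{L}_2$) is the $\mathbb{F}_q$-line spanned by $l_1$ (resp.\ $l_2$), the equation $S_1=\langle c,l_1\rangle=0$ is exactly the statement $c\in\mathcal{L}_1^\perp$, and $S_2=\langle c,l_2\rangle=0$ is exactly $c\in\mathcal{L}_2^\perp$, where both duals are computed in $\mathbb{F}_q^q$. Imposing in addition that $c\in\mathbb{F}_p^q$ then gives $\mathcal{C}(f)^\perp=\mathcal{L}_1^\perp\cap\mathcal{L}_2^\perp\cap\mathbb{F}_p^q$.

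I do not expect a genuine obstacle; the argument is essentially a nondegeneracy computation. The step demanding the most attention is the $\mathbb{F}_p$-versus-$\mathbb{F}_q$ bookkeeping: the inner products $\langle c,l_i\rangle$ are formed over $\mathbb{F}_q$ while $c$ has entries in the subfield $\mathbb{F}_p$, so one must consistently track that orthogonality against the two $\mathbb{F}_q$-lines is being intersected with the base-field vectors of $\mathbb{F}_p^q$. One should also note that $l_1\neq 0$ (it lists all of $\mathbb{F}_q$), so that $\mathcal{L}_1^\perp$ really is the hyperplane $\{v:\langle v,l_1\rangle=0\}$, and likewise for $l_2$; these are the only places where care about where each scalar lives is required.
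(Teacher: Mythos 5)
Your proof is correct and takes essentially the same approach as the paper: both reduce membership of $(c_1,\dots,c_q)\in\mathbb{F}_p^q$ in $\mathcal{C}(f)^\perp$, via the $\mathbb{F}_p$-linearity of the trace and the fact that the trace form is nondegenerate (equivalently, not the zero map), to the pair of scalar equations $\sum_i c_i x_i=0$ and $\sum_i c_i f(x_i)=0$. The only cosmetic difference is that the paper phrases the nondegeneracy step through the map $(a,b)\mapsto\sum_i c_i(af(x_i)+bx_i)$ being surjective or zero and then invokes the identity $(\mathcal{L}_1+\mathcal{L}_2)^\perp=\mathcal{L}_1^\perp\cap\mathcal{L}_2^\perp$, whereas you decouple the two conditions directly by specializing $a=0$ and $b=0$.
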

\begin{proof}
    Consider a codeword $(c_1,\dots, c_q)\in\mathbb{F}_p^q$, then we saw that $(c_1,\dots,c_q)\in\mathcal{C}(f)^\perp$ if and only if
    $$\sum_{i=1}^qc_i\text{Tr}_{q/p}(af(x_i)+bx_i)=\text{Tr}_{q/p}(\sum_{i=1}^qc_i(af(x_i)+bx_i))=0$$
    for every $a,b\in\mathbb{F}_q$. Now, consider the mapping
    $$\begin{aligned}
    \varphi:\mathbb{F}_q^2 &\longrightarrow \mathbb{F}_q\\ (a,b) &\longmapsto \sum_{i=1}^qc_i(af(x_i)+bx_i).
\end{aligned}$$
A straightforward computation shows that $\varphi$ is linear over $\mathbb{F}_q$, hence it is either surjective or the zero mapping. Since  the trace map is not null, our condition is equivalent to 
$$\sum_{i=1}^q c_i(af(x_i)+bx_i)=0$$
for every $a,b\in\mathbb{F}_q$, which in other words means that 
$$\mathcal{C}(f)^\perp=(\mathcal{L}_1+\mathcal{L}_2)^\perp\cap\mathbb{F}_p^q.$$
It is a well known result that the dual of the sum of two linear codes coincides with the intersection of its duals, hence we have our thesis:
$$\mathcal{C}(f)^\perp=(\mathcal{L}_1+\mathcal{L}_2)^\perp\cap\mathbb{F}_p^q=\mathcal{L}_1^\perp\cap\mathcal{L}_2^\perp\cap\mathbb{F}_p^q.$$
\end{proof}
Hence, we have reduced the computation of the dual code to an orthogonality problem for which it is possible to use the Gram-Schmidt algorithm; we recall that in this case the complexity is $\mathcal{O}(q^3)$. 

\subsection{The Walsh transform}
In this paragraph we would like to give a necessary condition for a codeword to be in the dual of a code built with the first generic construction and a specific one when the functions are weakly regular bent.
\\\\
Consider a function $f:\mathbb{F}_q\rightarrow \mathbb{F}_q$, $q=p^m$, which we will use the define the code in the first generic construction, with $p$ odd, and define the following function 
$$\begin{aligned}
    g:\mathbb{F}_q &\longrightarrow \mathbb{F}_p\\ x &\longmapsto \text{Tr}_{q/p}(f(x)-x).
\end{aligned}$$
If we suppose that $g$ is weakly regular bent, then there exists another function $g^*:\mathbb{F}_q\rightarrow\mathbb{F}_p$ such that, for $b\in\mathbb{F}_q$:
$$\hat{\chi}_{g}(b)=\epsilon \sqrt{p^*}^m\zeta^{g^*(b)}$$
and 
$$\hat{\chi}_{g^*}(b)=\frac{\epsilon p^m}{\sqrt{p^*}^m}\zeta^{g(x)},$$
where $\epsilon=\pm 1$ is the sign of the Walsh tranform of $f(x)$, $p^*=(\frac{-1}{p})p$ and $\zeta=e^{\frac{2\pi i}{p}}$ is the primitive $p$-th root of unity. In this case, with the same notation, fixing an enumeration $(x_i)_{i=1,\dots,q}$ in $\mathbb{F}_q$, we have the following necessary condition. 
\begin{prop}\label{prop: first necessary condition in the first generic construction for weakly regular bent functions}
    Let $f:\mathbb{F}_q\rightarrow\mathbb{F}_q$ be a function such that the previously defined function $g$ is weakly regular bent and consider $(c_1,\dots,c_q)\in\mathcal{C}(f)^\perp$. Then
    \begin{enumerate}
        \item if $f$ respects the scalar multiplication, i.e. for every $\alpha\in\mathbb{F}_p$ and $x\in\mathbb{F}_q$, $f(\alpha x)=\alpha f(x)$:
        $$\prod _{i=1}^q \hat{\chi}_{g^*}(c_ix_i)=\big(\frac{p^m}{\epsilon\sqrt{p^*}^m}\big )^q,$$
        or, equivalently,
        $$\mathfrak{Im}(\prod _{i=1}^q \hat{\chi}_{g^*}(c_ix_i))=0;$$
        \item for a generic $f$:
        $$\prod _{i=1}^q \big(\hat{\chi}_{g^*}(x_i)\big)^{c_i}=\prod_{i=1}^q\big(\frac{p^m}{\epsilon\sqrt{p^*}^m}\big )^{c_i},$$
        or, equivalently, 
        $$\mathfrak{Im}(\prod _{i=1}^q \big(\hat{\chi}_{g^*}(x_i)\big)^{c_i})=0.$$
    \end{enumerate}
\end{prop}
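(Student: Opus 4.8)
The plan is to reduce both identities to a single arithmetic fact about the exponents of $\zeta$: that the ``linear combination'' $\sum_{i=1}^q c_i g(x_i)$ vanishes in $\mathbb{F}_p$ whenever $(c_1,\dots,c_q)\in\mathcal{C}(f)^\perp$. First I would extract from the dual-membership hypothesis the two orthogonality relations supplied by Proposition \ref{prop: dual code in the first generic construction}. Since $\mathcal{C}(f)^\perp=\mathcal{L}_1^\perp\cap\mathcal{L}_2^\perp\cap\mathbb{F}_p^q$, a dual codeword satisfies $\sum_{i=1}^q c_i x_i=0$ and $\sum_{i=1}^q c_i f(x_i)=0$ in $\mathbb{F}_q$. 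Subtracting and using $\mathbb{F}_p$-linearity of the trace gives
$$\sum_{i=1}^q c_i g(x_i)=\text{Tr}_{q/p}\Big(\sum_{i=1}^q c_i\big(f(x_i)-x_i\big)\Big)=\text{Tr}_{q/p}(0)=0$$
in $\mathbb{F}_p$. This single relation drives the whole argument.

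Next I would substitute the weakly-regular-bent description of the dual Walsh transform. Writing $A:=\frac{\epsilon p^m}{\sqrt{p^*}^m}=\frac{p^m}{\epsilon\sqrt{p^*}^m}$ (the two coincide because $\epsilon=\pm1$), the hypothesis gives $\hat{\chi}_{g^*}(b)=A\,\zeta^{g(b)}$ for every $b\in\mathbb{F}_q$. For part (2) I raise this to the power $c_i$ at $b=x_i$ and multiply: the constants collect into $\prod_i A^{c_i}$ and the phases into $\zeta^{\sum_i c_i g(x_i)}=\zeta^0=1$ by the relation above, which is exactly the claimed equality. For part (1) I instead evaluate at $b=c_i x_i$ and invoke the scalar-multiplication hypothesis $f(\alpha x)=\alpha f(x)$ together with $\mathbb{F}_p$-linearity of the trace to rewrite $g(c_i x_i)=c_i g(x_i)$ (note this also forces $f(0)=0$, so the $c_i=0$ factors behave correctly); the product then equals $A^q\zeta^{\sum_i c_i g(x_i)}=A^q$, which is the asserted value $\left(\frac{p^m}{\epsilon\sqrt{p^*}^m}\right)^q$.

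For the ``equivalently $\mathfrak{Im}(\cdot)=0$'' reformulations I would first argue that $A$ is a nonzero \emph{real} number. Writing the quadratic Gauss sum as $\sqrt{p^*}=p^{1/2}\omega$ with $\omega=1$ when $p\equiv1\pmod4$ and $\omega=i$ when $p\equiv3\pmod4$, one obtains $A=\epsilon\,p^{m/2}\,\omega^{-m}$; the only potentially imaginary regime $\omega=i$ occurs for $p\equiv3\pmod4$, and there the parity constraint recalled after the Walsh-transform dichotomy (that $m$ odd forces $p\equiv1\pmod4$, hence $p\equiv3$ forces $m$ even) makes $\omega^{-m}=(\pm1)$, so $A\in\mathbb{R}^*$. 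Granting this, each product in (1)–(2) is a nonzero real constant times $\zeta^{s}$ with $s=\sum_i c_i g(x_i)\in\mathbb{F}_p$; since $p$ is odd, a $p$-th root of unity $\zeta^s$ is real if and only if $\zeta^s=1$, i.e. $s\equiv0\pmod p$. Thus the product has vanishing imaginary part precisely when $s=0$, which is the content of the first formulation, giving the equivalence.

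The main obstacle I anticipate is not the algebra of the products—which is essentially bookkeeping once $\hat{\chi}_{g^*}(b)=A\,\zeta^{g(b)}$ is inserted—but cleanly justifying the two ``equivalently'' clauses: one must verify that $A$ is genuinely real, hence invoke the relationship between the parity of $m$ and $p\bmod4$ for weakly regular bent functions, and then apply the reality criterion for $p$-th roots of unity; otherwise ``$\mathfrak{Im}=0$'' would be strictly weaker than the asserted equality. A secondary point to handle carefully is that the scalar-multiplication hypothesis in part (1) is exactly what lets $c_i$ be pulled through $f$ and the trace, so that without it only the weaker exponent-wise statement (2) survives.
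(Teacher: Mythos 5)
Your proof is correct and follows essentially the same route as the paper's: extract from Proposition \ref{prop: dual code in the first generic construction} the relations $\sum_{i=1}^q c_ix_i=0$ and $\sum_{i=1}^q c_if(x_i)=0$, combine them into $\sum_{i=1}^q c_i(f(x_i)-x_i)=0$, substitute the weakly-regular-bent expression $\hat{\chi}_{g^*}(b)=\frac{\epsilon p^m}{\sqrt{p^*}^m}\zeta^{g(b)}$, and collapse the product to $\zeta^0=1$, handling part (1) by pulling $c_i$ through $f$ and the trace via the scalar-multiplication hypothesis. Your only departure is one of rigor, not of method: you explicitly verify that the constant $\frac{p^m}{\epsilon\sqrt{p^*}^m}$ is real (using the paper's stated parity constraint linking $m$ odd to $p\equiv 1 \pmod 4$) before invoking the reality criterion for $p$-th roots of unity, whereas the paper justifies the ``$\mathfrak{Im}=0$'' equivalence with a one-line appeal to the structure of the Walsh transform.
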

\begin{proof}
    If $(c_1,\dots,c_q)\in\mathcal{C}(f)^\perp$ then, from \cref{prop: dual code in the first generic construction} we get that $(c_1,\dots,c_q)\in\mathcal{L}_1^\perp$ and $(c_1,\dots,c_q)\in\mathcal{L}_2^\perp$, which means that 
    $$\sum_{i=1}^q c_ix_i=0\,\,\text{and}\,\,\sum_{i=1}^qc_if(x_i)=0$$
    and so 
    $$\sum_{i=1}^q c_i(f(x_i)-x_i)=0.$$
    Now we distinguish the two cases.
    \begin{enumerate}
        \item If $f$ respects the scalar multiplication, then
        $$\frac{\epsilon\sqrt{p^*}^m}{p^m}\hat{\chi}_{g^*}(c_ix_i)=\zeta^{\text{Tr}_{q/p}(f(c_ix_i)-c_ix_i)}=\zeta^{\text{Tr}_{q/p}(c_i(f(x_i)-x_i))}.$$
        Hence, by considering the product over $i$ from $1$ to $q$, we obtain 
        $$\prod_{i=1}^q\frac{\epsilon\sqrt{p^*}^m}{p^m}\hat{\chi}_{g^*}(c_ix_i)=\zeta^{\text{Tr}_{q/p}(\sum_{i=1}^qc_i(f(x_i)-x_i))}=\zeta^0=1,$$
        which is equivalent to our thesis and also to the fact of the imaginary part of the product being $0$ because of the structure of the Walsh transform with a fixed coefficients multiplied by a $p$-th root of the unity ($p$ is always assumed to be odd).
        \item For a generic $f$ we have 
        $$\big(\frac{\epsilon\sqrt{p^*}^m}{p^m}\hat{\chi}_{g^*}(x_i)\big)^{c_i}=\big(\zeta^{\text{Tr}_{q/p}(f(x_i)-x_i)}\big)^{c_i}=\zeta^{\text{Tr}_{q/p}(c_i(f(x_i)-x_i))}.$$
        Hence, by considering the product over $i$ from $1$ to $q$, we obtain 
        $$\prod_{i=1}^q\big(\frac{\epsilon\sqrt{p^*}^m}{p^m}\hat{\chi}_{g^*}(x_i)\big)^{c_i}=\zeta^{\text{Tr}_{q/p}(\sum_{i=1}^qc_i(f(x_i)-x_i))}=\zeta^0=1,$$
        which is equivalent to our thesis and also to the one involving the imaginary part for the same reasons explained in point $1$.
    \end{enumerate}
\end{proof}
\begin{remark}
    Observe that if $f$ is linear the previous expression would simplify, however we can not suppose that in order to have a weakly regular bent function $g$. In fact, if $f$ is linear, then clearly $g$ would be linear and so
    $$g(x)=\text{Tr}_{q/p}(ax)$$
    for a suitable $a\in\mathbb{F}_q$. For this reason, computing the Walsh-Hadamard coefficients, we would get, for $b\in\mathbb{F}_q$:
    $$\hat{\chi}_g(b)=\sum_{x\in\mathbb{F}_q} \zeta^{\text{Tr}_{q/p}(ab)-\text{Tr}_{q/p}(bx)}=\sum_{x\in\mathbb{F}_q}\zeta^{\text{Tr}_{q/p}}(b(a-x)),$$
    which is equal to $0$ if $b\ne 0$ and to $q$ if $b=0$, for a well known result in character theory. Hence, clearly $g$ would not be weakly regular bent in this case.
\end{remark}
In a similar way, we can give also another necessary condition for a codeword to be in the dual code by considering the function 
$$\begin{aligned}
    g:\mathbb{F}_q &\longrightarrow \mathbb{F}_p\\ x &\longmapsto \text{Tr}_{q/p}(f(x)),
\end{aligned}$$
if it is as well weakly regular bent.
\begin{prop} \label{prop: second necessary condition in the first generic construction for weakly regular bent functions}
    Let $f:\mathbb{F}_q\rightarrow\mathbb{F}_q$ be a function such that the function $g=\text{Tr}_{q/p}(f)$ is weakly regular bent and consider $(c_1,\dots,c_q)\in\mathcal{C}(f)^\perp$. Then
    \begin{enumerate}
        \item if $f$ respects the scalar multiplication:
        $$\prod _{i=1}^q \hat{\chi}_{g^*}(c_ix_i)=\big(\frac{p^m}{\epsilon\sqrt{p^*}^m}\big )^q,$$
        or, equivalently, for an odd $p$,
        $$\mathfrak{Im}(\prod _{i=1}^q \hat{\chi}_{g^*}(c_ix_i))=0;$$
        \item for a generic $f$:
        $$\prod _{i=1}^q \big(\hat{\chi}_{g^*}(x_i)\big)^{c_i}=\prod_{i=1}^q\big(\frac{p^m}{\epsilon\sqrt{p^*}^m}\big )^{c_i},$$
        or, equivalently, for an odd $p$,
        $$\mathfrak{Im}(\prod _{i=1}^q \big(\hat{\chi}_{g^*}(x_i)\big)^{c_i})=0.$$
    \end{enumerate}
\end{prop}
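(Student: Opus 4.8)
The plan is to repeat, almost verbatim, the argument that proves \cref{prop: first necessary condition in the first generic construction for weakly regular bent functions}, the only structural change being that the auxiliary function now carries $\text{Tr}_{q/p}(f)$ instead of $\text{Tr}_{q/p}(f(x)-x)$. As a consequence I will not need to combine the two orthogonality relations: a single one will suffice. First I would invoke \cref{prop: dual code in the first generic construction}, which gives $\mathcal{C}(f)^\perp=\mathcal{L}_1^\perp\cap\mathcal{L}_2^\perp\cap\mathbb{F}_p^q$; hence from $(c_1,\dots,c_q)\in\mathcal{C}(f)^\perp$ I extract in particular the membership $(c_1,\dots,c_q)\in\mathcal{L}_2^\perp$, that is
$$\sum_{i=1}^q c_i f(x_i)=0,$$
which is precisely the relation matching $g=\text{Tr}_{q/p}(f)$. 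The relation $\sum_i c_i x_i=0$ coming from $\mathcal{L}_1^\perp$ will play no role here, which is exactly where this proof departs from (and simplifies) the previous one.

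For the first item, assume $f(\alpha x)=\alpha f(x)$ for $\alpha\in\mathbb{F}_p$. I would start from the weakly regular bent identity recalled before the statement, $\hat{\chi}_{g^*}(b)=\tfrac{\epsilon p^m}{\sqrt{p^*}^m}\zeta^{g(b)}$, evaluate it at $b=c_i x_i$, and use $g(c_i x_i)=\text{Tr}_{q/p}(f(c_i x_i))=\text{Tr}_{q/p}(c_i f(x_i))$, the middle equality being the scalar‑multiplication hypothesis (recall $c_i\in\mathbb{F}_p$). This yields
$$\frac{\sqrt{p^*}^m}{\epsilon p^m}\,\hat{\chi}_{g^*}(c_i x_i)=\zeta^{\text{Tr}_{q/p}(c_i f(x_i))}.$$
Taking the product over $i=1,\dots,q$ and using additivity of the trace together with $\sum_i c_i f(x_i)=0$ collapses the exponent to $\zeta^0=1$; rearranging, and noting that $\epsilon=\pm1$ gives $\epsilon=\epsilon^{-1}$, produces the claimed identity $\prod_i\hat{\chi}_{g^*}(c_i x_i)=\big(\tfrac{p^m}{\epsilon\sqrt{p^*}^m}\big)^q$.

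The second item is handled identically except that, lacking scalar multiplicativity, I would move the scalar $c_i$ into the exponent by raising the identity to the $c_i$-th power, namely $\big(\tfrac{\sqrt{p^*}^m}{\epsilon p^m}\hat{\chi}_{g^*}(x_i)\big)^{c_i}=\zeta^{\text{Tr}_{q/p}(c_i f(x_i))}$; the product over $i$ again telescopes to $1$ via $\sum_i c_i f(x_i)=0$, giving $\prod_i\big(\hat{\chi}_{g^*}(x_i)\big)^{c_i}=\prod_i\big(\tfrac{p^m}{\epsilon\sqrt{p^*}^m}\big)^{c_i}$. For the ``imaginary part'' reformulations I would argue as in the previous proof: each product equals a \emph{fixed} constant $\big(\tfrac{p^m}{\epsilon\sqrt{p^*}^m}\big)^{\bullet}$ times a $p$-th root of unity, and since $p$ is odd the unique real $p$-th root of unity is $1$; hence the product is real exactly when that root of unity equals $1$, i.e. exactly when the displayed equality holds.

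The single delicate point — and the only step that is not purely mechanical — is verifying that this constant is itself real, so that ``imaginary part zero'' is genuinely equivalent to the main identity. This uses the facts recalled in the preliminaries: when $m$ is even one has $\sqrt{p^*}^m=\pm p^{m/2}\in\mathbb{R}$, while when $m$ is odd one necessarily has $p\equiv1\pmod4$, whence $p^*=p$ and again $\sqrt{p^*}^m=p^{m/2}\in\mathbb{R}$. Thus in every admissible case $\tfrac{p^m}{\epsilon\sqrt{p^*}^m}$ is real, so all of its integer powers $\big(\tfrac{p^m}{\epsilon\sqrt{p^*}^m}\big)^q$ and $\big(\tfrac{p^m}{\epsilon\sqrt{p^*}^m}\big)^{\sum_i c_i}$ are real and the equivalence goes through; everything else is the verbatim analogue of the proof of \cref{prop: first necessary condition in the first generic construction for weakly regular bent functions}.
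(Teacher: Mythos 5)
Your proof is correct and follows essentially the same route as the paper's: extract the single relation $\sum_{i=1}^q c_i f(x_i)=0$ from the membership in $\mathcal{L}_2^\perp$ given by \ref{prop: dual code in the first generic construction}, rewrite $\hat{\chi}_{g^*}$ at the relevant points via the weakly-regular-bent identity, and telescope the product using the additivity of the trace. Your only addition is the explicit check that the constant $\frac{p^m}{\epsilon\sqrt{p^*}^m}$ is real (which is what makes the imaginary-part reformulation genuinely equivalent); the paper's proof asserts this equivalence by appealing to ``the structure of the Walsh transform'' without spelling that detail out.
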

\begin{proof}
    The proof is analogous to the one of \ref{prop: first necessary condition in the first generic construction for weakly regular bent functions}, if $(c_1,\dots,c_q)\in\mathcal{C}(f)^\perp$ then, from \cref{prop: dual code in the first generic construction} we get that $(c_1,\dots,c_q)\in\mathcal{L}_2^\perp$, which means that 
    $$\sum_{i=1}^qc_if(x_i)=0$$
    Now we distinguish the two cases.
    \begin{enumerate}
        \item If $f$ respects the scalar multiplication, then
        $$\frac{\epsilon\sqrt{p^*}^m}{p^m}\hat{\chi}_{g^*}(c_ix_i)=\zeta^{\text{Tr}_{q/p}(f(c_ix_i))}=\zeta^{\text{Tr}_{q/p}(c_if(x_i))}.$$
        Hence, by considering the product over $i$ from $1$ to $q$, we obtain 
        $$\prod_{i=1}^q\frac{\epsilon\sqrt{p^*}^m}{p^m}\hat{\chi}_{g^*}(c_ix_i)=\zeta^{\text{Tr}_{q/p}(\sum_{i=1}^qc_if(x_i))}=\zeta^0=1,$$
        which is equivalent to our thesis and also to the fact of the imaginary part of the product being $0$ again because of the structure of the Walsh transform with a fixed coefficients multiplied by a $p$-th root of the unity (again, $p$ is always assumed to be odd).
        \item For a generic $f$ we have 
        $$\big(\frac{\epsilon\sqrt{p^*}^m}{p^m}\hat{\chi}_{g^*}(x_i)\big)^{c_i}=\big(\zeta^{\text{Tr}_{q/p}(f(x_i))}\big)^{c_i}=\zeta^{\text{Tr}_{q/p}(c_if(x_i)
        )}.$$
        Hence, by considering the product over $i$ from $1$ to $q$, we obtain 
        $$\prod_{i=1}^q\big(\frac{\epsilon\sqrt{p^*}^m}{p^m}\hat{\chi}_{g^*}(x_i)\big)^{c_i}=\zeta^{\text{Tr}_{q/p}(\sum_{i=1}^qc_if(x_i))}=\zeta^0=1,$$
        which is equivalent to our thesis and also to the one involving the imaginary part for the same reasons explained in point $1$.
    \end{enumerate}
\end{proof}
\begin{remark}
    Observe that we can not consider $f$ to be linear or in a similar way the function $g=\text{Tr}_{q/p}$ in order to utilise the other equality 
    $$\sum_{i=1}^qc_ix_i=0,$$
    because in this case $g$ would be linear, and so it could not be weakly regular bent, as we observed in the previous remark.
\end{remark}
After considering a weakly regular bent function, now we would like to establish a connection between the dual code in the first generic construction and the Walsh transform in the general case. 
\\\\
Consider a function $f:\mathbb{F}_q\rightarrow\mathbb{F}_q$, $q=p^m$, which we will use to define the code in the first generic construction, and for every $i=1,\dots,q$ define the functions $g_i:\mathbb{F}_q\rightarrow\mathbb{F}_p$ such that $g_i(x_i)=\text{Tr}_{q/p}(f(x_i))$ and $g_i(x)=\text{Tr}_{q/p}(x)$ for $x\ne x_i$. Then we have the following necessary condition.
\begin{prop}\label{prop: first necessary condition in the first general construction for the general case}
    Let $f:\mathbb{F}_q\rightarrow\mathbb{F}_q$ be a function, $g_i:\mathbb{F}_q\rightarrow\mathbb{F}_p$ be the previously defined functions and consider $(c_1,\dots,c_q)\in\mathcal{C}(f)^\perp$. Then 
    $$\prod_{i=1}^q\big(\hat{\chi}_{g_i}(1)+1-q\big)^{c_i}=1.$$
\end{prop}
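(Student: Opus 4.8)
The plan is to compute $\hat{\chi}_{g_i}(1)$ in closed form, observe that the quantity $\hat{\chi}_{g_i}(1)+1-q$ collapses to a single root of unity, and then let the orthogonality conditions defining $\mathcal{C}(f)^\perp$ finish the argument. First I would unfold the definition of the Walsh transform at $\lambda=1$,
$$\hat{\chi}_{g_i}(1)=\sum_{x\in\mathbb{F}_q}\zeta^{g_i(x)-\text{Tr}_{q/p}(x)},$$
and split the sum according to whether $x=x_i$ or $x\neq x_i$. By construction $g_i(x)=\text{Tr}_{q/p}(x)$ for all $x\neq x_i$, so each of those $q-1$ terms has vanishing exponent and contributes $1$, while the single term $x=x_i$ contributes $\zeta^{\text{Tr}_{q/p}(f(x_i))-\text{Tr}_{q/p}(x_i)}=\zeta^{\text{Tr}_{q/p}(f(x_i)-x_i)}$ by linearity of the trace. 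Hence
$$\hat{\chi}_{g_i}(1)=(q-1)+\zeta^{\text{Tr}_{q/p}(f(x_i)-x_i)},\qquad\text{so}\qquad \hat{\chi}_{g_i}(1)+1-q=\zeta^{\text{Tr}_{q/p}(f(x_i)-x_i)}.$$

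Next I would raise each factor to the power $c_i$ and multiply over $i$. Since $\zeta$ is a $p$-th root of unity and each $c_i\in\mathbb{F}_p$, the exponent is well defined modulo $p$, and the product becomes a single exponential:
$$\prod_{i=1}^q\big(\hat{\chi}_{g_i}(1)+1-q\big)^{c_i}=\prod_{i=1}^q\zeta^{c_i\text{Tr}_{q/p}(f(x_i)-x_i)}=\zeta^{\sum_{i=1}^q c_i\text{Tr}_{q/p}(f(x_i)-x_i)}.$$
Because each $c_i$ lies in $\mathbb{F}_p$ and $\text{Tr}_{q/p}$ is $\mathbb{F}_p$-linear, I can pull the coefficients inside the trace and combine the terms, rewriting the exponent as $\text{Tr}_{q/p}\big(\sum_{i=1}^q c_i(f(x_i)-x_i)\big)$.

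The final step invokes the structure of the dual code. By \cref{prop: dual code in the first generic construction}, the hypothesis $(c_1,\dots,c_q)\in\mathcal{C}(f)^\perp$ forces $(c_1,\dots,c_q)\in\mathcal{L}_1^\perp\cap\mathcal{L}_2^\perp$, that is, $\sum_{i=1}^q c_ix_i=0$ and $\sum_{i=1}^q c_if(x_i)=0$ in $\mathbb{F}_q$; subtracting these gives $\sum_{i=1}^q c_i(f(x_i)-x_i)=0$. Applying the trace to $0$ makes the exponent vanish, so the whole product equals $\zeta^0=1$, as claimed. The argument is essentially mechanical once the key device is in place; the only substantive idea is the design of $g_i$, engineered precisely so that $\hat{\chi}_{g_i}(1)+1-q$ strips away the bulk contribution of the trace and isolates the single value $f(x_i)-x_i$. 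I therefore do not anticipate a genuine obstacle beyond careful $\mathbb{F}_p$-versus-$\mathbb{F}_q$ bookkeeping when moving the coefficients $c_i$ through the trace and through the exponents of $\zeta$.
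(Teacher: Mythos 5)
Your proposal is correct and follows essentially the same route as the paper's own proof: compute $\hat{\chi}_{g_i}(1)=q-1+\zeta^{\text{Tr}_{q/p}(f(x_i)-x_i)}$ so that $\hat{\chi}_{g_i}(1)+1-q$ isolates a single root of unity, then combine the exponents via $\mathbb{F}_p$-linearity of the trace and conclude from the identity $\sum_{i=1}^q c_i(f(x_i)-x_i)=0$ supplied by \cref{prop: dual code in the first generic construction}. The only difference is presentational: you spell out the splitting of the Walsh sum at $x=x_i$ versus $x\neq x_i$, which the paper states without detail.
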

\begin{proof}
    As we observed in the proof of \ref{prop: first necessary condition in the first generic construction for weakly regular bent functions}, from \ref{prop: dual code in the first generic construction} we have that 
$$\sum_{i=1}^q c_i(f(x_i)-x_i)=0.$$
If we compute the first Walsh-Hadamard coefficient of the function $g_i$ we get the following result:
$$\hat{\chi}_{g_i}(1)=\sum_{x\in\mathbb{F}_q}\zeta^{g_i(x)-\text{Tr}_{q/p}(x)}=q-1+\zeta^{\text{Tr}_{q/p}(f(x_i)-x_i)},$$
and so 
$$\big(\hat{\chi}_{g_i}(1)+1-q\big)^{c_i}=\zeta^{\text{Tr}_{q/p}(c_i(f(x_i)-x_i))}.$$
Hence we obtain our thesis:
$$\prod_{i=1}^q\big(\hat{\chi}_{g_i}(1)+1-q\big)^{c_i}=\zeta^{\text{Tr}_{q/p}(\sum_{i=1}^qc_i(f(x_i)-x_i))}=\zeta^0=1.$$
\end{proof}
In a similar way we can get two other necessary conditions. 
\begin{prop}\label{prop: second necessary condition in the first general construction for the general case}
    Let $f:\mathbb{F}_q\rightarrow\mathbb{F}_q$ be a function, $g_i:\mathbb{F}_q\rightarrow\mathbb{F}_p$ for $i=1,\dots,q$ such that $g_i(x_i)=\text{Tr}_{q/p}(f(x_i)+x_i)$ and $g_i(x)=\text{Tr}_{q/p}(x)$ for $x\ne x_i$ and consider $(c_1,\dots,c_q)\in\mathcal{C}(f)^\perp$. Then 
    $$\prod_{i=1}^q\big(\hat{\chi}_{g_i}(1)+1-q\big)^{c_i}=1.$$
\end{prop}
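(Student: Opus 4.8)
The plan is to mirror the argument of \ref{prop: first necessary condition in the first general construction for the general case} almost verbatim, tracking carefully how the modified definition of the $g_i$ propagates through the first Walsh--Hadamard coefficient. First I would invoke \ref{prop: dual code in the first generic construction}: since $(c_1,\dots,c_q)\in\mathcal{C}(f)^\perp=\mathcal{L}_1^\perp\cap\mathcal{L}_2^\perp\cap\mathbb{F}_p^q$, in particular the codeword lies in $\mathcal{L}_2^\perp$, which encodes the single linear relation
$$\sum_{i=1}^q c_i f(x_i)=0.$$
This is the only orthogonality relation I will need. In contrast to the first proposition, the relation coming from $\mathcal{L}_1^\perp$ (namely $\sum_i c_i x_i=0$) will not be required, because the extra term $+x_i$ inserted in the definition of $g_i(x_i)$ is precisely engineered to cancel the contribution of $x_i$.

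Next I would evaluate $\hat{\chi}_{g_i}(1)$. By definition $g_i$ agrees with $\text{Tr}_{q/p}$ at every point except $x_i$, so in the sum $\hat{\chi}_{g_i}(1)=\sum_{x\in\mathbb{F}_q}\zeta^{g_i(x)-\text{Tr}_{q/p}(x)}$ each of the $q-1$ summands with $x\ne x_i$ equals $\zeta^0=1$. For the remaining term I would compute the exponent at $x=x_i$ as
$$g_i(x_i)-\text{Tr}_{q/p}(x_i)=\text{Tr}_{q/p}(f(x_i)+x_i)-\text{Tr}_{q/p}(x_i)=\text{Tr}_{q/p}(f(x_i)),$$
where the two copies of $\text{Tr}_{q/p}(x_i)$ cancel by $\mathbb{F}_p$-linearity of the trace. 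Hence $\hat{\chi}_{g_i}(1)=q-1+\zeta^{\text{Tr}_{q/p}(f(x_i))}$, equivalently $\hat{\chi}_{g_i}(1)+1-q=\zeta^{\text{Tr}_{q/p}(f(x_i))}$, which isolates exactly the quantity appearing in the statement.

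Finally I would raise this identity to the power $c_i$ and take the product over $i$. Since $\zeta$ is a $p$-th root of unity and $c_i\in\mathbb{F}_p$, raising to $c_i$ pulls the factor $c_i$ inside the trace by $\mathbb{F}_p$-linearity, yielding
$$\prod_{i=1}^q\big(\hat{\chi}_{g_i}(1)+1-q\big)^{c_i}=\zeta^{\text{Tr}_{q/p}(\sum_{i=1}^q c_i f(x_i))}=\zeta^0=1,$$
where the last equality is exactly the $\mathcal{L}_2^\perp$-relation collected in the first step. I do not expect a genuine obstacle: the entire content is the bookkeeping observation that replacing $g_i(x_i)=\text{Tr}_{q/p}(f(x_i))$ by $g_i(x_i)=\text{Tr}_{q/p}(f(x_i)+x_i)$ shifts the isolated exponent from $\text{Tr}_{q/p}(f(x_i)-x_i)$ to $\text{Tr}_{q/p}(f(x_i))$, so that the telescoping now calls only on orthogonality to $l_2$ rather than on orthogonality to both $l_1$ and $l_2$.
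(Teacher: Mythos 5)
Your proposal is correct and follows the paper's proof essentially verbatim: invoke Proposition \ref{prop: dual code in the first generic construction} to extract the single relation $\sum_{i=1}^q c_i f(x_i)=0$ from membership in $\mathcal{L}_2^\perp$, compute $\hat{\chi}_{g_i}(1)=q-1+\zeta^{\text{Tr}_{q/p}(f(x_i))}$ (the inserted $+x_i$ cancelling the $-\text{Tr}_{q/p}(x_i)$ in the transform), and telescope the product of $c_i$-th powers into $\zeta^{\text{Tr}_{q/p}(\sum_i c_i f(x_i))}=1$. Your explicit remark that only the $\mathcal{L}_2^\perp$ relation is needed (unlike the first proposition, which uses both) is exactly the point of this variant, and the paper's proof does the same.
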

\begin{proof}
    From \ref{prop: dual code in the first generic construction} we have that 
$$\sum_{i=1}^q c_if(x_i)=0.$$
If we compute the first Walsh-Hadamard coefficient of the function $g_i$ we get the following result:
$$\hat{\chi}_{g_i}(1)=\sum_{x\in\mathbb{F}_q}\zeta^{g_i(x)-\text{Tr}_{q/p}(x)}=q-1+\zeta^{\text{Tr}_{q/p}(f(x_i) },$$
and so 
$$\big(\hat{\chi}_{g_i}(1)+1-q\big)^{c_i}=\zeta^{\text{Tr}_{q/p}(c_if(x_i))}.$$
Hence we obtain our thesis:
$$\prod_{i=1}^q\big(\hat{\chi}_{g_i}(1)+1-q\big)^{c_i}=\zeta^{\text{Tr}_{q/p}(\sum_{i=1}^qc_if(x_i))}=\zeta^0=1.$$
\end{proof}
If we consider other functions $g_i$ we obtain the following.
\begin{prop}\label{prop: third necessary condition in the first general construction for the general case}
    Let $f:\mathbb{F}_q\rightarrow\mathbb{F}_q$ be a function, $g_i:\mathbb{F}_q\rightarrow\mathbb{F}_p$ for $i=1,\dots,q$ such that $g_i(x_i)=\text{Tr}_{q/p}(2x_i)$ and $g_i(x)=\text{Tr}_{q/p}(x)$ for $x\ne x_i$ and consider $(c_1,\dots,c_q)\in\mathcal{C}(f)^\perp$. Then 
    $$\prod_{i=1}^q\big(\hat{\chi}_{g_i}(1)+1-q\big)^{c_i}=1.$$
\end{prop}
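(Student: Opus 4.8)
The plan is to reproduce verbatim the mechanism of the two preceding propositions \ref{prop: first necessary condition in the first general construction for the general case} and \ref{prop: second necessary condition in the first general construction for the general case}, the only change being which of the two linear constraints furnished by \ref{prop: dual code in the first generic construction} one invokes. First I would record that, since $(c_1,\dots,c_q)\in\mathcal{C}(f)^\perp$, \ref{prop: dual code in the first generic construction} gives $(c_1,\dots,c_q)\in\mathcal{L}_1^\perp\cap\mathcal{L}_2^\perp\cap\mathbb{F}_p^q$; in particular membership in $\mathcal{L}_1^\perp$ (the dual of the code generated by $l_1=(x_1,\dots,x_q)$) yields $\sum_{i=1}^q c_i x_i=0$. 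This is the single algebraic identity on which the whole argument will rest.

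Next I would compute the first Walsh--Hadamard coefficient $\hat{\chi}_{g_i}(1)$. The point is that $g_i$ agrees with $\text{Tr}_{q/p}$ everywhere except at the one point $x_i$, so in the sum $\sum_{x\in\mathbb{F}_q}\zeta^{g_i(x)-\text{Tr}_{q/p}(x)}$ all $q-1$ terms with $x\neq x_i$ contribute $\zeta^0=1$, while the term at $x=x_i$ contributes $\zeta^{\text{Tr}_{q/p}(2x_i)-\text{Tr}_{q/p}(x_i)}=\zeta^{\text{Tr}_{q/p}(x_i)}$. Hence $\hat{\chi}_{g_i}(1)=q-1+\zeta^{\text{Tr}_{q/p}(x_i)}$, and therefore $\hat{\chi}_{g_i}(1)+1-q=\zeta^{\text{Tr}_{q/p}(x_i)}$, an expression depending only on $x_i$ (which is precisely why the relevant constraint here is the one coming from $\mathcal{L}_1^\perp$ rather than $\mathcal{L}_2^\perp$).

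Finally I would raise this to the $c_i$-th power and multiply over $i$. Since $c_i\in\mathbb{F}_p$ and the trace is $\mathbb{F}_p$-linear, $\big(\hat{\chi}_{g_i}(1)+1-q\big)^{c_i}=\zeta^{c_i\,\text{Tr}_{q/p}(x_i)}=\zeta^{\text{Tr}_{q/p}(c_i x_i)}$, so the product collapses into a single root of unity, $\prod_{i=1}^q\big(\hat{\chi}_{g_i}(1)+1-q\big)^{c_i}=\zeta^{\text{Tr}_{q/p}(\sum_{i=1}^q c_i x_i)}$. Substituting $\sum_i c_i x_i=0$ from the first step gives $\zeta^0=1$, which is the claim. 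I expect no genuine obstacle: the argument is entirely formal once one notices that the factor $2x_i$ in the definition of $g_i$, after subtracting $\text{Tr}_{q/p}(x_i)$ inside the Walsh transform, reduces to $x_i$ and hence routes the proof through $\mathcal{L}_1^\perp$; the only point requiring care is pulling the exponent $c_i$ inside the trace via $\mathbb{F}_p$-linearity.
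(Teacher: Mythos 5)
Your proof is correct and follows essentially the same route as the paper's: extract $\sum_{i=1}^q c_i x_i=0$ from membership in $\mathcal{L}_1^\perp$ via Proposition \ref{prop: dual code in the first generic construction}, compute $\hat{\chi}_{g_i}(1)=q-1+\zeta^{\text{Tr}_{q/p}(x_i)}$, and collapse the product into $\zeta^{\text{Tr}_{q/p}(\sum_i c_i x_i)}=1$. In fact you spell out two details the paper leaves implicit (the cancellation $\text{Tr}_{q/p}(2x_i)-\text{Tr}_{q/p}(x_i)=\text{Tr}_{q/p}(x_i)$ and the use of $\mathbb{F}_p$-linearity of the trace to move $c_i$ inside), so nothing is missing.
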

\begin{proof}
    From \ref{prop: dual code in the first generic construction} we have that 
$$\sum_{i=1}^q c_ix_i=0.$$
If we compute the first Walsh-Hadamard coefficient of the function $g_i$ we get the following result:
$$\hat{\chi}_{g_i}(1)=\sum_{x\in\mathbb{F}_q}\zeta^{g_i(x)-\text{Tr}_{q/p}(x)}=q-1+\zeta^{\text{Tr}_{q/p}(x_i)},$$
and so 
$$\big(\hat{\chi}_{g_i}(1)+1-q\big)^{c_i}=\zeta^{\text{Tr}_{q/p}(c_ix_i)}.$$
Hence we obtain our thesis:
$$\prod_{i=1}^q\big(\hat{\chi}_{g_i}(1)+1-q\big)^{c_i}=\zeta^{\text{Tr}_{q/p}(\sum_{i=1}^qc_ix_i)}=\zeta^0=1.$$
\end{proof}
From these necessary conditions we can obtain the following restriction on the computation of the dual and also a precise computation when the dimension of the code is equal to one, using character theory.
\begin{prop}\label{prop: dual of the first generic construction and character theory}
    Consider the functions $g_i:\mathbb{F}_q\rightarrow\mathbb{F}_p$ of $\ref{prop: first necessary condition in the first general construction for the general case}$, \ref{prop: second necessary condition in the first general construction for the general case} or \ref{prop: third necessary condition in the first general construction for the general case}, then the function
    $$\begin{aligned}
    \varphi:\mathbb{F}_p^q &\longrightarrow \mathbb{C}\\ (c_1,\dots,c_q) &\longmapsto \prod_{i=1}^q\big(\hat{\chi}_{g_i}(1)+1-q\big)^{c_i}.
\end{aligned}$$
is an additive character of $\mathbb{F}_{p^q}$ after the identification of $\mathbb{F}_p^q$ with $\mathbb{F}_{p^q}$ such that 
$$\mathcal{C}(f)^\perp\subseteq\text{ker}(\varphi).$$
In particular, if $\text{dim}(\mathcal{C}(f))=1$, then
$$\mathcal{C}(f)^\perp=\text{ker}(\varphi).$$
\end{prop}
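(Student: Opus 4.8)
The plan is to first recognise $\varphi$ as a standard additive character and then to conclude by a dimension count. The key starting observation, already isolated inside the proofs of Propositions \ref{prop: first necessary condition in the first general construction for the general case}, \ref{prop: second necessary condition in the first general construction for the general case} and \ref{prop: third necessary condition in the first general construction for the general case}, is that for each of the three admissible families of functions $g_i$ the quantity $\hat{\chi}_{g_i}(1)+1-q$ is not an arbitrary complex number but a single $p$-th root of unity. Indeed, the computation of the first Walsh--Hadamard coefficient in each case yields $\hat{\chi}_{g_i}(1)=q-1+\zeta^{s_i}$, where $s_i\in\mathbb{F}_p$ equals $\text{Tr}_{q/p}(f(x_i)-x_i)$, $\text{Tr}_{q/p}(f(x_i))$ or $\text{Tr}_{q/p}(x_i)$ according to the family chosen, so that $\hat{\chi}_{g_i}(1)+1-q=\zeta^{s_i}$.

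Substituting this into the definition of $\varphi$, I would rewrite
\[
\varphi(c_1,\dots,c_q)=\prod_{i=1}^q \zeta^{s_i c_i}=\zeta^{\sum_{i=1}^q s_i c_i}.
\]
This is exactly the map $\mathbf{c}\mapsto\zeta^{\mathbf{s}\cdot\mathbf{c}}$ attached to the vector $\mathbf{s}=(s_1,\dots,s_q)\in\mathbb{F}_p^q$; it is manifestly a homomorphism from $(\mathbb{F}_p^q,+)$ to $\mathbb{C}^\ast$, since $\varphi(\mathbf{c}+\mathbf{c}')=\varphi(\mathbf{c})\varphi(\mathbf{c}')$, and under the identification of $\mathbb{F}_p^q$ with $\mathbb{F}_{p^q}$ via the non-degenerate trace bilinear form it coincides with one of the standard additive characters $x\mapsto\zeta^{\text{Tr}_{p^q/p}(\beta x)}$ of $\mathbb{F}_{p^q}$. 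The inclusion $\mathcal{C}(f)^\perp\subseteq\ker(\varphi)$ is then immediate, because the conclusion of the three cited propositions is precisely that every $(c_1,\dots,c_q)\in\mathcal{C}(f)^\perp$ satisfies $\varphi(c_1,\dots,c_q)=\prod_{i=1}^q(\hat{\chi}_{g_i}(1)+1-q)^{c_i}=1$.

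For the last assertion I would argue by dimensions. If $\dim(\mathcal{C}(f))=1$, then $\dim(\mathcal{C}(f)^\perp)=q-1$. On the other hand, the image of a nontrivial additive character is a subgroup of the group $\mu_p$ of $p$-th roots of unity, hence all of $\mu_p$, so its kernel has index $p$ and is therefore an $\mathbb{F}_p$-hyperplane of $\mathbb{F}_p^q$, of dimension $q-1$. Since $\mathcal{C}(f)^\perp\subseteq\ker(\varphi)$ and both spaces have the same dimension $q-1$, they must coincide, which gives $\mathcal{C}(f)^\perp=\ker(\varphi)$.

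The step requiring genuine care, and the main obstacle, is ensuring that $\varphi$ is nontrivial, i.e.\ that $\mathbf{s}\neq 0$; without this the kernel would be all of $\mathbb{F}_p^q$ and the final equality would fail. Here it is useful to note that in each of the three cases $\mathbf{s}$ is itself a codeword of $\mathcal{C}(f)$, namely the coordinate vector of $(\text{Tr}_{q/p}(f(x)-x))_x$, of $(\text{Tr}_{q/p}(f(x)))_x$ or of $(\text{Tr}_{q/p}(x))_x$ respectively, so nontriviality of $\varphi$ amounts to the nonvanishing of this codeword. For the third family this is automatic, since $\mathbf{s}=(\text{Tr}_{q/p}(x_i))_i$ and the absolute trace $\text{Tr}_{q/p}$ is surjective, hence nonzero on some field element; I would therefore secure the nontriviality from the hypothesis $\dim(\mathcal{C}(f))=1$ (selecting, when needed, a family for which the associated codeword $\mathbf{s}$ does not vanish), after which the dimension count above closes the proof.
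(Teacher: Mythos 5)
Your proof is correct and follows essentially the same route as the paper's: you identify each factor as $\hat{\chi}_{g_i}(1)+1-q=\zeta^{s_i}$, so that $\varphi(\mathbf{c})=\zeta^{\mathbf{s}\cdot\mathbf{c}}$ is a standard additive character under the trace identification of $\mathbb{F}_p^q$ with $\mathbb{F}_{p^q}$, you deduce the inclusion $\mathcal{C}(f)^\perp\subseteq\ker(\varphi)$ directly from the three necessary conditions, and you settle the case $\dim(\mathcal{C}(f))=1$ by a dimension count, $\ker(\varphi)$ being a hyperplane of dimension $q-1$.

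The one place where you go beyond the paper is the nontriviality of $\varphi$, and you are right that this is the delicate step. The paper's proof identifies $\varphi$ with $\psi(x)=\zeta^{\mathrm{Tr}_{p^q/p}(ax)}$ and writes $\ker(\psi)=a^{-1}\ker(\mathrm{Tr}_{p^q/p})$ for ``a suitable $a$'', which tacitly presupposes $a\neq 0$; this is not automatic. For the family of Proposition \ref{prop: first necessary condition in the first general construction for the general case} one has $\mathbf{s}=(\mathrm{Tr}_{q/p}(f(x_i)-x_i))_i$, which vanishes identically when $f=\mathrm{id}$, and for the family of Proposition \ref{prop: second necessary condition in the first general construction for the general case} one has $\mathbf{s}=(\mathrm{Tr}_{q/p}(f(x_i)))_i$, which vanishes identically when $f\equiv 0$; taking $q=p$ in either example gives $\dim(\mathcal{C}(f))=1$ while $\varphi$ is the trivial character, so that $\ker(\varphi)=\mathbb{F}_p^q\supsetneq\mathcal{C}(f)^\perp$ and the claimed equality fails. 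Only the third family, $\mathbf{s}=(\mathrm{Tr}_{q/p}(x_i))_i$, is unconditionally nonzero, as you observe via surjectivity of the trace. So your caveat of selecting a family with $\mathbf{s}\neq 0$ is not mere prudence: it is necessary for the final equality, and it repairs a gap present both in the paper's own proof and, read literally (as a claim for all three families), in the statement itself.
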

\begin{proof}
    The fact that $\varphi$ is an additive character of $\mathbb{F}_{p^q}$ after the identification of the two linear spaces is a straightforward computation, then the inclusion 
    $$\mathcal{C}(f)^\perp\subseteq\text{ker}(\varphi)$$
    derives from the necessary conditions expressed in $\ref{prop: first necessary condition in the first general construction for the general case}$, \ref{prop: second necessary condition in the first general construction for the general case} and \ref{prop: third necessary condition in the first general construction for the general case}.
    \\\\
    Thanks to character theory we can say more, in fact we can identify $\varphi$ with an additive character 
    $$\begin{aligned}
    \psi:\mathbb{F}_{p^q} &\longrightarrow \mathbb{C}\\ x &\longmapsto \zeta^{\text{Tr}_{p^q/p}(ax)},
\end{aligned}$$
for a suitable $a\in\mathbb{F}_{p^q}$. Then 
$$\text{ker}(\psi)=a^{-1}\text{ker}(\text{Tr}_{p^q/p}),$$
which is a linear subspace of dimension $q-1$ over $\mathbb{F}_p$. For this reason, if the code $\mathcal{C}(f)$ is $1$-dimensional, then the inclusion is actually an equality.
\end{proof}
\begin{remark}
    The proof of \ref{prop: dual of the first generic construction and character theory} gives us an explicit construction of the kernel of $\varphi$, which can help in the construction of the dual code, especially because we already know the structure of the kernel of the trace map (its element are all of the shape $\alpha^p-\alpha$ for a suitable $\alpha\in\mathbb{F}_{p^q}$, see for instance \cite{lidl1994introduction}).
\end{remark}

\subsection{A practical computation}
If we apply the necessary conditions presented in the previous two sections to actually compute the dual codes in the two generic constructions, what we find out is that, as a bound, we get a linear code of codimension one in which the dual code is contained. 
\\\\
For some practical examples, consider the following table of the known univariate weakly regular bent functions over $\mathbb{F}_{p^m}$, $p$ odd, presented in \cite{WeaklyFunct}:
$$\begin{tabular}{|c|c|c|c|}
\hline $\begin{array}{c}\text { (Weakly regular) bent } \\
\text { function }\end{array}$ & $m$ & $p$ & Reference \\
\hline$\sum_{i=0}^{[m / 2]} T r_1^m\left(c_1 x^{p^x+1}\right)$ & any & any & $\begin{array}{c}{\cite{28, 34, 37}} \\
\text { etc. }\end{array}$ \\
\hline $\begin{array}{c}\sum_{i=0}^{p^k-1} \operatorname{Tr}_1^m\left(c_i x^{i\left(p^k-1\right)}\right)+ \\
\operatorname{Tr}_1^{l}\left(\epsilon x^{\frac{p^m-1}{\epsilon}}\right)\end{array}$ & $m=2 k$ & any & $\begin{array}{c}{\cite{28,33,36}} \\
\text { etc. }\end{array}$ \\
\hline $\operatorname{Tr}_1^m\left(c x^{\frac{3 m-1}{4}}+3^k+1\right)$ & $m=2 k$ & $p=3$ & \cite{27} \\
\hline $\operatorname{Tr}_1^m\left(x^{p^{3 k}+p^{2 k}-p^k+1}+x^2\right)$ & $m=4 k$ & any & {\cite{29}} \\
\hline $\begin{array}{l}\operatorname{Tr}_1^m\left(c x^{\frac{3^i+1}{2}}\right.), i \text { odd } \\
\operatorname{gcd}(i, m)=1\end{array}$ & any & $p=3$ & {\cite{18}} \\
\hline
\end{tabular}$$
For example, we can consider the last function of the table with $p=3, m=2, c=1$ and $i=3$; then we get the function
$$\text{Tr}_1^2(x^6).$$
We could use \ref{prop: first necessary condition in the first generic construction for weakly regular bent functions} in order to study the dual code of the linear code obtained via the first generic construction with the function
$$f(x)=x^6.$$
With the software MAGMA we conclude that the dual code is contained in the linear code described by the following parity check equation
$$X_2+2X_4+X_6+2X_8=0,$$
which represents the necessary condition given by the theorem.

\subsection{The Hull code}

Thanks to the study of the dual code in the first generic construction, we are able to characterize the Hull code in this case in the following explicit way. 

\begin{prop}\label{prop: hull code in the first generic construction}
    Let $f$ be a polynomial from $\mathbb{F}_q$ to $\mathbb{F}_q$, $\mathcal{C}(f)$ be the code built with the first generic construction, $l_1=(x_1,\dots,x_q)$, $l_2=(f(x_1),\dots,f(x_q))$, $\mathcal{L}_1$ be the code generated by $l_1$ and $\mathcal{L}_2$ the code generated by $l_2$ over $\mathbb{F}_q$. Then 
$$\text{Hull}(C(f))= \mathcal{L}_1^\perp\cap \mathcal{L}_2^\perp\cap C(f).$$
\end{prop}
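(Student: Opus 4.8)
The plan is to unwind the definition of the Hull code and then substitute the explicit description of the dual established in \ref{prop: dual code in the first generic construction}. Recall that the Hull of a linear code $\mathcal{C}$ is by definition the intersection of $\mathcal{C}$ with its Euclidean dual, namely $\text{Hull}(\mathcal{C}) = \mathcal{C} \cap \mathcal{C}^\perp$. Applying this to $\mathcal{C} = \mathcal{C}(f)$, the entire task reduces to rewriting $\mathcal{C}(f)^\perp$ by means of the characterization already proved, so that no new orthogonality computation is needed.

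First I would invoke \ref{prop: dual code in the first generic construction}, which gives
$$\mathcal{C}(f)^\perp = \mathcal{L}_1^\perp \cap \mathcal{L}_2^\perp \cap \mathbb{F}_p^q.$$
Substituting this into the definition of the Hull yields
$$\text{Hull}(\mathcal{C}(f)) = \mathcal{C}(f) \cap \big(\mathcal{L}_1^\perp \cap \mathcal{L}_2^\perp \cap \mathbb{F}_p^q\big).$$

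The key observation, and really the only point that requires a word of justification, is that the factor $\mathbb{F}_p^q$ becomes redundant once we intersect with $\mathcal{C}(f)$. Indeed, by construction $\mathcal{C}(f)$ is a linear code over $\mathbb{F}_p$ of length $q$, so every codeword of $\mathcal{C}(f)$ already lies in $\mathbb{F}_p^q$; that is, $\mathcal{C}(f) \subseteq \mathbb{F}_p^q$ and hence $\mathcal{C}(f) \cap \mathbb{F}_p^q = \mathcal{C}(f)$. Using associativity and commutativity of intersection, this collapses the expression to
$$\text{Hull}(\mathcal{C}(f)) = \mathcal{L}_1^\perp \cap \mathcal{L}_2^\perp \cap \mathcal{C}(f),$$
which is exactly the claimed identity.

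I do not expect any genuine obstacle here: the statement is an immediate corollary of the dual-code description, with the conceptual content lying entirely in \ref{prop: dual code in the first generic construction}. The present proposition merely specializes that result by intersecting the dual back with the code itself, and the containment $\mathcal{C}(f) \subseteq \mathbb{F}_p^q$ — the one fact invoked — is immediate from the definition of the first generic construction. Thus the proof is a short, purely formal manipulation rather than a substantive argument.
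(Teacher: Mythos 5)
Your proposal is correct and follows exactly the paper's own argument: apply the definition $\text{Hull}(\mathcal{C}(f)) = \mathcal{C}(f) \cap \mathcal{C}(f)^\perp$, substitute the dual-code characterization from \ref{prop: dual code in the first generic construction}, and absorb the factor $\mathbb{F}_p^q$ using $\mathcal{C}(f) \subseteq \mathbb{F}_p^q$. The only difference is that you spell out the absorption step explicitly, which the paper leaves implicit.
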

\begin{proof}
    The assertion comes directly from theorem \ref{prop: dual code in the first generic construction}, in fact 
    $$\text{Hull}(C(f))=C(f)^\perp\cap C(f)= \mathcal{L}_1^\perp\cap \mathcal{L}_2^\perp\cap\mathbb{F}_p^q\cap C(f)=\mathcal{L}_1^\perp\cap \mathcal{L}_2^\perp\cap C(f).$$
\end{proof}
For this reason, we can characterize the Hull code also in the following way.
\begin{prop}\label{prop: hull code in the first generic construction via kernel}
    Let $f$ be à function from $\mathbb{F}_q$ to $\mathbb{F}_q$ and $\mathcal{C}(f)$ be the code built with the first generic construction. Define 
    $$\textbf{c}_{\alpha,\beta}:=(\text{Tr}_{q/p}(\alpha f(x)+\beta x))_{x\in\mathbb{F}_q}$$
    and consider the following linear mapping
    $$\begin{aligned}
    \varphi:C(f) &\longrightarrow \mathbb{F}_q^2\\ \textbf{c}_{\alpha,\beta} &\longmapsto (\sum_{i=1}^q \text{Tr}_{q/p}(\alpha x_i+\beta x_i)x_i,\sum_{i=1}^q \text{Tr}_{q/p}(\alpha x_i+\beta x_i)f(x_i)).
\end{aligned}$$
Then 
$$\text{Hull}(C(f))=\text{ker}(\varphi)$$
and in particular 
$$\text{dim}(\text{Hull}(C(f)))=l\,\,\iff \,\, \text{rk}(\varphi)=\text{dim}(C(f))-l.$$
\end{prop}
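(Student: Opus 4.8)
The plan is to lean entirely on Proposition~\ref{prop: hull code in the first generic construction}, which already gives $\text{Hull}(C(f)) = \mathcal{L}_1^\perp \cap \mathcal{L}_2^\perp \cap C(f)$, and then to recognise the two defining orthogonality conditions of $\mathcal{L}_1^\perp$ and $\mathcal{L}_2^\perp$ as precisely the two components of $\varphi$. Thus no new structural input is needed beyond the description of the dual already established.

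First I would unwind the duals of the one-generator codes. Since $\mathcal{L}_1$ is the $\mathbb{F}_q$-span of $l_1 = (x_1, \dots, x_q)$, a vector $\mathbf{c} = (c_1, \dots, c_q)$ lies in $\mathcal{L}_1^\perp$ if and only if $\sum_{i=1}^q c_i (\lambda x_i) = \lambda \sum_{i=1}^q c_i x_i = 0$ for every $\lambda \in \mathbb{F}_q$, which is equivalent to the single equation $\sum_{i=1}^q c_i x_i = 0$; likewise $\mathbf{c} \in \mathcal{L}_2^\perp$ if and only if $\sum_{i=1}^q c_i f(x_i) = 0$. Writing $c_i = \text{Tr}_{q/p}(\alpha f(x_i) + \beta x_i)$ for the coordinates of the codeword $\mathbf{c}_{\alpha,\beta}$, the map is $\varphi(\mathbf{c}_{\alpha,\beta}) = \big(\sum_{i=1}^q c_i x_i, \sum_{i=1}^q c_i f(x_i)\big)$, so a codeword lies in $\ker(\varphi)$ exactly when both components vanish, i.e. exactly when it belongs to $\mathcal{L}_1^\perp \cap \mathcal{L}_2^\perp$. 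As $\varphi$ is defined on $C(f)$, combining this with Proposition~\ref{prop: hull code in the first generic construction} yields $\ker(\varphi) = \mathcal{L}_1^\perp \cap \mathcal{L}_2^\perp \cap C(f) = \text{Hull}(C(f))$, the first assertion.

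For the dimension formula I would first verify that $\varphi$ is $\mathbb{F}_p$-linear: each component of $\varphi(\mathbf{c})$ is an $\mathbb{F}_p$-linear combination of the coordinates $c_i \in \mathbb{F}_p$, and multiplication of the fixed scalars $x_i, f(x_i) \in \mathbb{F}_q$ by elements of $\mathbb{F}_p$ distributes over addition, so $\varphi(\lambda \mathbf{c} + \mu \mathbf{c}') = \lambda \varphi(\mathbf{c}) + \mu \varphi(\mathbf{c}')$ for $\lambda, \mu \in \mathbb{F}_p$. Viewing the codomain $\mathbb{F}_q^2$ as an $\mathbb{F}_p$-vector space, the rank--nullity theorem then gives $\dim_{\mathbb{F}_p}(\ker \varphi) + \text{rk}(\varphi) = \dim_{\mathbb{F}_p}(C(f))$. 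Since $\ker(\varphi) = \text{Hull}(C(f))$, it follows that $\dim(\text{Hull}(C(f))) = l$ if and only if $\text{rk}(\varphi) = \dim(C(f)) - l$.

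The computation is essentially bookkeeping, so no step is genuinely hard; the only point demanding care is the interpretation of the codomain. Because $\mathbb{F}_q^2$ carries a natural $\mathbb{F}_q$-structure whereas the domain $C(f)$ is only an $\mathbb{F}_p$-space, the rank--nullity count must be carried out over $\mathbb{F}_p$. One should also confirm that $\varphi$ is well defined on codewords, that is, depends only on the coordinate vector $(c_i)$ and not on a choice of representative pair $(\alpha, \beta)$; this is immediate once the coordinates are written as $c_i = \text{Tr}_{q/p}(\alpha f(x_i) + \beta x_i)$, since the formula for $\varphi$ involves only the $c_i$.
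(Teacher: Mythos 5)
Your proof is correct and takes essentially the same route as the paper's: both rest on Proposition~\ref{prop: hull code in the first generic construction}, identify $\ker(\varphi)$ with $\mathcal{L}_1^\perp\cap\mathcal{L}_2^\perp\cap C(f)$ via the vanishing of the two component sums, and get the dimension statement from rank--nullity over $\mathbb{F}_p$. Your extra checks (the one-generator description of $\mathcal{L}_1^\perp$, $\mathcal{L}_2^\perp$, the $\mathbb{F}_p$-linearity, and the well-definedness of $\varphi$ on codewords) simply make explicit what the paper dismisses as a ``quick computation.''
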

\begin{proof}
    A quick computation shows that the mapping is actually linear and from \ref{prop: hull code in the first generic construction}, we have that $\textbf{c}_{\alpha,\beta}\in\text{Hull}(C(f))$ if and only if
    $$\begin{cases}
    
       \sum_{i=1}^q \text{Tr}_{q/p}(\alpha x_i+\beta x_i)x_i=0 \\
  \sum_{i=1}^q \text{Tr}_{q/p}(\alpha x_i+\beta x_i)f(x_i)=0 
\end{cases},$$
so this let us conclude that $\text{Hull}(C(f))=\text{ker}(\varphi)$ and also the assertion on the dimension of the Hull code. Notice that when the dimension of $C(f)$ is maximal, hence it is equal to $2m$ (which happens often, for details see \cite{WeightWalshTransform}), then 
$$\text{dim}(\text{Hull}(C(f)))=l\,\,\iff \,\, \text{rk}(\varphi)=2m-l.$$
\end{proof}
Considering that a codeword in the Hull code is actually in the dual as well, we can restate all the proposition in section 4 to obtain a necessary condition for a codeword in $C(f)$ to be in the Hull; as an example, we propose the first proposition, the others can be obtained in the same way. 
\\\\
Consider a function $f:\mathbb{F}_q\rightarrow \mathbb{F}_q$, $q=p^m$, which we will use the define the code in the first generic construction, with $p$ odd, and define the following function 
$$\begin{aligned}
    g:\mathbb{F}_q &\longrightarrow \mathbb{F}_p\\ x &\longmapsto \text{Tr}_{q/p}(f(x)-x).
\end{aligned}$$
If we suppose that $g$ is weakly regular bent, then there exists another function $g^*:\mathbb{F}_q\rightarrow\mathbb{F}_p$ such that, for $b\in\mathbb{F}_q$:
$$\hat{\chi}_{g}(b)=\epsilon \sqrt{p^*}^m\zeta^{g^*(b)}$$
and 
$$\hat{\chi}_{g^*}(b)=\frac{\epsilon p^m}{\sqrt{p^*}^m}\zeta^{g(x)},$$
where $\epsilon=\pm 1$ is the sign of the Walsh transform of $f(x)$, $p^*=(\frac{-1}{p})p$ and $\zeta=e^{\frac{2\pi i}{p}}$ is the primitive $p$-th root of unity. In this case, with the same notation, fixing an enumeration $(x_i)_{i=1,\dots,q}$ in $\mathbb{F}_q$, we have the following necessary condition. 
\begin{prop}\label{prop: hull first necessary condition in the first generic construction for weakly regular bent functions}
    Let $f:\mathbb{F}_q\rightarrow\mathbb{F}_q$ be a function such that the previously defined function $g$ is weakly regular bent and consider $\textbf{c}_{\alpha,\beta}\in\mathcal{C}(f)$. If $\textbf{c}_{\alpha,\beta}\in\text{Hull}(C(f))$ then
    \begin{enumerate}
        \item if $f$ respects the scalar multiplication:
        $$\prod _{i=1}^q \hat{\chi}_{g^*}(\text{Tr}_{q/p}(\alpha f(x_i)+\beta x_i)x_i)=\big(\frac{p^m}{\epsilon\sqrt{p^*}^m}\big )^q,$$
        or, equivalently,
        $$\mathfrak{Im}(\prod _{i=1}^q \hat{\chi}_{g^*}(\text{Tr}_{q/p}(\alpha f(x_i)+\beta x_i)x_i))=0;$$
        \item for a generic $f$:
        $$\prod _{i=1}^q \big(\hat{\chi}_{g^*}(x_i)\big)^{\text{Tr}_{q/p}(\alpha f(x_i)+\beta x_i)}=\prod_{i=1}^q\big(\frac{p^m}{\epsilon\sqrt{p^*}^m}\big )^{\text{Tr}_{q/p}(\alpha f(x_i)+\beta x_i)},$$
        or, equivalently, 
        $$\mathfrak{Im}(\prod _{i=1}^q \big(\hat{\chi}_{g^*}(x_i)\big)^{\text{Tr}_{q/p}(\alpha f(x_i)+\beta x_i)})=0.$$
    \end{enumerate}
\end{prop}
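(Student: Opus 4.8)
The plan is to combine the Hull characterization from Proposition \ref{prop: hull code in the first generic construction} with the weakly regular bent machinery already deployed in Proposition \ref{prop: first necessary condition in the first generic construction for weakly regular bent functions}. First I would observe that the only novelty here, compared to the earlier dual-code proposition, is the concrete shape of the codeword coordinates: a codeword $\textbf{c}_{\alpha,\beta}\in\mathcal{C}(f)$ has its $i$-th coordinate equal to $\text{Tr}_{q/p}(\alpha f(x_i)+\beta x_i)$. So the strategy is simply to play the role of the abstract coefficients $c_i$ in the dual-code argument with these explicit coordinate values. Concretely, since $\textbf{c}_{\alpha,\beta}\in\text{Hull}(C(f))$ implies $\textbf{c}_{\alpha,\beta}\in\mathcal{C}(f)^\perp$, I would invoke Proposition \ref{prop: dual code in the first generic construction} to deduce that $\textbf{c}_{\alpha,\beta}$ lies in both $\mathcal{L}_1^\perp$ and $\mathcal{L}_2^\perp$, giving the two orthogonality relations $\sum_{i=1}^q \text{Tr}_{q/p}(\alpha f(x_i)+\beta x_i)x_i=0$ and $\sum_{i=1}^q \text{Tr}_{q/p}(\alpha f(x_i)+\beta x_i)f(x_i)=0$.

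Next I would subtract these to obtain the key relation $\sum_{i=1}^q \text{Tr}_{q/p}(\alpha f(x_i)+\beta x_i)(f(x_i)-x_i)=0$, exactly mirroring the step $\sum_i c_i(f(x_i)-x_i)=0$ in the proof of Proposition \ref{prop: first necessary condition in the first generic construction for weakly regular bent functions}. From here the two cases branch precisely as before. In the scalar-multiplication case, writing $\kappa_i:=\text{Tr}_{q/p}(\alpha f(x_i)+\beta x_i)\in\mathbb{F}_p$, I would use the weakly regular bent identity $\frac{\epsilon\sqrt{p^*}^m}{p^m}\hat{\chi}_{g^*}(\kappa_i x_i)=\zeta^{\text{Tr}_{q/p}(f(\kappa_i x_i)-\kappa_i x_i)}=\zeta^{\text{Tr}_{q/p}(\kappa_i(f(x_i)-x_i))}$, taking the product over $i$ so the exponents sum to $\text{Tr}_{q/p}\big(\sum_i \kappa_i(f(x_i)-x_i)\big)=0$, which collapses the product to $1$ and yields the stated formula. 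The generic-$f$ case is the same but with $\kappa_i$ moved into the exponent, using $\big(\frac{\epsilon\sqrt{p^*}^m}{p^m}\hat{\chi}_{g^*}(x_i)\big)^{\kappa_i}=\zeta^{\text{Tr}_{q/p}(\kappa_i(f(x_i)-x_i))}$.

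The equivalence with the vanishing of the imaginary part follows, as in the earlier proposition, from the fact that each factor is a fixed positive real constant times a $p$-th root of unity, so the full product is that constant raised to the appropriate power times a single $p$-th root of unity; the product equals the real constant exactly when that root of unity is $1$, i.e. when the imaginary part vanishes (using that $p$ is odd, so $\zeta\neq -1$ and no nontrivial real values of $\zeta^j$ occur).

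The one point that deserves genuine care, rather than mere transcription, is the scalar-multiplication case: the argument relies on $f(\kappa_i x_i)=\kappa_i f(x_i)$ with $\kappa_i\in\mathbb{F}_p$, which is legitimate precisely because the coordinate values $\kappa_i=\text{Tr}_{q/p}(\alpha f(x_i)+\beta x_i)$ are elements of the prime field $\mathbb{F}_p$, so the hypothesis $f(\alpha x)=\alpha f(x)$ for $\alpha\in\mathbb{F}_p$ applies. I expect this to be the only real obstacle: one must confirm that the codeword entries are genuinely prime-field scalars (they are, being trace values) so that the weakly-regular-bent substitution and the scalar-homogeneity of $f$ are both valid simultaneously. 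Everything else is a verbatim adaptation of the dual-code proof with $c_i$ replaced by the explicit coordinate $\text{Tr}_{q/p}(\alpha f(x_i)+\beta x_i)$.
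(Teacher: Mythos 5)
Your proof is correct and follows essentially the same route as the paper, which in fact gives no standalone proof of this proposition: it justifies it only by the remark that a codeword in the Hull lies in the dual, so that Proposition \ref{prop: first necessary condition in the first generic construction for weakly regular bent functions} can be restated with $c_i$ replaced by the explicit coordinate $\text{Tr}_{q/p}(\alpha f(x_i)+\beta x_i)$, which is exactly your argument written out in detail. Your added check that these coordinates are genuinely elements of $\mathbb{F}_p$ (being trace values), so that the scalar-homogeneity hypothesis $f(\alpha x)=\alpha f(x)$ for $\alpha\in\mathbb{F}_p$ legitimately applies to $f(\kappa_i x_i)$, is the one point the paper leaves implicit, and you handle it correctly.
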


Also, in a similar way as with the dual, we can obtain the following restriction on the computation of the Hull.
\begin{prop}\label{prop: hull of the first generic construction and character theory}
    Consider the functions $g_i:\mathbb{F}_q\rightarrow\mathbb{F}_p$ of $\ref{prop: first necessary condition in the first general construction for the general case}$, \ref{prop: second necessary condition in the first general construction for the general case} or \ref{prop: third necessary condition in the first general construction for the general case}, then the function
    $$\begin{aligned}
    \varphi:\mathcal{C}(f) &\longrightarrow \mathbb{C}\\ \textbf{c}_{\alpha,\beta} &\longmapsto \prod_{i=1}^q\big(\hat{\chi}_{g_i}(1)+1-q\big)^{\text{Tr}_{q/p}(\alpha f(x_i)+\beta x_i)}.
\end{aligned}$$
is an additive character of $\mathcal{C}(f)$ after the usual identifications such that 
$$\text{Hull}(\mathcal{C}(f))\subseteq\text{ker}(\varphi).$$
\end{prop}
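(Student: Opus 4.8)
The plan is to reduce this statement to the already-established dual-code result \ref{prop: dual of the first generic construction and character theory} by a simple restriction argument. The key observation is that the $i$-th coordinate of the codeword $\textbf{c}_{\alpha,\beta}$ is exactly $c_i:=\text{Tr}_{q/p}(\alpha f(x_i)+\beta x_i)\in\mathbb{F}_p$, so the map $\varphi$ appearing here is nothing but the composition of the natural inclusion $\iota:\mathcal{C}(f)\hookrightarrow\mathbb{F}_p^q$, sending a codeword to its coordinate vector $(c_1,\dots,c_q)$, with the additive character of $\mathbb{F}_p^q\cong\mathbb{F}_{p^q}$ constructed in \ref{prop: dual of the first generic construction and character theory}. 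Everything then follows by transporting that result along $\iota$.

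First I would verify that $\varphi$ is indeed an additive character. Recall from the proofs of \ref{prop: first necessary condition in the first general construction for the general case}, \ref{prop: second necessary condition in the first general construction for the general case} and \ref{prop: third necessary condition in the first general construction for the general case} that each factor $\hat{\chi}_{g_i}(1)+1-q$ equals $\zeta^{\text{Tr}_{q/p}(\cdots)}$, hence is a $p$-th root of unity. Since the first generic construction is $\mathbb{F}_q$-linear in $(\alpha,\beta)$, one has $\textbf{c}_{\alpha,\beta}+\textbf{c}_{\alpha',\beta'}=\textbf{c}_{\alpha+\alpha',\beta+\beta'}$, and coordinatewise the exponents $\text{Tr}_{q/p}(\alpha f(x_i)+\beta x_i)$ add in $\mathbb{F}_p$. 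Because each base is a $p$-th root of unity, raising it to an exponent is well defined modulo $p$ and compatible with this addition, so $\varphi(\textbf{c}_{\alpha,\beta}+\textbf{c}_{\alpha',\beta'})=\varphi(\textbf{c}_{\alpha,\beta})\varphi(\textbf{c}_{\alpha',\beta'})$. Thus $\varphi$ is a homomorphism from $(\mathcal{C}(f),+)$ into the group of $p$-th roots of unity, and after identifying $\mathcal{C}(f)$ with $\mathbb{F}_{p^k}$ where $k=\text{dim}(\mathcal{C}(f))$ — the ``usual identification'' — this is precisely an additive character in the field sense.

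It then remains to establish the inclusion $\text{Hull}(\mathcal{C}(f))\subseteq\text{ker}(\varphi)$. Let $\textbf{c}_{\alpha,\beta}\in\text{Hull}(\mathcal{C}(f))=\mathcal{C}(f)\cap\mathcal{C}(f)^\perp$, using the characterization \ref{prop: hull code in the first generic construction}. In particular its coordinate vector $(c_1,\dots,c_q)$ lies in $\mathcal{C}(f)^\perp$, so the necessary condition from whichever of \ref{prop: first necessary condition in the first general construction for the general case}, \ref{prop: second necessary condition in the first general construction for the general case}, \ref{prop: third necessary condition in the first general construction for the general case} matches the chosen family $g_i$ applies, yielding $\prod_{i=1}^q(\hat{\chi}_{g_i}(1)+1-q)^{c_i}=1$, that is $\varphi(\textbf{c}_{\alpha,\beta})=1$, whence $\textbf{c}_{\alpha,\beta}\in\text{ker}(\varphi)$. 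I do not expect a genuine obstacle here, since the whole argument is the restriction of \ref{prop: dual of the first generic construction and character theory} to the subgroup $\mathcal{C}(f)\le\mathbb{F}_p^q$; the only point deserving care is the multiplicativity of $\varphi$, which works precisely because each factor is a $p$-th root of unity, allowing the integer exponents to be reduced modulo $p$ consistently with the $\mathbb{F}_p$-valued addition of the coordinates. One should also make explicit, exactly as in the dual-code statement, the identification $\mathcal{C}(f)\cong\mathbb{F}_{p^k}$ under which the term ``additive character'' is to be read.
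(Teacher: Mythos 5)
Your proposal is correct and takes essentially the same approach the paper intends: the paper states this proposition without a written proof, remarking only that it is obtained ``in a similar way as with the dual,'' i.e.\ by restricting Proposition \ref{prop: dual of the first generic construction and character theory} to the subgroup $\mathcal{C}(f)\subseteq\mathbb{F}_p^q$ and using that a Hull codeword lies in $\mathcal{C}(f)^\perp$, so the necessary conditions of \ref{prop: first necessary condition in the first general construction for the general case}, \ref{prop: second necessary condition in the first general construction for the general case} and \ref{prop: third necessary condition in the first general construction for the general case} force $\varphi(\textbf{c}_{\alpha,\beta})=1$. Your spelled-out version (each factor is a $p$-th root of unity, hence multiplicativity holds and exponents reduce modulo $p$; Hull $=\mathcal{C}(f)\cap\mathcal{C}(f)^\perp$ gives the kernel inclusion) is precisely that argument, with the added merit of making explicit that $\varphi$ is well defined on codewords since it factors through the coordinate vector.
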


\section{The Dual and the Hull code in the second generic construction}

\subsection{The dual code}
We fix a set $D=\{d_1,\dots,d_n\}$ in $\mathbb{F}_q$ (where $q=p^k$). Then, recalling the definition of the dual, we have that 
 $$\mathcal{C}_D^\perp=\{(c_1,\dots,c_n)\in\mathbb{F}_p^n\,|\,\sum_{i=1}^n c_i Tr_{q/p}(xd_i)=0\,\text{for every}\, x\in\mathbb{F}_q\}=$$
 $$=\{(c_1,\dots,c_n)\in\mathbb{F}_p^n\,|\, Tr_{q/p}(x\sum_{i=1}^nc_id_i)=0\,\text{for every}\, x \in\mathbb{F}_q\}$$
thanks to the linearity of the trace map. This gives that 
$$\sum_{i=1}^nc_id_i=0,$$
since the trace map is not the zero mapping. Hence we have just proven the following characterization of the dual code in the second generic construction, particularly straightforward and useful to deal with.
\begin{prop}\label{prop: the dual code in the second generic construction}
Let $D=\{d_1,\dots,d_n\}\subseteq\mathbb{F}_q$ be a defining set for a code in the second generic construction, and let $\mathcal{L}$ be the linear code over $\mathbb{F}_q$ generated by the codeword $l=(d_1,\dots,d_n)$. Then
$$\mathcal{C}_D^\perp=\mathcal{L}^\perp\cap\mathbb{F}_p^n.$$
\end{prop}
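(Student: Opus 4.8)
The plan is to establish the equality through a chain of equivalences, rewriting both sides as the solution set of a single linear equation in the unknowns $(c_1,\dots,c_n)$. The whole argument rests on reducing the family of orthogonality constraints, one for each $x\in\mathbb{F}_q$, to one equation read inside $\mathbb{F}_q$.

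First I would unwind the definition of $\mathcal{C}_D^\perp$: a vector $(c_1,\dots,c_n)\in\mathbb{F}_p^n$ lies in $\mathcal{C}_D^\perp$ precisely when $\sum_{i=1}^n c_i\,Tr_{q/p}(xd_i)=0$ for every $x\in\mathbb{F}_q$. Because each $c_i$ lies in $\mathbb{F}_p$ and the trace is $\mathbb{F}_p$-linear, I can pull the coefficients and the sum inside the trace, obtaining the equivalent condition $Tr_{q/p}\bigl(x\sum_{i=1}^n c_i d_i\bigr)=0$ for all $x\in\mathbb{F}_q$. Setting $s:=\sum_{i=1}^n c_i d_i\in\mathbb{F}_q$, membership reads simply $Tr_{q/p}(xs)=0$ for every $x\in\mathbb{F}_q$.

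The key step is to invoke the nondegeneracy of the trace bilinear form $(x,s)\mapsto Tr_{q/p}(xs)$ on $\mathbb{F}_q$, equivalently the fact that $Tr_{q/p}$ is surjective and hence not identically zero. This forces $s=0$, i.e.\ $\sum_{i=1}^n c_i d_i=0$. This is the only nonroutine ingredient; conversely, if $s=0$ then $Tr_{q/p}(xs)=0$ holds trivially, so the reduction is a genuine equivalence and not merely a one-way implication.

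It then remains to identify this linear condition with the right-hand side. Since $\mathcal{L}$ is the $\mathbb{F}_q$-code spanned by $l=(d_1,\dots,d_n)$, its dual is $\mathcal{L}^\perp=\{(b_1,\dots,b_n)\in\mathbb{F}_q^n\mid \sum_{i=1}^n b_i d_i=0\}$, and intersecting with $\mathbb{F}_p^n$ yields exactly the set of $(c_1,\dots,c_n)\in\mathbb{F}_p^n$ satisfying $\sum_{i=1}^n c_i d_i=0$, which coincides with the description of $\mathcal{C}_D^\perp$ derived above. This gives $\mathcal{C}_D^\perp=\mathcal{L}^\perp\cap\mathbb{F}_p^n$. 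I expect the only point demanding care is to record explicitly that the coefficients $c_i\in\mathbb{F}_p$ may be moved through the trace using $\mathbb{F}_p$-linearity (not $\mathbb{F}_q$-linearity), and that the resulting equation $\sum_{i=1}^n c_i d_i=0$ is interpreted as an identity in $\mathbb{F}_q$.
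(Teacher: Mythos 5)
Your proof is correct and follows essentially the same route as the paper's: unwind the definition of $\mathcal{C}_D^\perp$, use $\mathbb{F}_p$-linearity of the trace to obtain $Tr_{q/p}\bigl(x\sum_{i=1}^n c_i d_i\bigr)=0$ for all $x$, conclude $\sum_{i=1}^n c_i d_i=0$ from nondegeneracy of the trace form, and identify this condition with $\mathcal{L}^\perp\cap\mathbb{F}_p^n$. Your write-up is in fact slightly more careful than the paper's, since you record the converse direction and the nondegeneracy argument explicitly.
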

If we do not want to refer to the codewords with entrance in $\mathbb{F}_p$, we can observe that $\sum_{i=1}^nc_id_i=0$ with $c_i\in\mathbb{F}_p$, if and only if 
$$\sum_{i=1}^n c_id_i^{p^j}=0$$
for every $j=0,\dots,m-1$. Hence if we define $\mathcal{L}_j$ as the linear code over $\mathbb{F}_q$ generated by the codeword $l_j=(d_1^{p^j},\dots,d_n^{p^j})$, then 
$$\mathcal{C}_D^\perp= \mathcal{L}_j^\perp\cap \mathbb{F}_p,$$
for every $j=0,\dots,m-1$, which can be a useful way to estimate or effectively compute the dual code.
\\\\
Also in this case, we reduce the computation of the dual code to an orthogonality problem; here the Gram-Schmidt algorithm has a complexity of $\mathcal{O}(n^3)$.
\\\\
\begin{remark}
The latter proposition gives us information also on the minimum distance of the dual code, which is strictly related to the defining set $D$, in particular to the $\emph{minimum strictly dependent subsets}$ of $D$, where a subset $S\subseteq D$ is minimum strictly dependent in $D$ if there exists a linear combination 
$$\sum_{d\in S}\alpha_d d=0,$$
with $\alpha_d\in\mathbb{F}_p^\times$ for every $d\in S$ and such that its cardinality is the minimum possible to have this property. 
\\\\
For example, if we assume that the defining set does not contain a zero element, then the dimension $d^*$ of the dual code $\mathcal{C}_D$ is greater or equal to two. Also:
\begin{itemize}
    \item $d^*\ge 3$ if and only if in $D$ there are no two elements multiple of each others;
    \item $d^*\ge 4$ if and only if in $D$ there are no three elements such that there exists a linear combination (not necessary non zero) which equals zero. And so on.
\end{itemize}
It is interesting to notice that for linear code over $\mathbb{F}_2$ this study refers to the subsets of $D$ whose elements sum to zero.
\\\\
With this principle, we can determine some MDS codes using the fact that the dual of an MDS code is again MDS \cite{DualMDS}. By definition, in an MDS code the Singleton bound is reached with equality, hence
$$d=n-k+1.$$
If we consider the dual code, then 
$$d=n-(n-k)+1=k+1.$$
To construct such a code with the use of the second generic construction, we have to find a defining set $D$ such that every $k$ elements are linearly independent. 
\begin{itemize}
    \item For $n=k$ we obtain the trivial case $\mathcal{C}_D=\mathbb{F}_p^k$;
    \item For $n=k+1$ we obtain a $[k+1,k,2]$ MDS linear code whose dual is an $[k+1,1,k+1]$ MDS linear code. In order to do this, it suffices to choose $d_1,\dots,d_k$ linear independent and 
    $$d_{k+1}=\sum_{i=1}^k \alpha_id_i$$
    with $\alpha_i\in\mathbb{F}_p^\times$ for every $i=1,\dots,k$;
    \item For $n=k+2$ we obtain a $[k+2,k,3]$ MDS linear code whose dual is an $[k+2,2,k+1]$ MDS linear code. In order to do this, it suffices to choose $d_1,\dots,d_k$ linear independent and 
    $$d_{k+1}=\sum_{i=1}^k \alpha_id_i;$$
    $$d_{k+2}=\sum_{i=1}^kd_i,$$
    with $\alpha_i\in\mathbb{F}_p^\times$ and $\alpha_i\ne \alpha_j$ for $i\ne j$. 
\end{itemize}
\end{remark}

It is interesting that the study of the dual code, and in particular the \ref{prop: the dual code in the second generic construction}, can give us information about the dimension of the generated codes. 
\begin{remark}
Notice that if $D=\{d_1,\dots,d_n\}\subseteq\mathbb{F}_q$ is a defining set for a code in the second generic construction such that its element are linearly independent over $\mathbb{F}_p$, then
$$\mathcal{C}_D=\mathbb{F}_p^n.$$
In fact, thanks to \ref{prop: the dual code in the second generic construction}, if we take $\mathcal{L}$ to be the linear code generated over $\mathbb{F}_q$ by the codeword $l=(d_1,\dots,d_n)$, then 
$$\mathcal{C}_D^\perp=\mathcal{L}\cap\mathbb{F}_p^n=\{0\},$$
since, by definition of linear independence, the only null linear combination is the trivial one. This is clearly equivalent to the fact that $\mathcal{C}_D=\mathbb{F}_p^n$.
\end{remark}
It is interesting how we can generalize this idea to compute the exact dimension of the code in the second generic construction.
\begin{lemma}\label{lemma: dimension of codes in the second generic construction}
    Let $D=\{d_1,\dots,d_n\}\subseteq \mathbb{F}_q$ be a defining set for a code in the second generic construction, then
    $$\text{dim}(\mathcal{C}_D)=\text{dim}_{\mathbb{F}_p}\langle d_1,\dots,d_n\rangle.$$
\end{lemma}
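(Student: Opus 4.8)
The plan is to leverage the dual-code characterization already established in \ref{prop: the dual code in the second generic construction}, together with the elementary rank-nullity relation for the evaluation map, so that the dimension of $\mathcal{C}_D$ falls out of the structure of $\mathcal{C}_D^\perp$ rather than from a direct basis argument. First I would set $k:=\dim_{\mathbb{F}_p}\langle d_1,\dots,d_n\rangle$ and observe that, since $\mathcal{C}_D\subseteq\mathbb{F}_p^n$ is a linear code of length $n$, we have the standard relation $\dim(\mathcal{C}_D)+\dim(\mathcal{C}_D^\perp)=n$. Thus it suffices to prove that $\dim(\mathcal{C}_D^\perp)=n-k$, and the claim $\dim(\mathcal{C}_D)=k$ follows immediately.

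To compute $\dim(\mathcal{C}_D^\perp)$, I would invoke \ref{prop: the dual code in the second generic construction}, which tells us that a vector $(c_1,\dots,c_n)\in\mathbb{F}_p^n$ lies in $\mathcal{C}_D^\perp$ precisely when $\sum_{i=1}^n c_i d_i=0$. In other words, $\mathcal{C}_D^\perp$ is exactly the kernel of the $\mathbb{F}_p$-linear map
$$
\Phi:\mathbb{F}_p^n\longrightarrow\mathbb{F}_q,\qquad (c_1,\dots,c_n)\longmapsto\sum_{i=1}^n c_i d_i.
$$
The image of $\Phi$ is the $\mathbb{F}_p$-span $\langle d_1,\dots,d_n\rangle$ inside $\mathbb{F}_q$, whose dimension is $k$ by definition. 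Applying the rank–nullity theorem to $\Phi$ gives $\dim(\ker\Phi)=n-\mathrm{rk}(\Phi)=n-k$, so $\dim(\mathcal{C}_D^\perp)=n-k$.

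Combining the two displays yields $\dim(\mathcal{C}_D)=n-(n-k)=k=\dim_{\mathbb{F}_p}\langle d_1,\dots,d_n\rangle$, which is the asserted equality. I do not anticipate a genuine obstacle here: the argument is a clean two-line consequence of the dual description plus rank–nullity, and this is precisely the ``alternative proof'' advertised earlier in the excerpt, where the parameters of $\mathcal{C}_D$ are read off from the structure of $\mathcal{C}_D^\perp$ rather than from the explicit dual-basis construction of \ref{thm: dimension second construction}. The only point deserving a word of care is to confirm that $\Phi$ is genuinely $\mathbb{F}_p$-linear (not merely $\mathbb{F}_q$-linear): since each $c_i$ ranges over $\mathbb{F}_p$ and scalar multiplication by $\mathbb{F}_p$ commutes with the finite-field addition in $\mathbb{F}_q$, linearity over $\mathbb{F}_p$ is immediate, and the identification of $\mathrm{im}(\Phi)$ with the $\mathbb{F}_p$-span of the $d_i$ is exactly the definition of that span.
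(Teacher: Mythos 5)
Your proof is correct and follows essentially the same route as the paper: both arguments use the characterization of $\mathcal{C}_D^\perp$ as the set of $(c_1,\dots,c_n)$ with $\sum_{i=1}^n c_i d_i=0$, identify $\mathcal{C}_D^\perp$ with the kernel of the map $c\mapsto\sum_i c_i d_i$ (the paper writes this map as the $m\times n$ matrix $M$ whose columns are the $d_i$ in a fixed $\mathbb{F}_p$-basis, you keep it coordinate-free and invoke rank--nullity), and then conclude via $\dim(\mathcal{C}_D)+\dim(\mathcal{C}_D^\perp)=n$. The difference is purely presentational.
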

\begin{proof}
    After having chosen a basis of $\mathbb{F}_q$ over $\mathbb{F}_p$, we write the elements $d_i$ as vectors in this basis, then 
$$M:=(d_1,\dots,d_n)$$
is a $m\times n$ matrix with entrances in $\mathbb{F}_p$ such that 
$$\mathcal{C}_D^\perp=\text{ker}(M),$$
so $M$ is \emph{almost} a parity check matrix for the dual code, where the adverb \emph{almost} is necessary because the matrix does not have in general the right rank of a parity check one.
\\\\
Hence, we have our dimensions:
$$\text{dim}(\mathcal{C}_D^\perp)=n-\text{dim}_{\mathbb{F}_p}\langle d_1,\dots, d_n\rangle.$$
And finally 
$$\text{dim}(\mathcal{C}_D)=\text{dim}_{\mathbb{F}_p}\langle d_1,\dots, d_n\rangle.$$
\end{proof}
\begin{remark} \label{remark: generator matrix second generic construction}
    If we suppose that $\text{dim}_{\mathbb{F}_p}\langle d_1,\dots,d_n\rangle=k$ and that $d_1,\dots,d_k$ are the $k$ linearly independent elements, then 
    $$d_{k+1}=\sum_{i=1}^k \alpha_i^1 d_i,\dots, d_n=\sum_{i=1}^k\alpha_i^{n-k}d_i,$$
    for suitable $\alpha_i^j\in\mathbb{F}_p$. Then, from \ref{prop: dual code in the first generic construction}, we have that a codeword $(c_1,\dots,c_n)$ is in the dual of $\mathcal{C}_D$ if and only if
    $$
    \begin{cases}
        c_1+c_{k+1}\alpha_1^1+\dots+c_n\alpha_1^{n-k}=0\\
        c_2+c_{k+1}\alpha_2^1+\dots+c_n\alpha_2^{n-k}=0\\
        \dots\\
        c_k+c_{k+1}\alpha_k^1+\dots+c_n\alpha_k^{n-k}=0
    \end{cases}
    $$
    Hence, if $M:=(d_1\,d_2\,\dots\,d_n)$ is the $k\times n$ matrix where we express by columns the elements of the defining set in coordinates with respects to the basis $\{d_1,\dots,d_k\}$, then $M$, having the following structure with $P$ a $k\times n-k$ matrix,  
    $$M=(I_k\,|\,P)$$
    is a parity check matrix for $\mathcal{C}_D^\perp$, hence a generating matrix for $\mathcal{C}_D$, already in standard form. 
\end{remark}
\begin{remark}
As we observed before, the dimension of a code built with the second generic construction is less or equal to $m$, then we get the following lower bound on the dimension of the dual code:
$$\text{dim}(\mathcal{C}_D^\perp)\geq n-m. $$
One might ask the following reasonable question: is the dual of a code built with the second generic construction again a code that can be made explicit with the second generic construction? The answer in general is no, we can prove it by considering the dimensions of the codes. In fact, if we take a defining set $D$ composed by $n$ elements such that $n> 2m$, then 
$$\text{dim}(\mathcal{C}_D^\perp)\geq n-m > 2m -m=m.$$
This proves that $\mathcal{C}_D^\perp$ can not be made explicit using the second generic construction because if this was the case, then we would have 
$$\text{dim}(\mathcal{C}_D^\perp)\leq m,$$
as it happens for every second generic construction code. 
\end{remark}
One may also ask if a general linear code can be made explicit using the second generic construction over some finite fields $\mathbb{F}_{p^m}$; the answer is yes and can be found in the following proposition, which stats that every linear code is a second generic construction code for a sufficiently large value of $m$. We propose an alternative proof compared to the one in \cite{ref1} and \cite{ref2}.
\begin{prop}\label{LinearCodeToSecondConstr}
    Let $\mathcal{C}$ be a $[n,k,d]$ linear code over $\mathbb{F}_p$ and let $m\in\mathbb{N}$. Then there exists a defining set $D\in\mathbb{F}_q^n$ such that 
    $$\mathcal{C}=\mathcal{C}_D$$
    if and only if $m\ge k$. 
\end{prop}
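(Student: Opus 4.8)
The plan is to prove the two implications separately. Necessity follows immediately from the dimension lemma, while sufficiency requires an explicit construction, which is where the real content lies.

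For necessity, suppose $\mathcal{C} = \mathcal{C}_D$ for some defining set $D = \{d_1, \dots, d_n\} \subseteq \mathbb{F}_q$. Applying Lemma \ref{lemma: dimension of codes in the second generic construction} gives
$$k = \dim(\mathcal{C}) = \dim(\mathcal{C}_D) = \dim_{\mathbb{F}_p}\langle d_1, \dots, d_n\rangle \leq m,$$
the last inequality holding because $\langle d_1, \dots, d_n\rangle$ is an $\mathbb{F}_p$-subspace of the $m$-dimensional space $\mathbb{F}_q$. Hence $m \geq k$.

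For sufficiency, assume $m \geq k$ and fix a generator matrix $G \in \mathbb{F}_p^{k\times n}$ of $\mathcal{C}$, with entries $G_{li}$. Since $m \geq k$, I can choose $\mathbb{F}_p$-linearly independent elements $\epsilon_1, \dots, \epsilon_k$ in $\mathbb{F}_q$ and complete them to an $\mathbb{F}_p$-basis $\epsilon_1, \dots, \epsilon_m$ of $\mathbb{F}_q$. The defining set is then obtained by lifting the columns of $G$, setting
$$d_i := \sum_{l=1}^k G_{li}\, \epsilon_l \in \mathbb{F}_q, \qquad i = 1, \dots, n.$$
The key device for checking $\mathcal{C}_D = \mathcal{C}$ exactly is the trace-dual basis $\epsilon_1^*, \dots, \epsilon_m^*$, characterised by $\text{Tr}_{q/p}(\epsilon_j^* \epsilon_l) = \delta_{jl}$ (such a basis exists since the trace form is nondegenerate, exactly as exploited in the proof of Theorem \ref{thm: dimension second construction}). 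Writing $x = \sum_j x_j \epsilon_j^*$ with $x_j \in \mathbb{F}_p$, every codeword of $\mathcal{C}_D$ is the $\mathbb{F}_p$-combination $\sum_j x_j v_j$ of the rows $v_j := (\text{Tr}_{q/p}(\epsilon_j^* d_1), \dots, \text{Tr}_{q/p}(\epsilon_j^* d_n))$, so $\mathcal{C}_D = \langle v_1, \dots, v_m\rangle$. A short computation using the dual-basis relation should yield $\text{Tr}_{q/p}(\epsilon_j^* d_i) = G_{ji}$ for $j \leq k$ and $0$ for $j > k$; that is, $v_1, \dots, v_k$ are precisely the rows of $G$ while $v_{k+1} = \dots = v_m = 0$. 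Therefore $\mathcal{C}_D$ is the row space of $G$, which is $\mathcal{C}$.

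The main obstacle I anticipate is not the counting but ensuring honest equality $\mathcal{C}_D = \mathcal{C}$ rather than mere equality of dimensions or permutation-equivalence: Lemma \ref{lemma: dimension of codes in the second generic construction} alone would only guarantee $\dim_{\mathbb{F}_p}\langle d_1,\dots,d_n\rangle = k$, which does not identify the code. The dual-basis choice is precisely what forces the generator matrix of $\mathcal{C}_D$ to reproduce $G$ (up to appended zero rows), so this is the step I would verify most carefully. I would also remark that possible repetitions or zeros among the $d_i$ (for instance when $G$ has repeated or zero columns) cause no trouble, since $D$ is taken as a tuple in $\mathbb{F}_q^n$.
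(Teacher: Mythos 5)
Your proof is correct, but its sufficiency direction takes a genuinely different route from the paper's. The paper argues abstractly: it shows that the map $\varphi : \mathbb{F}_q^n \to \mathrm{Hom}_{\mathbb{F}_p}(\mathbb{F}_q,\mathbb{F}_p^n)$, $D \mapsto \varphi_D$ with $\varphi_D(x) = \mathbf{c}_x$, is $\mathbb{F}_p$-linear and injective (injectivity coming from Lemma \ref{lemma: dimension of codes in the second generic construction}), hence an isomorphism since domain and codomain both have dimension $mn$; then, because $k \le m$ guarantees $\mathcal{C}$ is the image of \emph{some} linear map $\alpha : \mathbb{F}_q \to \mathbb{F}_p^n$, the preimage $D = \varphi^{-1}(\alpha)$ is a defining set with $\mathcal{C}_D = \mathcal{C}$. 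That is a pure existence argument. You instead construct $D$ explicitly by lifting the columns of a generator matrix $G$ through linearly independent elements $\epsilon_1,\dots,\epsilon_k$, and you verify honest equality $\mathcal{C}_D = \mathcal{C}$ via the trace-dual basis: $\mathrm{Tr}_{q/p}(\epsilon_j^* d_i) = G_{ji}$ for $j \le k$ and $0$ for $j > k$, so the spanning vectors $v_1,\dots,v_m$ of $\mathcal{C}_D$ are exactly the rows of $G$ together with zero rows (your ``should yield'' is indeed immediate from linearity of the trace). Both proofs handle necessity identically through the dimension lemma. What your approach buys is explicitness: it is, in effect, a proof of the construction that the paper only \emph{states} without justification in the remark following the proposition (representing the columns $g_i$ of $G$ in coordinates over a basis $\{\alpha_1,\dots,\alpha_k\}$); your dual-basis computation is precisely the missing verification there, and it also makes clear why the correct pairing is against the dual basis rather than the basis itself. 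What the paper's approach buys is brevity and a structural fact of independent interest, namely the bijection between defining tuples in $\mathbb{F}_q^n$ and $\mathbb{F}_p$-linear maps $\mathbb{F}_q \to \mathbb{F}_p^n$, which makes the existence and the counting automatic.
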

\begin{proof}
    Let $q=p^m$, $D=\{d_1,\dots,d_n\}\in\mathbb{F}_q^n$ and consider the mapping 
    $$\begin{aligned}
    \varphi:\mathbb{F}_{q}^n &\longrightarrow \text{Hom}_{\mathbb{F}_p}(\mathbb{F}_q,\mathbb{F}_p^n)\\ D &\longmapsto \varphi_D,
\end{aligned}$$
where $\varphi_D(x):=\textbf{c}_x=(\text{Tr}_{q/p}(xd_i))_{i=1,\dots,n}$, a codeword of the linear code $\mathcal{C}_D$. A quick computation shows that $\varphi$ is linear and in addition 
$$\mathcal{C}_D=\text{Im}(\varphi_D).$$
We can notice that $\varphi$ is injective, in fact $\varphi_D=0$ if and only if $\text{Im}(\varphi_D)=0$ if and only if $\mathcal{C}_D=0$ if and only if $\text{dim}_{\mathbb{F}_p}\langle d_1,\dots, d_n\rangle =0$ thanks to lemma \ref{lemma: dimension of codes in the second generic construction}, which proves our statement.\\
Also, the domain and the codomain of $\varphi$ have the same dimension of $mn$ over $\mathbb{F}_p$, hence $\varphi$ is an isomorphism. 
\\\\
If $\mathcal{C}$ is a $[n,k,d]_{\mathbb{F}_p}$ linear code, with $k\le m$, then surely there exists a linear mapping $\alpha\in\text{Hom}_{\mathbb{F}_p}(\mathbb{F}_q,\mathbb{F}_p^n)$ such that $\text{Im}(\alpha)=\mathcal{C}$ and, thanks to the bijection of $\varphi$, there exists $D\in\mathbb{F}_q^n$ such that $\varphi(D)=\alpha$, which means that 
$$\mathcal{C}=\mathcal{C}_D.$$
Notice that if $m<k$, then $\mathcal{C}$ can not be a second generic construction code thanks to lemma \ref{lemma: dimension of codes in the second generic construction}.
\end{proof}
\begin{remark}
    With the last proposition, we also give an answer to the dual problem. In particular, for a linear code $\mathcal{C}_D$ of dimension $k$, the dual code $\mathcal{C}_D^\perp$ can be made explicit with another defining set $D'$ if and only if 
    $$m\ge n-k.$$
\end{remark}
\begin{remark}
    It is important to point out how the remark $\ref{remark: generator matrix second generic construction}$ suggests how to determine a defining set that makes explicit a general linear code.\\
    Let $\mathcal{C}$ be a general $[n,k,d]$ linear code over $\mathbb{F}_p$ with $k\le m$ and the following generator matrix
    $$G=(g_1,\dots,g_n),$$
    where $g_i$ represent the columns. Let $\alpha_1,\dots,\alpha_k$ be $k$ linear independent elements of $\mathbb{F}_{q}$ over $\mathbb{F}_p$ and let $d_i$ be the elements of $\mathbb{F}_q$ represented by the coordinates of the vectors $g_i$ over the basis $\{\alpha_1,\dots,\alpha_k\}$. Then, for $D:=\{d_1,\dots,d_n\}$, we have 
    $$\mathcal{C}=\mathcal{C}_D.$$
\end{remark}


\subsection{The Walsh transform}

In this paragraph we would like to give a necessary condition for a codeword to be in the dual of a code built with the second generic construction, especially when the functions involved are weakly regular bent.
\\\\
Consider a function $f:\mathbb{F}_q\rightarrow \mathbb{F}_q$, $q=p^m$, which we will use the define the code in the second generic construction, with $p$ odd, using the deining set
$$D(f)=\{f(x)\,|\,x\in\mathbb{F}_q\}\setminus \{0\}=\{f(x_1),\dots,f(x_n)\}.$$
Now we define the following function 
$$\begin{aligned}
    g:\mathbb{F}_q &\longrightarrow \mathbb{F}_p\\ x &\longmapsto \text{Tr}_{q/p}(f(x)).
\end{aligned}$$
If we suppose that $g$ is weakly regular bent, as we already saw then there exists another function $g^*:\mathbb{F}_q\rightarrow\mathbb{F}_p$ such that, for $b\in\mathbb{F}_q$:
$$\hat{\chi}_{g}(b)=\epsilon \sqrt{p^*}^m\zeta^{g^*(b)}$$
and 
$$\hat{\chi}_{g^*}(b)=\frac{\epsilon p^m}{\sqrt{p^*}^m}\zeta^{g(x)},$$
where $\epsilon=\pm 1$ is the sign of the Walsh transform of $f(x)$, $p^*=(\frac{-1}{p})p$ and $\zeta=e^{\frac{2\pi i}{p}}$ is the primitive $p$-th root of unity. In this case, with the same notation, we have the following necessary condition, analogous to the one in the first generic construction. 
\begin{prop}\label{prop: necessary condition in the second generic construction for weakly regular bent functions}
    Let $f:\mathbb{F}_q\rightarrow\mathbb{F}_q$ be a function such that the previously defined function $g$ is weakly regular bent and consider $(c_1,\dots,c_n)\in\mathcal{C}_{D(f)}^\perp$. Then
    \begin{enumerate}
        \item if $f$ respects the scalar multiplication:
        $$\prod _{i=1}^n \hat{\chi}_{g^*}(c_ix_i)=\big(\frac{p^m}{\epsilon\sqrt{p^*}^m}\big )^n,$$
        or, equivalently,
        $$\mathfrak{Im}(\prod _{i=1}^n \hat{\chi}_{g^*}(c_ix_i))=0;$$
        \item for a generic $f$:
        $$\prod _{i=1}^n \big(\hat{\chi}_{g^*}(x_i)\big)^{c_i}=\prod_{i=1}^n\big(\frac{p^m}{\epsilon\sqrt{p^*}^m}\big )^{c_i},$$
        or, equivalently, 
        $$\mathfrak{Im}(\prod _{i=1}^n \big(\hat{\chi}_{g^*}(x_i)\big)^{c_i})=0.$$
    \end{enumerate}
\end{prop}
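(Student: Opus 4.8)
The plan is to mirror the argument used for Proposition~\ref{prop: first necessary condition in the first generic construction for weakly regular bent functions}, exploiting the very simple description of the dual code supplied by Proposition~\ref{prop: the dual code in the second generic construction}. First I would observe that a codeword $(c_1,\dots,c_n)\in\mathcal{C}_{D(f)}^\perp$ satisfies $\sum_{i=1}^n c_i d_i=0$, and since here the defining set consists of the entries $d_i=f(x_i)$, this is exactly the scalar relation $\sum_{i=1}^n c_i f(x_i)=0$. This single identity is the only structural input coming from the code; everything that follows is a computation with the Walsh transform of the dual function $g^*$.

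Next I would invert the weakly regular bent relation. Since $\hat{\chi}_{g^*}(b)=\frac{\epsilon p^m}{\sqrt{p^*}^m}\zeta^{g(b)}$ holds for every $b\in\mathbb{F}_q$ and $g(b)=\text{Tr}_{q/p}(f(b))$, I may isolate the root of unity as $\zeta^{\text{Tr}_{q/p}(f(b))}=\frac{\sqrt{p^*}^m}{\epsilon p^m}\hat{\chi}_{g^*}(b)$. In case (1), where $f(\alpha x)=\alpha f(x)$ for $\alpha\in\mathbb{F}_p$, I would evaluate this at $b=c_i x_i$, obtaining $\frac{\sqrt{p^*}^m}{\epsilon p^m}\hat{\chi}_{g^*}(c_i x_i)=\zeta^{\text{Tr}_{q/p}(f(c_i x_i))}=\zeta^{\text{Tr}_{q/p}(c_i f(x_i))}$; taking the product over $i=1,\dots,n$ and using additivity of the trace turns the exponent into $\text{Tr}_{q/p}\bigl(\sum_i c_i f(x_i)\bigr)=\text{Tr}_{q/p}(0)=0$, so the product collapses to $\zeta^0=1$ and the claimed identity follows after clearing the constant (using $\epsilon=\epsilon^{-1}$). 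In case (2) the same steps go through without the scalar-multiplication hypothesis: I would instead evaluate at $b=x_i$ and raise to the power $c_i$, giving $\bigl(\frac{\sqrt{p^*}^m}{\epsilon p^m}\hat{\chi}_{g^*}(x_i)\bigr)^{c_i}=\zeta^{\text{Tr}_{q/p}(c_i f(x_i))}$, then multiply over $i$ as before.

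Finally, to justify the formulations in terms of $\mathfrak{Im}$, I would argue that the right-hand constant $\bigl(\frac{p^m}{\epsilon\sqrt{p^*}^m}\bigr)^n$ (respectively the analogous product) is real, so the vanishing of the imaginary part of the left-hand product is equivalent to the stated equality. This uses the fact recalled in the preliminaries that a weakly regular bent $g$ with $m$ odd forces $p\equiv 1\pmod 4$: hence $\sqrt{p^*}^m$ equals $p^{m/2}$ when $(\frac{-1}{p})=1$, and equals $i^m p^{m/2}$ with $m$ necessarily even when $(\frac{-1}{p})=-1$, so in both admissible cases $\frac{p^m}{\sqrt{p^*}^m}$ is real.

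I do not expect a genuine obstacle, since the argument is structurally identical to the first-construction case; the only points needing care are bookkeeping ones. The first is ensuring the dual condition is read off the correct coordinates, namely that the entries of the defining set are the values $f(x_i)$ (distinct and nonzero by construction of $D(f)$), so that the relevant linear relation is $\sum_i c_i f(x_i)=0$ rather than a relation in the $x_i$ themselves. The hard part, such as it is, will be the reality check underlying the $\mathfrak{Im}$ reformulation, which silently depends on the parity/characteristic constraint guaranteeing existence of a weakly regular bent function; without that constraint the prefactor could fail to be real and the two formulations would no longer be equivalent.
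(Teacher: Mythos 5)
Your proposal is correct and follows essentially the same route as the paper's own proof: extract the relation $\sum_{i=1}^n c_i f(x_i)=0$ from Proposition~\ref{prop: the dual code in the second generic construction}, invert the weakly regular bent identity for $\hat{\chi}_{g^*}$ to isolate $\zeta^{\text{Tr}_{q/p}(c_i f(x_i))}$ (via $b=c_ix_i$ in case (1), via raising to the power $c_i$ in case (2)), and multiply over $i$ using linearity of the trace. The only difference is that you spell out the reality of the prefactor $\frac{p^m}{\epsilon\sqrt{p^*}^m}$ to justify the $\mathfrak{Im}$ reformulation, a point the paper dispatches with a one-line remark about the structure of the Walsh transform for odd $p$.
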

\begin{proof}
    If $(c_1,\dots,c_n)\in\mathcal{C}_{D(f)}^\perp$ then, from \cref{prop: the dual code in the second generic construction} we get that 
    $$\sum_{i=1}^n c_if(x_i)=0.$$
    Now we distinguish the two cases.
    \begin{enumerate}
        \item If $f$ respects the scalar multiplication, then
        $$\frac{\epsilon\sqrt{p^*}^m}{p^m}\hat{\chi}_{g^*}(c_ix_i)=\zeta^{\text{Tr}_{q/p}(f(c_ix_i))}=\zeta^{\text{Tr}_{q/p}(c_if(x_i))}.$$
        Hence, by considering the product over $i$ from $1$ to $n$, we obtain 
        $$\prod_{i=1}^n\frac{\epsilon\sqrt{p^*}^m}{p^m}\hat{\chi}_{g^*}(c_ix_i)=\zeta^{\text{Tr}_{q/p}(\sum_{i=1}^nc_if(x_i)}=\zeta^0=1,$$
        which is equivalent to our thesis and also to the fact of the imaginary part of the product being $0$ because of the structure of the Walsh transform with a fixed coefficients multiplied by a $p$-th root of the unity ($p$ is always assumed to be odd).
        \item For a generic $f$ we have 
        $$\big(\frac{\epsilon\sqrt{p^*}^m}{p^m}\hat{\chi}_{g^*}(x_i)\big)^{c_i}=\big(\zeta^{\text{Tr}_{q/p}(f(x_i))}\big)^{c_i}=\zeta^{\text{Tr}_{q/p}(c_if(x_i))}.$$
        Hence, by considering the product over $i$ from $1$ to $n$, we obtain 
        $$\prod_{i=1}^n\big(\frac{\epsilon\sqrt{p^*}^m}{p^m}\hat{\chi}_{g^*}(x_i)\big)^{c_i}=\zeta^{\text{Tr}_{q/p}(\sum_{i=1}^nc_if(x_i) }=\zeta^0=1,$$
        which is equivalent to our thesis and also to the one involving the imaginary part for the same reasons explained in point $1$.
    \end{enumerate}
\end{proof}
After considering a weakly regular bent function, now we would like to establish a connection between the dual code in the second generic construction and the Walsh transform in the general case. 
\\\\
Consider a function $f:\mathbb{F}_q\rightarrow\mathbb{F}_q$, $q=p^m$, which we will use to define the code in the second generic construction, and for every $i=1,\dots,n$ define the functions $g_i:\mathbb{F}_q\rightarrow\mathbb{F}_p$ such that $g_i(x_i)=\text{Tr}_{q/p}(f(x_i)+x_i)$ and $g_i(x)=\text{Tr}_{q/p}(x)$ for $x\ne x_i$. Then we have the following necessary condition.
\begin{prop}\label{prop: necessary condition in the second general construction for the general case}
    Let $f:\mathbb{F}_q\rightarrow\mathbb{F}_q$ be a function, $g_i:\mathbb{F}_q\rightarrow\mathbb{F}_p$ be the previously defined functions and consider $(c_1,\dots,c_n)\in\mathcal{C}_{D(f)}^\perp$. Then 
    $$\prod_{i=1}^n\big(\hat{\chi}_{g_i}(1)+1-q\big)^{c_i}=1.$$
\end{prop}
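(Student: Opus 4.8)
The plan is to mirror the argument of Proposition \ref{prop: second necessary condition in the first general construction for the general case} almost verbatim, substituting the dual characterization of the second generic construction for that of the first. First I would invoke Proposition \ref{prop: the dual code in the second generic construction}: since the defining set is $D(f)=\{f(x_1),\dots,f(x_n)\}$, the membership $(c_1,\dots,c_n)\in\mathcal{C}_{D(f)}^\perp$ is equivalent to the single linear relation $\sum_{i=1}^n c_i f(x_i)=0$ over $\mathbb{F}_q$. This is the only place where the hypothesis on the codeword enters, and it is immediate from the cited proposition.

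The key computation is to evaluate the first Walsh--Hadamard coefficient $\hat{\chi}_{g_i}(1)$. The crucial observation is that at the point $\lambda=1$ the Walsh transform subtracts exactly $\text{Tr}_{q/p}(x)$, which is precisely the value $g_i$ takes at every $x\neq x_i$; hence all these terms contribute $\zeta^0=1$, giving $q-1$ units, while the single surviving term at $x=x_i$ is $\zeta^{g_i(x_i)-\text{Tr}_{q/p}(x_i)}=\zeta^{\text{Tr}_{q/p}(f(x_i))}$, the added $x_i$ cancelling against the subtracted trace. I would record this as $\hat{\chi}_{g_i}(1)=q-1+\zeta^{\text{Tr}_{q/p}(f(x_i))}$, so that $\hat{\chi}_{g_i}(1)+1-q=\zeta^{\text{Tr}_{q/p}(f(x_i))}$.

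Raising to the power $c_i$ and using that $c_i\in\mathbb{F}_p$ lets me pull the scalar inside the trace, giving $(\hat{\chi}_{g_i}(1)+1-q)^{c_i}=\zeta^{\text{Tr}_{q/p}(c_i f(x_i))}$. Taking the product over $i$ and using additivity of the exponent together with the linearity of the trace then yields $\prod_{i=1}^n(\hat{\chi}_{g_i}(1)+1-q)^{c_i}=\zeta^{\text{Tr}_{q/p}(\sum_{i=1}^n c_i f(x_i))}$, which equals $\zeta^0=1$ by the dual relation obtained in the first step.

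There is no genuine obstacle here: the statement is a direct transcription of the first-construction version, and the argument is a short character-sum manipulation. The one point requiring care -- and the only place where the specific shape of $g_i$ matters -- is the cancellation in $\hat{\chi}_{g_i}(1)$: one must check that evaluating at $\lambda=1$ (rather than some other $\lambda$) is exactly what makes the $+x_i$ in the definition of $g_i(x_i)$ disappear and isolates $\text{Tr}_{q/p}(f(x_i))$. As a sanity check I would also note that the product is well defined, since each factor $\hat{\chi}_{g_i}(1)+1-q$ is a $p$-th root of unity, so raising to an integer representative of $c_i\in\mathbb{F}_p$ is unambiguous.
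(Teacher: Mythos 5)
Your proof is correct and matches the paper's own argument essentially step for step: both reduce membership in $\mathcal{C}_{D(f)}^\perp$ to the relation $\sum_{i=1}^n c_i f(x_i)=0$ via Proposition \ref{prop: the dual code in the second generic construction}, compute $\hat{\chi}_{g_i}(1)=q-1+\zeta^{\mathrm{Tr}_{q/p}(f(x_i))}$ from the definition of $g_i$, and conclude by multiplying the resulting roots of unity. Your added remarks (the cancellation of the $+x_i$ term at $\lambda=1$ and the well-definedness of raising a $p$-th root of unity to a representative of $c_i\in\mathbb{F}_p$) are correct points of care that the paper leaves implicit.
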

\begin{proof}
    As we already observed, from \ref{prop: the dual code in the second generic construction} we have that 
$$\sum_{i=1}^n c_if(x_i)=0.$$
If we compute the first Walsh-Hadamard coefficient of the function $g_i$ we get the following result:
$$\hat{\chi}_{g_i}(1)=\sum_{x\in\mathbb{F}_q}\zeta^{g_i(x)-\text{Tr}_{q/p}(x)}=q-1+\zeta^{\text{Tr}_{q/p}(f(x_i))},$$
and so 
$$\big(\hat{\chi}_{g_i}(1)+1-q\big)^{c_i}=\zeta^{\text{Tr}_{q/p}(c_if(x_i))}.$$
Hence we obtain our thesis:
$$\prod_{i=1}^n\big(\hat{\chi}_{g_i}(1)+1-q\big)^{c_i}=\zeta^{\text{Tr}_{q/p}(\sum_{i=1}^nc_if(x_i))}=\zeta^0=1.$$
\end{proof}
From this necessary condition, also for the second generic construction, with an analogous proof, we can obtain the following restriction on the computation of the dual and also a precise computation when the dimension of the code is equal to one, using character theory.
\begin{prop}\label{prop: dual of the second generic construction and character theory}
    Consider the functions $g_i:\mathbb{F}_q\rightarrow\mathbb{F}_p$ of theorems \ref{prop: necessary condition in the second general construction for the general case}, then the function
    $$\begin{aligned}
    \varphi:\mathbb{F}_p^n &\longrightarrow \mathbb{C}\\ (c_1,\dots,c_n) &\longmapsto \prod_{i=1}^n\big(\hat{\chi}_{g_i}(1)+1-q\big)^{c_i}.
\end{aligned}$$
is an additive character of $\mathbb{F}_{p^n}$ after the identification of $\mathbb{F}_p^n$ with $\mathbb{F}_{p^n}$ such that 
$$\mathcal{C}_{D(f)}^\perp\subseteq\text{ker}(\varphi).$$
In particular, if $\text{dim}(\mathcal{C}_{D(f)})=1$, then
$$\mathcal{C}_{D(f)}^\perp=\text{ker}(\varphi).$$
\end{prop}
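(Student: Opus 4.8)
The plan is to follow the template of Proposition~\ref{prop: dual of the first generic construction and character theory}, the key observation being that the formidable-looking product defining $\varphi$ collapses to a single $p$-th root of unity. First I would recall, from the computation inside the proof of Proposition~\ref{prop: necessary condition in the second general construction for the general case}, that each factor simplifies to $\hat{\chi}_{g_i}(1)+1-q=\zeta^{\text{Tr}_{q/p}(f(x_i))}$. Writing $a_i:=\text{Tr}_{q/p}(f(x_i))\in\mathbb{F}_p$, this yields the closed form
$$\varphi(c_1,\dots,c_n)=\prod_{i=1}^n\zeta^{a_i c_i}=\zeta^{\sum_{i=1}^n a_i c_i}.$$
From here the character property is immediate: for any $c,c'\in\mathbb{F}_p^n$ one has $\varphi(c+c')=\zeta^{\sum_i a_i(c_i+c_i')}=\varphi(c)\varphi(c')$, so $\varphi$ is a homomorphism from $(\mathbb{F}_p^n,+)$ into the unit circle, i.e. an additive character once $\mathbb{F}_p^n$ is identified with $\mathbb{F}_{p^n}$.

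Next I would use the standard correspondence between $\mathbb{F}_p$-linear functionals on $\mathbb{F}_{p^n}$ and additive characters: the functional $c\mapsto\sum_i a_i c_i$ matches, under the fixed basis, a unique $\gamma\in\mathbb{F}_{p^n}$ with $\varphi(x)=\zeta^{\text{Tr}_{p^n/p}(\gamma x)}$. Provided $\gamma\neq 0$, its kernel is $\gamma^{-1}\ker(\text{Tr}_{p^n/p})$, an $\mathbb{F}_p$-subspace of dimension $n-1$. The inclusion $\mathcal{C}_{D(f)}^\perp\subseteq\ker(\varphi)$ is then nothing more than a restatement of Proposition~\ref{prop: necessary condition in the second general construction for the general case}, since every codeword $c$ of the dual satisfies $\varphi(c)=1$. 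For the final sharpening, when $\dim(\mathcal{C}_{D(f)})=1$ we have $\dim(\mathcal{C}_{D(f)}^\perp)=n-1$, which equals $\dim(\ker(\varphi))$; the inclusion between two subspaces of the same dimension forces equality.

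The main obstacle I anticipate is precisely the tacit hypothesis that $\varphi$ is \emph{nontrivial}, i.e. that $\gamma\neq 0$, equivalently that the scalars $a_i=\text{Tr}_{q/p}(f(x_i))$ are not all zero; only then does $\ker(\varphi)$ have dimension exactly $n-1$ rather than $n$, and only then can it coincide with the $(n-1)$-dimensional dual. By Lemma~\ref{lemma: dimension of codes in the second generic construction}, the one-dimensional case means the $f(x_i)$ all lie on a single line $\mathbb{F}_p\,d$ with $d\neq 0$, whence $a_i=\lambda_i\,\text{Tr}_{q/p}(d)$ for scalars $\lambda_i\in\mathbb{F}_p^\times$ (all nonzero, since $0\notin D(f)$). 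The argument therefore goes through exactly when $\text{Tr}_{q/p}(d)\neq 0$, and securing this is the one delicate point; it can be arranged by selecting a generator $d$ of the line outside $\ker(\text{Tr}_{q/p})$, after which the character-theoretic dimension count becomes entirely routine.
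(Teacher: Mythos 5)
Your proof follows essentially the same route as the paper's: the paper proves the first-construction analogue (Proposition~\ref{prop: dual of the first generic construction and character theory}) and declares this one ``analogous,'' and your steps are exactly those — collapse each factor to $\hat{\chi}_{g_i}(1)+1-q=\zeta^{\text{Tr}_{q/p}(f(x_i))}$, write $\varphi(c)=\zeta^{\sum_i a_ic_i}$ with $a_i=\text{Tr}_{q/p}(f(x_i))$, read off the character property, obtain the inclusion from Proposition~\ref{prop: necessary condition in the second general construction for the general case}, identify $\varphi$ with $x\mapsto\zeta^{\text{Tr}_{p^n/p}(\gamma x)}$, and compare dimensions. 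Up to that point your argument is correct, and you deserve credit for making explicit what the paper leaves tacit: the dimension count $\dim(\ker(\varphi))=n-1$ requires $\gamma\neq 0$, i.e. that the $a_i$ are not all zero.

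However, your proposed repair of that delicate point is a genuine error. Whether $\text{Tr}_{q/p}(d)$ vanishes is a property of the line $\langle D(f)\rangle=\mathbb{F}_p d$, not of the chosen generator: $\text{Tr}_{q/p}(\mu d)=\mu\,\text{Tr}_{q/p}(d)$ for $\mu\in\mathbb{F}_p^{\times}$, so either every generator has nonzero trace or every generator lies in $\ker(\text{Tr}_{q/p})$. ``Selecting a generator outside the kernel'' is therefore impossible precisely in the problematic case, and that case genuinely occurs: for $m\geq 2$ pick $d\neq 0$ with $\text{Tr}_{q/p}(d)=0$ and take $f(x)=dx^{q-1}$, so that $D(f)=\{d\}$, $n=1$, $\dim(\mathcal{C}_{D(f)})=1$, yet all $a_i=0$; then $\varphi$ is the trivial character, $\ker(\varphi)=\mathbb{F}_p^{n}$, while $\mathcal{C}_{D(f)}^{\perp}$ has dimension $n-1$, and the claimed equality fails. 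The honest conclusion is that the final statement needs the additional hypothesis that $\langle D(f)\rangle\not\subseteq\ker(\text{Tr}_{q/p})$ (equivalently, that $\varphi$ is nontrivial) — a caveat which, it should be said, applies equally to the paper's own proof, since writing $\ker(\psi)=a^{-1}\ker(\text{Tr}_{p^n/p})$ already presumes $a\neq 0$.
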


\subsection{The Hull code}

Thanks to the study of the dual code in the second generic construction, we are able to characterize the Hull code in this case in the following explicit way. 

\begin{prop}\label{prop: hull code in the second generic construction}
    Let $D=\{d_1,\dots,d_n\}\subseteq\mathbb{F}_q$ be a defining set for a code in the second generic construction, and let $\mathcal{L}$ be the linear code over $\mathbb{F}_q$ generated by the codeword $l=(d_1,\dots,d_n)$. Then
$$\text{Hull}(\mathcal{C}_D)= \mathcal{L}^\perp\cap \mathcal{C}_D.$$
\end{prop}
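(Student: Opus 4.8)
The plan is to argue directly from the definition of the Hull together with the characterization of the dual already established in Proposition \ref{prop: the dual code in the second generic construction}. By definition the Hull is the intersection of a code with its own dual, so $\text{Hull}(\mathcal{C}_D) = \mathcal{C}_D \cap \mathcal{C}_D^\perp$. The entire content of the statement therefore reduces to rewriting $\mathcal{C}_D^\perp$ in the convenient form supplied by the earlier proposition and simplifying.

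First I would substitute $\mathcal{C}_D^\perp = \mathcal{L}^\perp \cap \mathbb{F}_p^n$ into this expression, obtaining $\text{Hull}(\mathcal{C}_D) = \mathcal{C}_D \cap \mathcal{L}^\perp \cap \mathbb{F}_p^n$. The only remaining observation is that $\mathcal{C}_D$ is by construction a linear code over $\mathbb{F}_p$, so that $\mathcal{C}_D \subseteq \mathbb{F}_p^n$; consequently the factor $\mathbb{F}_p^n$ is absorbed and contributes nothing to the intersection. This yields $\text{Hull}(\mathcal{C}_D) = \mathcal{L}^\perp \cap \mathcal{C}_D$, which is precisely the claim.

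This mirrors, step by step, the proof of Proposition \ref{prop: hull code in the first generic construction} in the first generic construction; the only structural difference is that there the dual was an intersection of two $\mathbb{F}_q$-duals with $\mathbb{F}_p^q$, whereas here a single $\mathcal{L}^\perp$ appears. I do not expect any genuine obstacle: the result is essentially a one-line consequence of the prior dual characterization, and the sole point demanding (minimal) care is the inclusion $\mathcal{C}_D \subseteq \mathbb{F}_p^n$ that allows the $\mathbb{F}_p^n$ term to be dropped. No computation beyond this set-theoretic bookkeeping should be required.
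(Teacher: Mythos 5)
Your proposal is correct and follows exactly the paper's own argument: apply the definition $\text{Hull}(\mathcal{C}_D)=\mathcal{C}_D\cap\mathcal{C}_D^\perp$, substitute the characterization $\mathcal{C}_D^\perp=\mathcal{L}^\perp\cap\mathbb{F}_p^n$ from Proposition \ref{prop: the dual code in the second generic construction}, and absorb $\mathbb{F}_p^n$ using $\mathcal{C}_D\subseteq\mathbb{F}_p^n$. You even make explicit the one step (the absorption of $\mathbb{F}_p^n$) that the paper leaves implicit, so there is nothing to add.
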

\begin{proof}
    The assertion comes directly from theorem \ref{prop: the dual code in the second generic construction}, in fact 
    $$\text{Hull}(\mathcal{C}_D)=\mathcal{C}_D^\perp\cap \mathcal{C}_D= \mathcal{L}^\perp\cap\mathbb{F}_p^n\cap \mathcal{C}_D=\mathcal{L}^\perp\cap \mathcal{C}_D.$$
\end{proof}
For this reason, we can characterize the Hull code also in the following way.
\begin{prop}\label{prop: hull code in the second generic construction via kernel}
    Let $D=\{d_1,\dots,d_n\}\subseteq \mathbb{F}_q$ be a defining set for a code in the second generic construction. Define $\textbf{c}_x:=(\text{Tr}_{q_p}(xd))_{d\in D}$
    and consider the following linear mapping
    $$\begin{aligned}
    \varphi:\mathcal{C}_D &\longrightarrow \mathbb{F}_q\\ \textbf{c}_x &\longmapsto \sum_{d\in D}\text{Tr}(xd)d.
\end{aligned}$$
Then 
$$\text{Hull}(\mathcal{C}_D)=\text{ker}(\varphi)$$
and in particular, if $\text{dim}(\mathcal{C}_D)=k$ then 
$$\text{dim}(\text{Hull}(\mathcal{C}_D))=l\,\,\iff \,\, \text{rk}(\varphi)=k-l.$$
\end{prop}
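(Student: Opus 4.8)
The plan is to reduce the claim to the description of the Hull already obtained in Proposition \ref{prop: hull code in the second generic construction}, namely $\text{Hull}(\mathcal{C}_D)=\mathcal{L}^\perp\cap\mathcal{C}_D$, and then to recognise the intersection with $\mathcal{L}^\perp$ as the kernel of the map $\varphi$.

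First I would check that $\varphi$ is well defined and $\mathbb{F}_p$-linear. The one delicate point is well-definedness: the rule is written through a preimage $x$, but whenever $\text{dim}(\mathcal{C}_D)<m$ several elements $x$ yield the same codeword $\textbf{c}_x$. However, the quantity $\sum_{d\in D}\text{Tr}_{q/p}(xd)\,d$ involves $x$ only through the scalars $\text{Tr}_{q/p}(xd)$, which are exactly the coordinates of $\textbf{c}_x$. Hence if $\textbf{c}_x=\textbf{c}_{x'}$ the two sums coincide term by term, so $\varphi$ genuinely depends only on the codeword. Writing a generic codeword as $\textbf{c}=(c_1,\dots,c_n)$, this shows $\varphi(\textbf{c})=\sum_{i=1}^n c_i d_i$, from which $\mathbb{F}_p$-linearity is immediate, since the $d_i$ are fixed and the coordinates $c_i$ range over $\mathbb{F}_p$.

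Next I would identify the kernel. By Proposition \ref{prop: the dual code in the second generic construction}, a word $(c_1,\dots,c_n)$ lies in $\mathcal{L}^\perp$ exactly when $\sum_{i=1}^n c_i d_i=0$; for a codeword $\textbf{c}\in\mathcal{C}_D$ this is precisely the condition $\varphi(\textbf{c})=0$. Therefore $\mathcal{L}^\perp\cap\mathcal{C}_D=\text{ker}(\varphi)$, and combining this with Proposition \ref{prop: hull code in the second generic construction} gives $\text{Hull}(\mathcal{C}_D)=\text{ker}(\varphi)$.

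Finally, the dimension equivalence is a direct application of the rank--nullity theorem to the $\mathbb{F}_p$-linear map $\varphi:\mathcal{C}_D\to\mathbb{F}_q$. Since $\text{dim}(\mathcal{C}_D)=k$, we have $\text{dim}(\text{ker}(\varphi))+\text{rk}(\varphi)=k$, so $\text{dim}(\text{Hull}(\mathcal{C}_D))=k-\text{rk}(\varphi)$, which is exactly the stated equivalence $\text{dim}(\text{Hull}(\mathcal{C}_D))=l \iff \text{rk}(\varphi)=k-l$. I do not anticipate a genuine obstacle here; the only step that merits a careful sentence is the well-definedness of $\varphi$, settled by the remark that its defining formula depends on $x$ solely through the coordinates of $\textbf{c}_x$.
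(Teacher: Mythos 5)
Your proposal is correct and follows essentially the same route as the paper's own proof: reduce to Proposition~\ref{prop: hull code in the second generic construction}, observe that for a codeword the condition of lying in $\mathcal{L}^\perp$ is exactly $\varphi=0$ (via Proposition~\ref{prop: the dual code in the second generic construction}), and conclude the dimension statement by rank--nullity. The one point you treat more carefully than the paper is the well-definedness of $\varphi$ when $\textbf{c}_x=\textbf{c}_{x'}$ for distinct $x,x'$ (i.e.\ when $\dim(\mathcal{C}_D)<m$), which the paper subsumes under ``a quick computation''; this is a worthwhile clarification but not a different argument.
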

\begin{proof}
    A quick computation shows that the mapping is actually linear and from \ref{prop: hull code in the second generic construction}, we have that $\textbf{c}_x\in\text{Hull}(\mathcal{C}_D)$ if and only if
    $$\sum_{d\in D}\text{Tr}_{q/p}(xd)d=0,$$
so this let us conclude that $\text{Hull}(\mathcal{C}_D)=\text{ker}(\varphi)$ and also the assertion on the dimension of the Hull code.
\end{proof}
Also in this case, considering that a codeword in the Hull code is actually in the dual as well, we can restate all the propositions in section 5 to obtain a necessary condition for a codeword in $\mathcal{C}_D$ to be in the Hull; as an example, we propose the first proposition, the others can be obtained in the same way.
\\\\
Consider a function $f:\mathbb{F}_q\rightarrow \mathbb{F}_q$, $q=p^m$, which we will use the define the code in the second generic construction, with $p$ odd, using the defining set
$$D(f)=\{f(x)\,|\,x\in\mathbb{F}_q\}\setminus \{0\}=\{f(x_1),\dots,f(x_n)\}.$$
Now we define the following function 
$$\begin{aligned}
    g:\mathbb{F}_q &\longrightarrow \mathbb{F}_p\\ x &\longmapsto \text{Tr}_{q/p}(f(x)).
\end{aligned}$$
If we suppose that $g$ is weakly regular bent, as we already saw then there exists another function $g^*:\mathbb{F}_q\rightarrow\mathbb{F}_p$ such that, for $b\in\mathbb{F}_q$:
$$\hat{\chi}_{g}(b)=\epsilon \sqrt{p^*}^m\zeta^{g^*(b)}$$
and 
$$\hat{\chi}_{g^*}(b)=\frac{\epsilon p^m}{\sqrt{p^*}^m}\zeta^{g(x)},$$
where $\epsilon=\pm 1$ is the sign of the Walsh transform of $f(x)$, $p^*=(\frac{-1}{p})p$ and $\zeta=e^{\frac{2\pi i}{p}}$ is the primitive $p$-th root of unity. In this case, with the same notation, we have the following necessary condition, analogous to the one in the first generic construction. 
\begin{prop}\label{prop: hull necessary condition in the second generic construction for weakly regular bent functions}
    Let $f:\mathbb{F}_q\rightarrow\mathbb{F}_q$ be a function such that the previously defined function $g$ is weakly regular bent and consider $\textbf{c}_x\in\mathcal{C}_{D(f)}$. If $\textbf{c}_x\in\text{Hull}(\mathcal{C}_{D(f)})$ then 
    \begin{enumerate}
        \item if $f$ respects the scalar multiplication:
        $$\prod _{i=1}^n \hat{\chi}_{g^*}(\text{Tr}(xf(x_i))x_i)=\big(\frac{p^m}{\epsilon\sqrt{p^*}^m}\big )^n,$$
        or, equivalently,
        $$\mathfrak{Im}(\prod _{i=1}^n \hat{\chi}_{g^*}(\text{Tr}(xf(x_i))x_i))=0;$$
        \item for a generic $f$:
        $$\prod _{i=1}^n \big(\hat{\chi}_{g^*}(x_i)\big)^{\text{Tr}(xf(x_i))}=\prod_{i=1}^n\big(\frac{p^m}{\epsilon\sqrt{p^*}^m}\big )^{\text{Tr}(xf(x_i))},$$
        or, equivalently, 
        $$\mathfrak{Im}(\prod _{i=1}^n \big(\hat{\chi}_{g^*}(x_i)\big)^{\text{Tr}(xf(x_i))})=0.$$
    \end{enumerate}
\end{prop}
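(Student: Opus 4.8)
The plan is to deduce this statement as the Hull analogue of Proposition~\ref{prop: necessary condition in the second generic construction for weakly regular bent functions}, exploiting the fact that a Hull codeword is in particular a dual codeword \emph{and} that its coordinates are given explicitly. First I would extract the Hull membership condition. Since $\textbf{c}_x \in \text{Hull}(\mathcal{C}_{D(f)}) \subseteq \mathcal{C}_{D(f)}^\perp$, either the kernel description of Proposition~\ref{prop: hull code in the second generic construction via kernel} or Proposition~\ref{prop: the dual code in the second generic construction} gives, with $D(f)=\{f(x_1),\dots,f(x_n)\}$,
$$\sum_{i=1}^n \text{Tr}_{q/p}(x f(x_i))\, f(x_i) = 0.$$
The key observation is that the $i$-th coordinate of $\textbf{c}_x$ is exactly $c_i := \text{Tr}_{q/p}(x f(x_i))$, so this relation is precisely $\sum_{i=1}^n c_i f(x_i) = 0$, the same linear identity that drives the proof of Proposition~\ref{prop: necessary condition in the second generic construction for weakly regular bent functions}. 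Everything then reduces to substituting $c_i = \text{Tr}_{q/p}(x f(x_i))$ into that argument.

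From the weakly regular bent relation $\hat{\chi}_{g^*}(b) = \frac{\epsilon p^m}{\sqrt{p^*}^m}\zeta^{\text{Tr}_{q/p}(f(b))}$, equivalently $\frac{\epsilon\sqrt{p^*}^m}{p^m}\hat{\chi}_{g^*}(b) = \zeta^{\text{Tr}_{q/p}(f(b))}$, I would split into the two cases. If $f$ respects scalar multiplication, then for $c_i \in \mathbb{F}_p$ we have $f(c_i x_i) = c_i f(x_i)$, hence
$$\frac{\epsilon\sqrt{p^*}^m}{p^m}\hat{\chi}_{g^*}(c_i x_i) = \zeta^{\text{Tr}_{q/p}(c_i f(x_i))};$$
taking the product over $i$ and applying $\sum_i c_i f(x_i) = 0$ collapses the exponent to $0$, which after substituting $c_i = \text{Tr}_{q/p}(x f(x_i))$ yields the first identity. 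For a generic $f$ one instead raises the fixed-base identity $\frac{\epsilon\sqrt{p^*}^m}{p^m}\hat{\chi}_{g^*}(x_i) = \zeta^{\text{Tr}_{q/p}(f(x_i))}$ to the power $c_i$, multiplies over $i$, and again uses $\sum_i c_i f(x_i) = 0$ to obtain the second identity.

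For the equivalent imaginary-part formulations I would note that each factor $\hat{\chi}_{g^*}(\cdot)$ is the fixed constant $K := \frac{\epsilon p^m}{\sqrt{p^*}^m}$ times a $p$-th root of unity, so the entire product equals $K^n \zeta^{S}$ for a single exponent $S = \text{Tr}_{q/p}\big(\sum_i c_i f(x_i)\big)$. Since $\tfrac1\epsilon = \epsilon$, one checks $K = \frac{p^m}{\epsilon\sqrt{p^*}^m}$, so the two main identities assert exactly that $\zeta^{S} = 1$. As $p$ is odd, the only real $p$-th root of unity is $1$; hence, once the real prefactor is taken into account, the vanishing of the imaginary part of the product is equivalent to $\zeta^{S}=1$, i.e. to the main identity, exactly as argued in Proposition~\ref{prop: necessary condition in the second generic construction for weakly regular bent functions}.

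I do not expect a genuine obstacle: the content is a transcription of the dual-code argument with the explicit Hull coordinates substituted in. The one step requiring real care is the very first one — correctly reading off the Hull condition from Proposition~\ref{prop: hull code in the second generic construction via kernel} and recognizing that the coefficient $c_i$ of the dual-code proof must here be specialized to $\text{Tr}_{q/p}(x f(x_i))$. A secondary care-point is the imaginary-part equivalence, where one must invoke that $p$ is odd to force the relevant root of unity to be $1$.
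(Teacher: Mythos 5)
Your proposal is correct and is exactly the argument the paper intends: the paper states this proposition without a written proof, remarking only that since a Hull codeword is in particular a dual codeword, Proposition~\ref{prop: necessary condition in the second generic construction for weakly regular bent functions} can be restated with the explicit coordinates $c_i=\text{Tr}_{q/p}(xf(x_i))$ substituted in. Your write-up carries out precisely that substitution (including the same treatment of the imaginary-part equivalence via the fixed prefactor times a $p$-th root of unity), so it matches the paper's approach rather than deviating from it.
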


From these necessary condition, also for the second generic construction we can obtain the following restriction on the computation of the Hull, using character theory.
\begin{prop}\label{prop: hull dual of the second generic construction and character theory}
    Consider the functions $g_i:\mathbb{F}_q\rightarrow\mathbb{F}_p$ of \ref{prop: necessary condition in the second general construction for the general case}, then the function
    $$\begin{aligned}
    \varphi:\mathcal{C}_{D(f)} &\longrightarrow \mathbb{C}\\ \textbf{c}_x &\longmapsto \prod_{i=1}^n\big(\hat{\chi}_{g_i}(1)+1-q\big)^{\text{Tr}(xf(x_i))}.
\end{aligned}$$
is an additive character of $\mathcal{C}_{D(f)}$, after the usual identifications, such that 
$$\text{Hull}(\mathcal{C}_{D(f)})\subseteq\text{ker}(\varphi).$$
\end{prop}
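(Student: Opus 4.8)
The plan is to recognise that $\varphi$ is nothing but the restriction to the subspace $\mathcal{C}_{D(f)}$ of the additive character already produced in \ref{prop: dual of the second generic construction and character theory}, and then to feed the Hull into the necessary condition \ref{prop: necessary condition in the second general construction for the general case}. First I would note that the exponent $\text{Tr}_{q/p}(xf(x_i))$ appearing in the definition of $\varphi$ is exactly the $i$-th coordinate of the codeword $\textbf{c}_x=(\text{Tr}_{q/p}(xf(x_i)))_{i=1,\dots,n}$. Writing $\textbf{c}_x=(c_1,\dots,c_n)$ with $c_i=\text{Tr}_{q/p}(xf(x_i))\in\mathbb{F}_p$, the map reads $\varphi(\textbf{c}_x)=\prod_{i=1}^n(\hat{\chi}_{g_i}(1)+1-q)^{c_i}$, which depends only on the coordinate vector and is therefore well defined on $\mathcal{C}_{D(f)}$, independently of the choice of $x$ producing that codeword.

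Next I would establish that $\varphi$ is an additive character. Recall from the computation carried out in the proof of \ref{prop: necessary condition in the second general construction for the general case} that each base satisfies $\hat{\chi}_{g_i}(1)+1-q=\zeta^{\text{Tr}_{q/p}(f(x_i))}$, a $p$-th root of unity; this both makes the exponentiation by an element of $\mathbb{F}_p$ meaningful and yields $\varphi(\textbf{c}_x)=\zeta^{\sum_{i=1}^n \text{Tr}_{q/p}(f(x_i))\,c_i}$. Since this exponent is $\mathbb{F}_p$-linear in $(c_1,\dots,c_n)$, the map $\varphi$ is multiplicative under addition of codewords, hence an additive character of $\mathcal{C}_{D(f)}$ --- indeed precisely the restriction of the character of \ref{prop: dual of the second generic construction and character theory} to the additive subgroup $\mathcal{C}_{D(f)}\subseteq\mathbb{F}_p^n$, and the restriction of a character to a subgroup is again a character.

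Finally, for the inclusion I would take $\textbf{c}_x\in\text{Hull}(\mathcal{C}_{D(f)})$. By \ref{prop: hull code in the second generic construction} the Hull is contained in $\mathcal{C}_{D(f)}^\perp$, so $\textbf{c}_x\in\mathcal{C}_{D(f)}^\perp$, and by \ref{prop: the dual code in the second generic construction} this forces $\sum_{i=1}^n c_i f(x_i)=0$. Substituting into the displayed formula for $\varphi$ gives $\varphi(\textbf{c}_x)=\zeta^{\text{Tr}_{q/p}(\sum_{i=1}^n c_i f(x_i))}=\zeta^0=1$, that is $\textbf{c}_x\in\text{ker}(\varphi)$, which is the desired inclusion. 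Equivalently, one may simply invoke the necessary condition \ref{prop: necessary condition in the second general construction for the general case} directly with $c_i=\text{Tr}_{q/p}(xf(x_i))$.

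I do not anticipate a genuine obstacle here: the statement is the Hull-level reformulation of the dual-level \ref{prop: dual of the second generic construction and character theory}, and every ingredient is already available. The one point deserving care is the well-definedness of the $\mathbb{F}_p$-exponentiation, which rests entirely on the bases $\hat{\chi}_{g_i}(1)+1-q$ being $p$-th roots of unity; once this is recorded, both the character property and the inclusion $\text{Hull}(\mathcal{C}_{D(f)})\subseteq\text{ker}(\varphi)$ follow immediately.
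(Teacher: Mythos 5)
Your proof is correct and takes essentially the same route the paper intends: the paper states this proposition without proof, as an immediate consequence of the dual-level result \ref{prop: dual of the second generic construction and character theory} together with the fact that the Hull lies in the dual (via \ref{prop: hull code in the second generic construction} and \ref{prop: the dual code in the second generic construction}), which is exactly what you verify. The details you supply --- the bases $\hat{\chi}_{g_i}(1)+1-q=\zeta^{\text{Tr}_{q/p}(f(x_i))}$ being $p$-th roots of unity, well-definedness of the $\mathbb{F}_p$-exponentiation on coordinates of codewords, and the observation that $\varphi$ is the restriction of the character of \ref{prop: dual of the second generic construction and character theory} to the subgroup $\mathcal{C}_{D(f)}\subseteq\mathbb{F}_p^n$ --- are precisely the ``straightforward computation'' that the paper leaves implicit in the analogous dual-level proofs.
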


\subsection{Construction of codes of fixed Hull dimension}

In this section we would like to present a construction method of linear codes of fixed Hull dimension, as an application of the Hull characterizations just seen. 
\\\\
Let $\mathbb{F}_p$ be a finite field in which $-1$ is a quadratic residue (for example $p=5$) and in particular let $\alpha\in\mathbb{F}_p$ be a root of the polynomial $x^2+1$. Also consider $\beta \in\mathbb{F}_p$ such that $\beta^2\ne -1$ and take $d_1,\dots,d_k\in\mathbb{F}_q$, $q=p^m$, which are linear independent over $\mathbb{F}_p$ and $0\le l\le k$.
\\
Now we consider the following defining set
$$D=\{d_1,\dots,d_k,\alpha d_1,\dots, \alpha d_l, \beta d_{l+1},\dots,\beta_{d_k}\}.$$
\begin{lemma}
    The linear code $\mathcal{C}_D$ built with the second generic construction is a $[2k,k,2]$ linear code over $\mathbb{F}_p$ of Hull dimension $l$.
\end{lemma}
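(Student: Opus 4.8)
The plan is to read off all four invariants from the structural results already in hand, namely Lemma~\ref{lemma: dimension of codes in the second generic construction} for the dimension and Propositions~\ref{prop: the dual code in the second generic construction} and~\ref{prop: hull code in the second generic construction} for the dual and the Hull. The length and dimension come first. Since $\alpha,\beta\in\mathbb{F}_p$, every element of $D$ is an $\mathbb{F}_p$-multiple of one of the $d_i$, so the $\mathbb{F}_p$-span of $D$ equals $\langle d_1,\dots,d_k\rangle$, of dimension $k$; Lemma~\ref{lemma: dimension of codes in the second generic construction} then yields $\dim(\mathcal{C}_D)=k$. The length is $|D|=2k$ provided the listed elements are pairwise distinct, and this is the point where I would pause: distinctness forces $\beta\notin\{0,1\}$ (and uses $\alpha\neq 1$, which is automatic from $\alpha^2=-1$ in odd characteristic), a requirement slightly stronger than $\beta^2\neq -1$ that I would make explicit.

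Next I would describe $\mathcal{C}_D$ and $\mathcal{C}_D^\perp$ coordinate-wise. Setting $t_i:=\mathrm{Tr}_{q/p}(x d_i)$, the independence of $d_1,\dots,d_k$ makes $(t_1,\dots,t_k)$ range over all of $\mathbb{F}_p^k$ as $x$ runs over $\mathbb{F}_q$, so a generic codeword of $\mathcal{C}_D$ is $(t_1,\dots,t_k,\alpha t_1,\dots,\alpha t_l,\beta t_{l+1},\dots,\beta t_k)$. For the dual, Proposition~\ref{prop: the dual code in the second generic construction} says $(c_1,\dots,c_{2k})\in\mathcal{C}_D^\perp$ iff $\sum_i c_i D_i=0$; grouping the sum by $d_i$ and invoking independence once more turns this single relation into the decoupled scalar conditions $c_i+\alpha c_{k+i}=0$ for $i\le l$ and $c_i+\beta c_{k+i}=0$ for $i>l$.

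The minimum distance then follows immediately: any nonzero codeword has some $t_i\neq 0$, which makes both coordinate $i$ and coordinate $k+i$ nonzero (since $\alpha,\beta\neq 0$), so $d\ge 2$, while activating a single $t_i$ produces a weight-$2$ word, giving $d=2$. The heart of the argument is the Hull. By Proposition~\ref{prop: hull code in the second generic construction}, $\mathrm{Hull}(\mathcal{C}_D)=\mathcal{C}_D^\perp\cap\mathcal{C}_D$, so I would substitute the generic codeword into the two dual conditions. For $i\le l$ the condition reads $t_i(1+\alpha^2)=0$, which holds automatically because $\alpha^2=-1$; for $i>l$ it reads $t_i(1+\beta^2)=0$, which forces $t_i=0$ since $\beta^2\neq -1$. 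Hence the Hull is precisely the set of words with $t_{l+1}=\dots=t_k=0$ and $t_1,\dots,t_l$ free, so $\dim\mathrm{Hull}(\mathcal{C}_D)=l$.

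I expect no serious obstacle in the computation itself, which is short once the two codes are written in the parametrized form above. The genuine point of the argument is recognizing that the defining set is engineered so that $1+\alpha^2=0$ while $1+\beta^2\neq 0$, which is exactly what separates the $l$ self-orthogonal coordinate blocks (contributing to the Hull) from the remaining $k-l$ blocks. The one item demanding real care is the distinctness-and-length bookkeeping flagged in the first paragraph, since the stated hypothesis $\beta^2\neq -1$ does not by itself exclude $\beta\in\{0,1\}$, and these degenerate choices would break both the length $2k$ and the claim $d=2$.
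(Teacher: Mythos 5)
Your proof is correct, and its engine is the same as the paper's: both rest on Lemma~\ref{lemma: dimension of codes in the second generic construction} for the dimension and on the Hull characterization (Proposition~\ref{prop: hull code in the second generic construction}), and both reduce Hull membership to the conditions $t_i(1+\alpha^2)=0$ for $i\le l$ and $t_i(1+\beta^2)=0$ for $i>l$, so that everything hinges on $1+\alpha^2=0$ versus $1+\beta^2\neq 0$. Where you diverge is the bookkeeping. The paper stays upstream in $\mathbb{F}_q$: it identifies $\mathcal{C}_D\cong\mathbb{F}_q/\ker\psi$ with $\ker\psi=\bigcap_{d\in D}d^{-1}\ker(\mathrm{Tr}_{q/p})$, realizes the Hull as a quotient of $\bigcap_{i=l+1}^{k}d_i^{-1}\ker(\mathrm{Tr}_{q/p})$ by $\ker\psi$, and gets $l$ as a difference of dimensions of trace-kernel intersections. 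You work downstream in $\mathbb{F}_p^{2k}$: you parametrize codewords by $(t_1,\dots,t_k)\in\mathbb{F}_p^k$, write $\mathcal{C}_D^\perp$ explicitly as the decoupled relations $c_i+\alpha c_{k+i}=0$ and $c_i+\beta c_{k+i}=0$ (which the paper never does), and read off the Hull as the coordinate subspace $t_{l+1}=\dots=t_k=0$. Your route is more elementary, avoids the quotient argument (and the paper's dimension typo ``$m+k-l$'' for $m-k+l$), is cleaner about the phrase ``linear independence of the elements of $D$'' (the $2k$ elements of $D$ are of course dependent; only $d_1,\dots,d_k$ are independent, which your grouping makes precise), and makes the minimum-distance claim airtight where the paper only remarks that all weights are even. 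Finally, your distinctness caveat is a genuine catch rather than pedantry: the hypothesis $\beta^2\neq-1$ does not exclude $\beta\in\{0,1\}$, yet with $\beta=1$ the set $D$ has only $k+l$ elements and with $\beta=0$ the last $k-l$ listed elements collapse to $0$, so both the length $2k$ and the value $d=2$ fail; the construction really needs $\beta\notin\{0,1\}$, while $\alpha\notin\{0,1\}$ is automatic from $\alpha^2=-1$ in odd characteristic.
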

\begin{proof}
    Since $|D|=2k$, the length of the code is equal to $2k$; also, its dimension is equal to $k$ thanks to lemma \ref{lemma: dimension of codes in the second generic construction}.
    \\\\
    For $x\in\mathbb{F}_q$, we define $\textbf{c}_x:=(\text{Tr}_{q/p}(xd))_{d\in D}$; then theorem \ref{prop: hull code in the second generic construction} tells us that $\textbf{c}_x\in\text{Hull}(\mathcal{C}_D)$ if and only if
    $$\sum_{d\in D}\text{Tr}_{q/p}(xd)d=0.$$
    Thanks to the linear independence of the elements in $D$, we have that $\textbf{c}_x\in\text{Hull}(\mathcal{C}_D)$ if and only if
    $$
    \begin{cases}
  \text{Tr}(xd_1)d_1+\text{Tr}(x\alpha d_1)\alpha d_1=0 \\
       \text{Tr}(xd_2)d_2+\text{Tr}(x\alpha d_2)\alpha d_2=0 \\
       \dots \\
       \text{Tr}(xd_l)d_l+\text{Tr}(x\alpha d_l)\alpha d_l=0 \\
       \text{Tr}(xd_{l+1})d_{l+1}+\text{Tr}(x\beta d_{l+1})\beta d_{l+1}=0 \\
       \dots \\
       \text{Tr}(xd_k)d_k+\text{Tr}(x\beta d_k)\beta d_k=0
\end{cases}
\iff 
\begin{cases}
       (\alpha^2+1)\text{Tr}(xd_1)d_1=0 \\
       (\alpha^2+1)\text{Tr}(xd_2)d_2=0 \\
       \dots \\
       (\alpha^2+1)\text{Tr}(xd_l)d_l=0 \\
       (\beta^2+1)\text{Tr}(xd_{l+1})d_{l+1}=0 \\
       \dots \\
       (\beta^2+1)\text{Tr}(xd_k)d_k=0.
\end{cases}
$$
Hence if and only if 
$$\text{Tr}(xd_i)=0$$
for every $i=l+1,\dots,k$, so if and only if
$$x\in\bigcap_{i=l+1}^kd_i^{-1}\text{ker}(\text{Tr}_{q/p}).$$
If we consider the mapping $\psi:\mathbb{F}_q\rightarrow \mathcal{C}_D$ such that $\psi(x):=\textbf{c}_x$, then we have the identification
$$\frac{\mathbb{F}_q}{\text{ker}(\psi)}\cong \mathcal{C}_D,$$
where
$$\text{ker}(\psi)=\bigcap_{d\in D}d^{-1}\text{ker}(\text{Tr}_{q/p}).$$
Now $\text{dim}(\mathcal{C}_D)=k$ and $\text{dim}(\mathbb{F}_q)=m$, then 
$$\text{dim}\big (\bigcap_{d\in D}d^{-1}\text{ker}(\text{Tr}_{q/p})\big )=m-k.$$
With this identification, we have that 
$$\text{Hull}(\mathcal{C}_D)\cong \frac{\bigcap_{i=l+1}^kd_i^{-1}\text{ker}(\text{Tr}_{q/p})}{\bigcap_{d\in D}d_i^{-1}\text{ker}(\text{Tr}_{q/p})}.$$
Also, $\text{dim}(\bigcap_{i=l+1}^kd_i^{-1}\text{ker}(\text{Tr}_{q/p}))=m+k-l$ (to show this, it suffices to consider the defining set $D'=\{d_{l+1},\dots,d_k\}$ and apply lemma \ref{lemma: dimension of codes in the second generic construction}). Hence $\text{dim}_{\mathbb{F}_p}(\text{Hull}(\mathcal{C}_D))=m+l-k-(m-k)=l$, as we wanted to show.
\\\\
For the minimal distance, we can observe that, thanks to lemma \ref{lemma: dimension of codes in the second generic construction}, from the first $k$ entries of the code we get a code isomorphic to $\mathbb{F}_p^k$; then the other entries are multiplied for $\alpha$ or $\beta$. Hence, we can conclude that each weight is even and in particular that the minimal distance is equal to 2.
\end{proof}
\begin{remark}
    From the latter observation in the proof, we can actually deduce the complete weight enumerator polynomial of the code, which is the following.
    $$p(x)=1+(p-1)\binom{k}{1}x^2+(p-1)^2\binom{k}{2}x^4+\dots+(p-1)^s\binom{k}{s}x^{2s}+\dots+(p-1)^k\binom{k}{k}x^{2k}.$$
\end{remark}
\begin{remark}
    Since the construction method gives linear codes of any Hull dimension, we can especially choose some small values of $l$ in order to get some small Hull codes, on which research in coding theory has recently been focusing because of the faster algorithms in the context of the permutations groups and of the numerous applications in quantum coding. 
\end{remark}

\subsection{An example of LCD code}
As an application of the Hull characterization previously shown, in this section we would like to present a construction method of LCD codes via the second generic construction method over a field of even characteristic. 
\\\\
Let $q=2^a$ and $r=q^b$ with $a,b\in\mathbb{N}^*$; fix $k$ linearly independent elements $d_1,\dots,d_k$ of $\mathbb{F}_r$ over $\mathbb{F}_q$, with $k$ even and define the set $D_1=\{d_1,\dots,d_k\}$. Let $D_2$ be the set of elements in $D_1$, for example
$$D_2=\{d_1+d_2,d_3+d_4,\dots,d_{k-1}+d_k\}.$$
We consider the defining set $D=D_1\cup D_2$ and we define the code
$$\mathcal{C}_D=\{(\text{Tr}_{r/q}(xd)_{d\in D})\,|\, x\in\mathbb{F}_r\}.$$
\begin{lemma}
    The obtain code $\mathcal{C}_D$ is LCD of length $\frac{3k}{2}$ and dimension $k$.
\end{lemma}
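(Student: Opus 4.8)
The plan is to dispose of length and dimension immediately and then spend the bulk of the argument showing that the Hull is trivial, since being LCD means exactly $\mathrm{Hull}(\mathcal{C}_D)=\{0\}$. For the length I would first check that the $3k/2$ listed elements of $D$ are pairwise distinct: the $d_i$ are distinct by linear independence, and no sum $d_{2j-1}+d_{2j}$ can equal another $d_i$ or another such sum without producing a nontrivial $\mathbb{F}_q$-linear dependence among $d_1,\dots,d_k$ (we are in characteristic $2$, so e.g. $d_{2j-1}+d_{2j}=d_{2j'-1}+d_{2j'}$ would force $d_{2j-1}+d_{2j}+d_{2j'-1}+d_{2j'}=0$). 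Hence $|D|=k+k/2=3k/2$. For the dimension, since every element of $D_2$ is an $\mathbb{F}_q$-combination of the $d_i$ we have $\langle D\rangle=\langle d_1,\dots,d_k\rangle$, so the analogue of Lemma \ref{lemma: dimension of codes in the second generic construction} for the base field $\mathbb{F}_q$ and extension $\mathbb{F}_r$ gives $\dim(\mathcal{C}_D)=\dim_{\mathbb{F}_q}\langle D\rangle=k$.

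For the LCD property I would invoke the kernel description of the Hull (the analogue of Proposition \ref{prop: hull code in the second generic construction via kernel} over $\mathbb{F}_q$): writing $\textbf{c}_x=(\text{Tr}_{r/q}(xd))_{d\in D}$, one has $\textbf{c}_x\in\mathrm{Hull}(\mathcal{C}_D)$ if and only if $\sum_{d\in D}\text{Tr}_{r/q}(xd)\,d=0$. Setting $t_i:=\text{Tr}_{r/q}(xd_i)\in\mathbb{F}_q$ and using $\mathbb{F}_q$-linearity of the trace to write $\text{Tr}_{r/q}\bigl(x(d_{2j-1}+d_{2j})\bigr)=t_{2j-1}+t_{2j}$, this sum splits as the $D_1$-part $\sum_{i=1}^{k}t_i d_i$ plus the $D_2$-part $\sum_{j=1}^{k/2}(t_{2j-1}+t_{2j})(d_{2j-1}+d_{2j})$.

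The crux is the characteristic-$2$ cancellation. Expanding the $D_2$-part gives, for each $j$, the four terms $t_{2j-1}d_{2j-1}$, $t_{2j-1}d_{2j}$, $t_{2j}d_{2j-1}$, $t_{2j}d_{2j}$; adding the $D_1$-part, the diagonal terms $t_{2j-1}d_{2j-1}$ and $t_{2j}d_{2j}$ now occur twice and vanish over $\mathbb{F}_{2^a}$. What survives is the purely off-diagonal sum $\sum_{j=1}^{k/2}\bigl(t_{2j}\,d_{2j-1}+t_{2j-1}\,d_{2j}\bigr)$, an $\mathbb{F}_q$-linear combination of $d_1,\dots,d_k$ whose coefficient sequence is obtained from $(t_1,\dots,t_k)$ by swapping the two entries within each pair. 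By the $\mathbb{F}_q$-linear independence of $d_1,\dots,d_k$ this combination vanishes precisely when every coefficient vanishes, and since the swap is a bijection this happens if and only if every $t_i=0$.

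Finally I would close the loop: if all $t_i=0$ then every entry of $\textbf{c}_x$ is zero, because the $D_1$-entries are the $t_i$ and the $D_2$-entries are the sums $t_{2j-1}+t_{2j}$; hence $\textbf{c}_x=0$ and $\mathrm{Hull}(\mathcal{C}_D)=\{0\}$, i.e. $\mathcal{C}_D$ is LCD. The one point requiring care — and the step I expect to be the main obstacle to state cleanly, rather than hard in substance — is the bookkeeping of the cancellation: one must keep the pairing $(d_{2j-1},d_{2j})$ fixed throughout, so that the $D_1$-contribution annihilates exactly the diagonal and leaves the nondegenerate ``swapped'' system, whose triviality then forces all $t_i$ to vanish.
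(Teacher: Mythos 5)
Your proof is correct and follows essentially the same route as the paper: both rest on the kernel description of the Hull (Proposition \ref{prop: hull code in the second generic construction via kernel}, in its $\mathbb{F}_q$/$\mathbb{F}_r$ version), perform the same characteristic-$2$ expansion of $\text{Tr}(x(d_{2j-1}+d_{2j}))(d_{2j-1}+d_{2j})$ in which the diagonal terms cancel, and conclude from the $\mathbb{F}_q$-linear independence of $d_1,\dots,d_k$ that the surviving ``swapped'' combination forces all $\text{Tr}_{r/q}(xd_i)=0$, hence $\textbf{c}_x=0$. Your extra checks --- that the elements of $D$ are pairwise distinct (so the length really is $\tfrac{3k}{2}$) and that Lemma \ref{lemma: dimension of codes in the second generic construction} is invoked in its analogue over the base field $\mathbb{F}_q$ --- are details the paper leaves implicit rather than a different argument, since the paper compresses the final step into the statement that $\varphi$ has rank $k$.
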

\begin{proof}
    The length of the code is equal to $\frac{3k}{2}$ since $|D|=\frac{3k}{2}$ and its dimension is equal to $k$ thanks to lemma \ref{lemma: dimension of codes in the second generic construction}. To study the Hull of the code, we can use theorem \ref{prop: hull code in the second generic construction via kernel} and we compute
    $$\text{Tr}(x(d_1+d_2))(d_1+d_2)=\text{Tr}(xd_1)d_1+\text{Tr}(xd_2)d_2+\text{Tr}(xd_1)d_2+\text{Tr}(xd_2)d_1.$$
    Hence, thanks to the even characteristic, we have that the first $k$ rows of the matrix associated to $\varphi$ is permutationally equivalent to the transpose of the generator matrix of $\mathcal{C}_D$, which means that $\text{rk}(\varphi)=k$ and so 
    $$\text{dim}(\text{Hull}(\mathcal{C}_D))=k-\text{rk}(\varphi)=0.$$
\end{proof}
\begin{remark}
    If we consider theorem \ref{prop: the dual code in the second generic construction}, we can explore the dual of the code, in fact $(c_1,\dots,c_{\frac{3k}{2}})\in\mathbb{F}_p^{\frac{3k}{2}}$ is in the dual code if and only if
    $$\sum_{i=1}^{\frac{3k}{2}}c_id_i$$
    and thanks to the linear independence of the elements in $D$ we deduce that the dual code is determined by the last $k/2$ entries of the code, that its minimal distance is $3$, that every weight is divisible by 3 and that the complete weight enumerator polynomial is the following
    $$p(x)=1+(q-1)\binom{\frac{k}{2}}{1}x^3+(q-1)^2\binom{\frac{k}{2}}{2}x^6+\dots+(q-1)^i\binom{\frac{k}{2}}{i}x^{3i}+\dots+(q-1)^{\frac{k}{2}}x^{\frac{3k}{2}}.$$
\end{remark}
\begin{remark}
    We can also observe that the length of the code might be reduced with a smaller set $D_2$ and that, once $k$ is fixed, there are a priori
    $$\frac{k!}{(k/2)!2^{k/2}}$$
    linear codes of this shape.
\end{remark}

\section{The weight enumerator in the case of even characteristic}

In the case of even characteristic, we recall that we have no difference between bent and weakly regular bent functions, since in this case the image of the Walsh transform is real, so the dual function indicates just the sign of the transform. 
\\\\
For these reasons, the thesis of \ref{prop: first necessary condition in the first generic construction for weakly regular bent functions} and \ref{prop: second necessary condition in the first generic construction for weakly regular bent functions}, in the case $f$ respects the scalar multiplication, can be restated in the following way, in the same hypothesis:
        $$\prod _{i=1}^q \hat{\chi}_{g^*}(c_ix_i)>0;$$
On the other hand, for the second generic construction, the thesis of \ref{prop: necessary condition in the second generic construction for weakly regular bent functions}, always in the case $f$ respects the scalar multiplication, can be restated in the following way, in the same hypothesis:
        $$\prod _{i=1}^n \hat{\chi}_{g^*}(c_ix_i)>0.$$
In the even characteristic, the seen necessary conditions can help us to compute the weight of the codewords in relation with the Walsh transforms of the previously cited functions, in a similar way as Mesnager did in \cite{WeightWalshTransform} for the codewords in the actual first generic construction code. In particular, we have the following assertion in the first generic construction. 
\begin{prop}\label{prop: weight enumerator and Walsh transform in the first generic construction}
    Let $f:\mathbb{F}_q\rightarrow\mathbb{F}_q$ be a function such that one of the two functions $g$ defined in \ref{prop: first necessary condition in the first generic construction for weakly regular bent functions} or \ref{prop: second necessary condition in the first generic construction for weakly regular bent functions} is (weakly regular) bent and consider $\textbf{c}=(c_1,\dots,c_q)\in\mathcal{C}(f)^\perp$. Then, if 
    $$\alpha:= \prod _{i=1}^q \big(\hat{\chi}_{g^*}(x_i)\big)^{c_i}$$
    we have 
    $$\text{wt}(\textbf{c})=\frac{\operatorname{log}_2(\alpha^2)}{m}.$$
\end{prop}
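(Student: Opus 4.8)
The plan is to exploit the one feature that makes even characteristic special: since $p=2$ we have $\zeta=-1$, so every Walsh--Hadamard coefficient is a \emph{real} number, and the bentness hypothesis forces these reals to have a fixed modulus. Concretely, because $g$ is bent its dual $g^*$ is again bent, so $|\hat\chi_{g^*}(b)|^2=2^m$ for every $b\in\mathbb{F}_q$; equivalently each factor $\hat\chi_{g^*}(x_i)$ equals $\pm 2^{m/2}$ and hence $\big(\hat\chi_{g^*}(x_i)\big)^2=2^m$. This is the only structural input I would need about $g^*$, and it is exactly where the weakly regular bent assumption enters.

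First I would square the defining product and move the square inside, using that the entries $c_i$ lie in $\mathbb{F}_2=\{0,1\}$:
\begin{equation*}
\alpha^2=\Big(\prod_{i=1}^q \big(\hat\chi_{g^*}(x_i)\big)^{c_i}\Big)^2=\prod_{i=1}^q \big(\hat\chi_{g^*}(x_i)\big)^{2c_i}=\prod_{i=1}^q \Big(\big(\hat\chi_{g^*}(x_i)\big)^2\Big)^{c_i}.
\end{equation*}
The last equality is legitimate precisely because $c_i\in\{0,1\}$, so raising to the power $2c_i$ and raising the square to the power $c_i$ agree term by term. Substituting $\big(\hat\chi_{g^*}(x_i)\big)^2=2^m$ then gives
\begin{equation*}
\alpha^2=\prod_{i=1}^q (2^m)^{c_i}=2^{\,m\sum_{i=1}^q c_i}=2^{\,m\,\mathrm{wt}(\textbf{c})},
\end{equation*}
where the final step uses that for a binary vector the Hamming weight is exactly the number of nonzero coordinates, i.e. $\mathrm{wt}(\textbf{c})=\sum_{i=1}^q c_i$. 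Taking base-$2$ logarithms and dividing by $m$ yields the claimed identity $\mathrm{wt}(\textbf{c})=\log_2(\alpha^2)/m$.

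I do not expect a serious obstacle: the whole argument collapses to the observation that squaring annihilates the sign information carried by the dual $g^*$ and by the global sign $\epsilon$, leaving only the modulus $2^{m/2}$ of each Walsh coefficient. The one point worth stating carefully is the passage $\big(\hat\chi_{g^*}(x_i)\big)^{2c_i}=\big((\hat\chi_{g^*}(x_i))^2\big)^{c_i}$, which relies on $c_i\in\{0,1\}$ and would fail for a general exponent; this is the sole place where working over $\mathbb{F}_2$ is used beyond the reality of the coefficients. I would also note that the hypothesis $\textbf{c}\in\mathcal{C}(f)^\perp$ is not in fact needed to derive the squared identity, which holds for every binary $\textbf{c}$; rather, membership in the dual is what pins down the sign of $\alpha$ itself, via the positivity of $\prod_{i=1}^q(\hat\chi_{g^*}(x_i))^{c_i}$ recorded earlier in this section, so that $\alpha=+\,2^{\,m\,\mathrm{wt}(\textbf{c})/2}$ rather than merely $\pm$.
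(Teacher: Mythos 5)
Your proof is correct, but it follows a genuinely different route from the paper's. The paper's proof stays inside the framework of the earlier necessary conditions: since $\textbf{c}\in\mathcal{C}(f)^\perp$, Propositions \ref{prop: first necessary condition in the first generic construction for weakly regular bent functions} and \ref{prop: second necessary condition in the first generic construction for weakly regular bent functions} (read in characteristic $2$) give the exact value $\alpha=\prod_{i=1}^q(\epsilon 2^{m/2})^{c_i}=(\epsilon 2^{m/2})^{\text{wt}(\textbf{c})}$, after which taking $|\alpha|$ and then $\log_2$ yields the claim. You bypass those propositions entirely: squaring kills all sign information, and the only structural input you need is the pointwise identity $\big(\hat{\chi}_{g^*}(x_i)\big)^2=2^m$, i.e.\ that the dual $g^*$ again has Walsh coefficients of modulus $2^{m/2}$ (which also follows from the paper's relation $\hat{\chi}_{g^*}(b)=\epsilon\,2^{m/2}(-1)^{g(b)}$ in even characteristic). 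Your approach is more elementary and proves slightly more: as you observe, the identity $\text{wt}(\textbf{c})=\log_2(\alpha^2)/m$ then holds for \emph{every} $\textbf{c}\in\mathbb{F}_2^q$, so the hypothesis $\textbf{c}\in\mathcal{C}(f)^\perp$ is superfluous for the squared formula --- something the paper's argument cannot reveal, since it consumes dual membership at its very first step. What the paper's route buys in exchange is the signed value of $\alpha$, namely $(\epsilon 2^{m/2})^{\text{wt}(\textbf{c})}$, which is the part of the conclusion genuinely tied to the dual code. One small correction to your closing remark: the positivity recorded at the start of this section concerns the product $\prod_{i=1}^q\hat{\chi}_{g^*}(c_ix_i)$ in the scalar-multiplication case (where the number of factors is $q=2^m$, an even exponent), not the generic-exponent product $\alpha=\prod_{i=1}^q\big(\hat{\chi}_{g^*}(x_i)\big)^{c_i}$; for generic $f$, dual membership gives $\alpha$ the sign $\epsilon^{\text{wt}(\textbf{c})}$, which need not be $+1$, so the sign of $\alpha$ is pinned down only when $\epsilon=1$ or $\text{wt}(\textbf{c})$ is even.
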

\begin{proof}
    From \ref{prop: first necessary condition in the first generic construction for weakly regular bent functions} and \ref{prop: second necessary condition in the first generic construction for weakly regular bent functions} we have that 
    $$\alpha=\prod_{i=1}^q (\epsilon 2^{\frac{m}{2}})^{c_i}=(\epsilon 2^{\frac{m}{2}})^{\text{wt}(\textbf{c})},$$
    hence
    $$|\alpha|=(2^\frac{m}{2})^{\text{wt}(\textbf{c})};$$
    $$\operatorname{log}_2(|\alpha|)=\frac{m\text{wt}(\textbf{c})}{2},$$
    from which we derive the thesis.
\end{proof}
We have a similar assertion in the case of the second generic construction.
\begin{prop} \label{prop: weight enumerator and Walsh transform in the second generic construction}
    Let $f:\mathbb{F}_q\rightarrow\mathbb{F}_q$ be a function such that the function $g$ defined in \ref{prop: necessary condition in the second generic construction for weakly regular bent functions} is (weakly regular) bent and consider $\textbf{c}=(c_1,\dots,c_n)\in\mathcal{C}_{D(f)}^\perp$. Then, if 
    $$\alpha:= \prod _{i=1}^n \big(\hat{\chi}_{g^*}(x_i)\big)^{c_i}$$
    we have 
    $$\text{wt}(\textbf{c})=\frac{\operatorname{log}_2(\alpha^2)}{m}.$$
\end{prop}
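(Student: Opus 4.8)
The plan is to transpose verbatim the argument of Proposition~\ref{prop: weight enumerator and Walsh transform in the first generic construction} to the second-construction setting, the only changes being the length ($n$ in place of $q$) and the citation of the appropriate necessary condition. First I would apply the generic-$f$ part (point~2) of Proposition~\ref{prop: necessary condition in the second generic construction for weakly regular bent functions}: because $\textbf{c}=(c_1,\dots,c_n)\in\mathcal{C}_{D(f)}^\perp$, Proposition~\ref{prop: the dual code in the second generic construction} yields $\sum_{i=1}^n c_i f(x_i)=0$, whence
$$\alpha=\prod_{i=1}^n\big(\hat{\chi}_{g^*}(x_i)\big)^{c_i}=\prod_{i=1}^n\Big(\frac{p^m}{\epsilon\sqrt{p^*}^m}\Big)^{c_i}.$$
This is exactly the identity that drives the whole computation.

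The decisive feature of even characteristic is twofold. On the one hand each $c_i\in\mathbb{F}_2=\{0,1\}$, so that $\sum_{i=1}^n c_i$ coincides with the Hamming weight $\text{wt}(\textbf{c})$ of the codeword; on the other hand, as recalled at the opening of this section, for $p=2$ the Walsh coefficients are real and bentness of $g$ (hence of its dual $g^*$) forces $|\hat{\chi}_{g^*}(x_i)|=2^{m/2}$ for every $i$, so the normalising constant $\tfrac{p^m}{\epsilon\sqrt{p^*}^m}$ must be read simply as $\epsilon\,2^{m/2}$ with $\epsilon=\pm1$. Passing to absolute values in the identity above then gives
$$|\alpha|=\prod_{i=1}^n\big(2^{m/2}\big)^{c_i}=\big(2^{m/2}\big)^{\text{wt}(\textbf{c})},$$
and since $\alpha$ is real we may replace $|\alpha|^2$ by $\alpha^2$; taking $\log_2$ of $\alpha^2=2^{m\,\text{wt}(\textbf{c})}$ produces $\log_2(\alpha^2)=m\,\text{wt}(\textbf{c})$, which is the claimed formula.

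I expect the only genuinely delicate point to be the correct interpretation of the factor $\sqrt{p^*}^m$ in even characteristic, where the odd-$p$ weakly regular bent formalism degenerates: one must make explicit that $\hat{\chi}_{g^*}(x_i)=\epsilon\,2^{m/2}(-1)^{g(x_i)}$, so that the possibly negative signs of the individual Walsh values are absorbed into $\pm$ and do not affect $|\alpha|$. Once it is certified that $|\hat{\chi}_{g^*}(x_i)|=2^{m/2}$ — which is nothing but the definition of bentness applied to $g^*$ — the remainder is the same one-line logarithmic manipulation as in Proposition~\ref{prop: weight enumerator and Walsh transform in the first generic construction}, and no further obstacle arises.
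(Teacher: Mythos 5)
Your proposal is correct and takes essentially the same approach as the paper: the paper's own proof is a one-line reference back to Proposition~\ref{prop: weight enumerator and Walsh transform in the first generic construction} combined with Proposition~\ref{prop: necessary condition in the second generic construction for weakly regular bent functions}, which is exactly the transposition you carry out. Your additional remarks --- that $\sum_{i=1}^n c_i$ equals $\text{wt}(\textbf{c})$ because $c_i\in\{0,1\}$ over $\mathbb{F}_2$, and that the constant $\frac{p^m}{\epsilon\sqrt{p^*}^m}$ must be read as $\epsilon\,2^{m/2}$ in even characteristic --- merely make explicit steps that the paper leaves implicit.
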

\begin{proof}
    The same as in \ref{prop: weight enumerator and Walsh transform in the first generic construction}, making use of \ref{prop: necessary condition in the second generic construction for weakly regular bent functions}.
\end{proof}

\section*{Conclusion and remarks}
After the explicit characterization of the dual codes in the first two generic constructions, this paper confirms the deep relation between the fundamental concept of the Walsh transform and the linear codes built by the mean of cryptographic functions. In fact, as we saw, the dual codes are strictly connected to the Walsh-Hadamard coefficients of the cryptographic functions involved, which can help to explicitly compute the dual codes or to build codes with fixed dual parameters.
\\\\
It would be interesting to find some sufficient condition for a codeword to be in the dual codes, for example the necessary conditions presented tells us that some of the expressions
$$\sum_{i=1}^q c_ix_i\,\,\text{or}\,\,\sum_{i=1}^qc_if(x_i)$$
are in the kernel of the trace map, hence it suffices to make sure that those expressions are null, for example by saying 
$$\emph{there exists no}\,\,\alpha\in\mathbb{F}_q^*\,\,\emph{such that}\,\,\sum_{i=1}^q c_ix_i=\alpha^p-\alpha,$$
making use of the characterization of the trace map that we can find in \cite{lidl1994introduction}.
\\\\
Some future research paths after this paper could focus on: some other sufficient conditions for a codeword to be in the dual code, some other weight enumerators (always involving the Walsh transform) and applications to the associated Hull code as a particular subcode of the dual code (and specific conditions for which the Hull code happens to be low dimensional).

\section*{Acknowledgements}
The author sincerely thanks Sihem Mesnager for her valuable support and her very beneficial review of this work. 

%
%

\end{document}